\documentclass{article}
\usepackage{graphicx} 
\usepackage{subfig}
\usepackage{amsfonts}
\usepackage{amsmath}
\usepackage{amssymb}
\usepackage{url}
\usepackage{fancyhdr}
\usepackage{indentfirst}
\usepackage{enumerate}
\usepackage{longtable}
\usepackage{varwidth}
\usepackage{float} 
\usepackage{pifont}
\usepackage{tikz}
\usepackage{tikz-cd}
\usepackage{etoolbox}
\newcommand{\circled}[2][]{\tikz[baseline=(char.base)]
    {\node[shape = circle, draw, inner sep = 1pt]
    (char) {\phantom{\ifblank{#1}{#2}{#1}}};%
    \node at (char.center) {\makebox[0pt][c]{#2}};}}
\robustify{\circled}

\usepackage{dsfont}
\usepackage[colorlinks=true,citecolor=blue]{hyperref}
\usepackage{amsthm}
\usepackage{color}
\usepackage{natbib}
\usepackage{comment}
\usepackage{capt-of}
\usepackage{multirow}

\def\d{\mathrm{d}}

\newcommand{\X}{\mathcal {X}}

\newcommand{\E}{\mathbb{E}}

\newcommand{\F}{\mathcal{F}}

\newcommand{\R}{\mathbb{R}}
\newcommand{\N}{\mathbb{N}}
\newcommand{\p}{\mathbb{P}}

\newcommand{\id}{\mathds{1}}

\renewcommand{\(}{\left(}
\renewcommand{\)}{\right)}
\renewcommand{\[}{\left[}
\renewcommand{\]}{\right]}

\renewcommand{\le}{\leqslant}
\renewcommand{\geq}{\geqslant}
\renewcommand{\leq}{\leqslant}
\renewcommand{\epsilon}{\varepsilon}

\renewcommand{\cdots}{\dots}

\theoremstyle{plain}
\newtheorem{theorem}{Theorem}
\newtheorem{corollary}{Corollary}
\newtheorem{lemma}{Lemma}
\newtheorem{proposition}{Proposition}
\theoremstyle{definition}
\newtheorem{definition}{Definition}
\newtheorem{example}{Example}

\newtheorem{assumption}{Assumption}
\theoremstyle{remark}
\newtheorem{remark}{Remark}

\newcommand{\cet}{\begin{center}}
	\newcommand{\ecet}{\end{center}}

\usepackage{refcount}
\setrefcountdefault{1}

\newcounter{saveexample}

\allowdisplaybreaks[4]
\usepackage{setspace}

\title{
Preference-fitting Framework: Elicited Utility Function and PHARA Approximation}
\author{
    Rui Dai\thanks{\footnotesize Department of Mathematical Sciences, Tsinghua University, China. Email: \texttt{dair25@mails.tsinghua.edu.cn}}
    \and
    Zongxia Liang\thanks{\footnotesize Department of Mathematical Sciences, Tsinghua University, China. Email: \texttt{liangzongxia@tsinghua.edu.cn}}
	\and 		
    Yang Liu\thanks{\footnotesize Corresponding author.  
		School of Science and Engineering, The Chinese University of Hong Kong (Shenzhen), 
		China. Email: \texttt{yangliu16@cuhk.edu.cn}}	
}

\begin{document}
\date{}

\maketitle

\begin{abstract}	
	

The utility function plays a core role in portfolio selection, but its specific form is typically hard to elicit. We propose a definition of the elicited utility function and develop a preference-fitting method to obtain it. Basically, we use intuitive probability-wealth pairs to derive a fitted terminal wealth, a fitted portfolio and a fitted utility function, which converge to the optimal terminal wealth, the optimal portfolio and the elicited utility function of the investor, respectively. Specifically, we first establish a bijection between the utility functions and the terminal wealth functions, based on which we construct the fitted terminal wealth, and then obtain the fitted portfolio and the fitted utility function through the martingale-duality method. Next, we develop a piecewise hyperbolic absolute risk aversion (abbr. PHARA) utility approximation method, and verify the convergences in various senses: almost surely, $L^r$, uniform, etc. We demonstrate two applications of our method: obtaining asymptotically explicit portfolios and handling portfolio selection under Value-at-Risk (abbr. VaR) constraints, thereby illustrating its advantages including intuitiveness, analytical tractability, and ability to circumvent the Lagrange multiplier.

\vspace{0.1in}	
			\noindent\textbf{Keywords}: Portfolio selection, Piecewise hyperbolic absolute risk aversion (PHARA) utility, Probability-wealth pair, Martingale-duality method, Convergence\\
	\noindent \textbf{MSC2020 subject classifications}: Primary: 91G10, 91B16; Secondary: 90C59, 65K10\\

\end{abstract}

\section{Introduction}\label{sec:intro}

In the Black-Scholes model of portfolio selection, 
the classic utility optimization problem (\cite*{M1969}) is given by
\begin{equation}\label{eq:prob_intr}
\begin{aligned}
&\sup_{\boldsymbol{\pi}}\E[U(X_T^{\boldsymbol{\pi}})]
\quad \text{subject to} \quad \d X_t^{\boldsymbol{\pi}} = \left( r X_t^{\boldsymbol{\pi}} + \boldsymbol{\pi}_t^\top (\boldsymbol{\mu} - r \mathbf{1}_m)\right) \d t + \boldsymbol{\pi}_t^\top \boldsymbol{\sigma} \d \mathbf{W}_t,\quad X_0^{\boldsymbol{\pi}} = x_0,
\end{aligned}
\end{equation}
where $T>0$ is a fixed terminal time of investment, $\boldsymbol{\pi}$ is an admissible portfolio representing the investment in risk assets, $\{X_t^{\boldsymbol{\pi}}\}_{0 \leq t \leq T}$ is the corresponding wealth process, $U$ is the utility function, and $x_0$ is the initial wealth value; other parameters will be explained in Section \ref{sec:model}. Building upon this foundational work, a multitude of studies (e.g., \cite*{BS2014}, \cite*{CHEN20191119}, \cite*{HK2018}, \cite*{dong2020optimal}) formulate utility optimization problems in various contexts of finance and insurance. A predominant method employed in these studies to derive the optimal portfolio is the martingale-duality method; 
see \cite*{KLS1987} for a comprehensive proof and implementation. 

In these studies, it is the various utility functions that lead to various optimal portfolios. While most of the literature directly provides specific forms of investors' utility functions, this approach has several drawbacks. First, it is widely recognized that investors struggle to precisely identify the form of their utility functions, as the classic utility is a subjective description rather than an objective requirement. Initially, \cite*{DB1954} suggest the utility function is logarithmic through empirical analysis. \cite*{Tversky1992} propose a novel definition of utility and propose a elicitation method in terms of certainty equivalents. However, both they confine the utility functions to special styles such as S-shaped utility and do not consider the investors' perception of specific market settings. 
Second, directly specifying a utility function often yields counterintuitive optimal portfolios. For instance, the optimal portfolio exposes the investor with S-shaped utility introduced in \cite*{kahneman1979d} to a binary outcome: either bankruptcy or wealth levels exceeding the reference point for risk preference switching. This suggests that the investor is overly aggressive.
Third, explicit expressions for the optimal portfolio can only be derived using particular forms of utility functions through the martingale-duality method. For instance, \cite*{KLSX1991} derive the explicit form of the optimal portfolio for investors with a power utility or logarithmic utility. \cite*{LLMV2024} obtain the explicit optimal portfolio for a general piecewise hyperbolic absolute risk aversion (abbr. PHARA) utility family. This family of utility is widely adopted in various studies, including \cite*{C2000} and \cite*{HK2018}. However, few explicit solutions are attainable for more general cases.

Inspired by \cite*{Tversky1992}, who infer utility functions from intuitive wealth indicators via an optimality criterion, we propose a utility-elicitation framework. Formally, consider an admissible input class $\Xi$, a utility function class $\tilde{\mathcal{A}}$, and a criterion $\delta$, which is a functional defined on $\Xi \times \tilde{\mathcal{A}}$. For any $X^* \in \Xi$, the corresponding utility function $V$ is defined as the element in $\tilde{\mathcal{A}}$ satisfying $\delta_{X^*}(V) = \sup_{X\in \Xi} \delta_X(V)$. For instance, given an investor's expected investment return $X^*$, the utility function $V$ renders $X^*$ optimal among all alternative returns $X$ under a certain criterion $\delta$, (such as maximizing expected utility,) reflecting the investor's preference for $X^*$. For more details, see Definition \ref{def:newutility}. We emphasize that, unlike the traditional optimization problem, this definition directly use satisfactory wealth indicators as input, embedding the optimization property within the criterion. Hence, we still call the satisfactory indicator $X^*$ the optimal indicator, e.g. the optimal terminal wealth. This refined definition of utility holds potential for various applications, such as utility calibration and investment schemes comparison; see Section \ref{sec:model}.

In this paper, we apply the above utility-elicited framework to the Black-Scholes model as the primary case study. Assuming the investor's preferred optimal terminal wealth is $X_T^*$ (representing as a random variable that satisfies some specific conditions; see Section \ref{sec:bijection}), we can elicit the utility function $U$ under the expected utility maximization criterion; see Eq. (\ref{eq:delta}) below. 
This procedure can be viewed as: given the preferred terminal wealth $X_T^*$ in Eq. (\ref{eq:prob_intr}), we infer the corresponding $U$.
The rationale for adopting the expected utility maximization criterion is that investors inherently strive to maximize their investment returns, and investors with different risk preferences (encoded in $U$) choose distinct $X_T^*$. 

However, although the preferred optimal terminal wealth $X_T^*$ is more tractable than the utility function, obtaining closed-form expressions for $X_T^*$ remains challenging. Instead, the investors typically focus on returns under specific market scenarios. These targeted scenarios are expressed as probability-wealth pairs. In our paper, the probability-wealth pair $(p,y)$ is defined as the requirement that an investor obtains a return exceeding $y$ with a probability of $p$. This relationship can be mathematically expressed\footnote{Observe that these pairs adhere to the functional form of the preferred optimal terminal wealth distribution, rendering them a natural proxy for fitting the terminal wealth.} as $\mathbb{P}(X_T^* > y) = p$. 
For any fixed $p$, a larger $y$ entails higher risk. Thus, the selection of pairs requires a trade-off that captures the investor's risk preference. This trade-off reflects the investor's perception of the market model and is explicitly characterized in Section \ref{sec:phara} as the budget bounds, which constitute a necessary condition for the existence of the elicited utility.
The probability-wealth pair is intuitive discrete data, easily accepted by investors, and is widely used in behavioral economics such as \cite*{kahneman1979d} and \cite*{RT2003}. 
Roughly speaking, we obtain the pairs $\{(p_i^n, y_i^n)\}_{0\leq i\leq n}$ and provide a fitted terminal wealth $X_T^n$ that satisfies the requirements of these pairs, which can be achieved by a fitted wealth process $\{X_t^n\}_{0\leq t\leq T}$ driven by a fitted portfolio $\{\boldsymbol{\pi}_t^n\}_{0\leq t\leq T}$. Additionally, this fitted terminal wealth corresponds to a elicited utility function $V_n$ through Eq. (\ref{eq:delta}).  The fitted portfolio has an explicit expression because we formulate $V_n$ as a PHARA utility.
We believe $V_n$ and $\{\boldsymbol{\pi}_t^n\}_{0\leq t\leq T}$ possess practical applicability because they satisfy the requirements across various market scenarios that the investor focuses on. 

As a comparison, the traditional solving procedure for optimization problem is given by
\begin{align*}
\text{A given utility function} & \xrightarrow{\text{Martingale-duality method}} \text{Optimal portfolio}.
\end{align*}
The procedure of our preference-fitting method is given by
\begin{align*}
\text{Probability-wealth pairs} & \xrightarrow{\text{Fitting method}} \text{Fitted terminal wealth}\\ &\left(\xrightarrow{\text{Bijection in Section \ref{sec:bijection}}} \text{Fitted utility function}\right)\\ & \xrightarrow{\text{Martingale-duality method}} \text{Fitted portfolio}.
\end{align*}

In contrast to literature discussing the traditional portfolio selection problem (e.g., \cite*{LIN2017137}, \cite*{dong2020optimal}, \cite*{NS2020}, \cite*{LL2024}), our paper avoids the direct specification of utility functions while still obtaining the explicit solution of portfolio to attain the satisfactory wealth outcome. Additionally, the elicited utility derived in this paper provides a mathematically tighter characterization of the investor's preferences regarding risk; see Section \ref{sec:model}. Compared to behavioral economics research (\cite*{Tversky1992}), we propose a more general elicitation framework, which can be adjusted in different models and can be applied to cases where the investors possess market awareness, e.g., acknowledging the Black-Scholes model.

Moreover, we consider the validity of this method in the mathematical sense. When the fitting accuracy increases, $X_T^n$, $\{X_t^n\}_{0\leq t\leq T}$, $\{\boldsymbol{\pi}_t^n\}_{0\leq t\leq T}$, and $V_n$ converge,\footnote{We will clarify the specific meanings of the various senses of convergence in the following Section \ref{sec:phara}.} and the results are called the optimal terminal wealth $X_T^*$, the optimal wealth process $\{X_t^*\}_{0\leq t\leq T}$, the optimal portfolio $\{\boldsymbol{\pi}_t^*\}_{0\leq t\leq T}$ and the elicited utility function $V$ which serves as a ``real" utility function of the investor, respectively. 

The core of our research consists of three parts: solving the elicitation problem, proposing the preference-fitting method with probability-wealth pairs, and studying the convergences of the fitting procedure. Our main contributions and paper organization are as follows. 

First, in Section \ref{sec:model}, we provide a new definition of the elicited utility function, which can be applied to different models. Taking intuitive indicators as input, the elicited utility function reflects the investor's risk preference under a certain criterion.
The Black-Scholes model serves as our illustrative example. In Section \ref{sec:bijection}, we establish a bijection (\ref{corr}) between all the utility functions and all the terminal wealth functions (see Theorem \ref{th:TL}). Based on the bijection and the martingale-duality method, we solve the elicitation problem under the expected utility maximization criterion Eq. (\ref{eq:delta}). The existence problem and the characterization of equivalence for the elicited utility function\footnote{The elicited utility is not unique for an input $X_T^*$ and all these utility functions form a affine equivalence class $[V]_{\text{aff}}$, i.e. $\{V=aU+b|a>0, b\in \R\}$, where $U$ is a representative element. More importantly, all the solutions corresponding to $X_T^*$ share the same absolute risk aversion function and the same relative risk aversion function, which also determine the optimal portfolios in traditional optimization problems. The rationality of eliciting a representative element will be further discussed in Sections \ref{sec:model} and \ref{sec:bijection}.} are completely resolved. 
\begin{align}\label{corr}
\text{Utility functions} & \longleftrightarrow {\text{Right-hand derivatives}} \longleftrightarrow \text{Terminal wealth functions}.
\end{align}

Second, in Section \ref{sec:fittingmethod}, we propose a preference-fitting method and construct the fitted terminal wealth $X_T^n$, the fitted wealth process $\{X_t^n\}_{0\leq t\leq T}$, the fitted portfolio $\{\boldsymbol{\pi}_t^n\}_{0\leq t\leq T}$ and the fitted utility function $V_n$ using a finite number of probability-wealth pairs $\{(p_i^n,y_i^n)\}_{0\leq i\leq n}$. The existence of these fitting elements can be seen as a direct application of the bijection discussed in Section \ref{sec:bijection}.

Third, in Section \ref{sec:phara}, we exhibit the PHARA approximation approach and verify the effectiveness of the fitting method. We obtain the convergences (e.g. almost surely, $L(\Omega)$, $L(\Omega\times [0,T])$, and uniform convergence) of the fitted wealth process and the fitted optimal portfolio. This indicates that the fitting method performs well both at fixed times and globally, and the convergence exhibits some degree of consistency. We show that the convergence rates for the preference-fitting method are $O(1/n)$ when the terminal wealth function is $C^1$. Moreover, as a byproduct, the PHARA approximation provides an asymptotic explicit form of the optimal portfolio in traditional theory; see Section \ref{sec:ex_hyper}.

Fourth, in Section \ref{sec:VaR}, we demonstrate an economic application and explore the impact of Value-at-Risk (VaR) constraints. Compared with optimization problem introduced in \cite{BS2001}, we analyze how these VaR constraints influence the utility function. We find that introducing VaR constraints is similar to incorporating linear components in the fitting method. Furthermore, compared with classic strategies with VaR constraints, our method is intuitive, providing an explicit optimal portfolio and eliminating the need for discussing Lagrange multipliers.


\section{Model Setting}\label{sec:model}
We use the Black-Scholes framework to model the financial market. The market is assumed to contain $m$ risky assets (stocks). 
The price process   $\{S_{i,t}\}_{0 \leq t \leq T}$   of the $i$-th asset is driven by an $m$-dimensional standard Brownian motion $\{\mathbf{W}_t\}_{0\leq t  \leq T} = \{(W_{1,t}, \cdots, W_{m,t})^\top\}_{0\leq t \leq T}$ on a probability space $\( \Omega, \mathcal{F}, \{\mathcal{F}_t\}_{0 \leq t \leq T}, \p \)$, where  $\{\mathcal{F}_t\}_{0\leq t\leq T}$  denotes the augmentation of the filtration generated by the Brownian motion
$\{\mathbf{W}_t\}_{0 \leq s \leq T}$. Specifically,    $\{S_{i,t}\}_{0 \leq t \leq T}$    follows the following stochastic differential equation ( abbr. SDE):
\begin{equation}
\d S_{i, t} = S_{i, t}\left(\mu_{i} \d t + \sum_{j=1}^m \sigma_{i,j} \d W_{j,t}\right), \quad 0 \leq t \leq T,\; i = 1,\cdots, m,
\end{equation}
where  $\boldsymbol{\mu}:=\(\mu_1, \cdots, \mu_m  \)^\top $ is the vector of expected return rates  on the risky assets  and   $\boldsymbol{\sigma} = (\sigma_{i, j})_{1 \leq i, j \leq m}$ represents the volatility matrix of the market.

Additionally, the financial market includes a risk-free asset $\{S_{0,t}\}_{0\leq t\leq T}$ evolving according to $\d S_{0,t} = r S_{0,t} \d t, \; 0 \leq t \leq T$, where $r$ is the risk-free rate.
We assume that the volatility matrix  $\boldsymbol{\sigma} = (\sigma_{i, j})_{1 \leq i, j \leq m}$ is positive definite, ensuring no-arbitrage conditions within the market model. The invertibility of $\boldsymbol{\sigma}$ and its transpose, along with the positive definiteness of $\boldsymbol{\sigma} \boldsymbol{\sigma}^\top$, also imply market completeness. As such, every contingent claim can be perfectly hedged. 
The process of the asset value $X^{\boldsymbol{\pi}}=\{X^{\boldsymbol{\pi}}_t\}_{0\leq t \leq T}$ is uniquely determined by a strategy process (the wealth allocated on the risky asset) $\boldsymbol{\pi}=\{\boldsymbol{\pi}_t\}_{0\leq t \leq T}$ and an initial value $x_0$. The wealth  process $\{X_t^{\boldsymbol{\pi}}\}_{0 \leq t \leq T}$ under the portfolio $\boldsymbol{\pi}$ is given by the  following SDE:
\begin{equation}\label{eq_notct}
\d X_t^{\boldsymbol{\pi}} = \left( r X_t^{\boldsymbol{\pi}} + \boldsymbol{\pi}_t^\top (\boldsymbol{\mu} - r \mathbf{1}_m)\right) \d t + \boldsymbol{\pi}_t^\top \boldsymbol{\sigma} \d \mathbf{W}_t, \  0 \leq t \leq T, \quad X^{\boldsymbol{\pi}}_0 = x_0,
\end{equation}
where  $\mathbf{1}_m := \(1,\cdots, 1 \)^\top \in \R^m$. When we are not concerned with the strategy, we use the simplified notation $X_t$.
\begin{definition}\label{definition1}
A portfolio $\boldsymbol{\pi} = {\{\boldsymbol{\pi}_t\}}_{0\leq t\leq T }$ is called \textit{admissible} if it satisfies the following conditions:
\begin{enumerate}[i.]
 \item it is progressively measurable with respect to the filtration $\mathcal{F} = \{\mathcal{F}_t\}_{0 \leq t \leq T}$; 
 \item it satisfies $\int_0^T ||\boldsymbol{\pi}_t||_2^2 \d t < \infty$ almost surely;
 \item Eq.\eqref{eq_notct} admits a unique strong solution starting from $X^{\boldsymbol{\pi}}_0 = x_0$;
 \item there exists a constant $C \geq 0$ such that $X^{\boldsymbol{\pi}}_t +C e^{rt} \geq 0$ almost surely for any $t\in[0,T]$.
\end{enumerate}
Denote $\Pi$ as the collection of all the admissible portfolios.
\end{definition}
Condition ii ensures the well-definedness of the stochastic integral, while $Ce^{rt}$ in Condition iv is similar to the liquidation boundary in \cite*{HK2018}, indicating tolerance of bankruptcy.

Throughout this paper, the pricing kernel process $\xi=\{ \xi_t\}_{0\leq t\leq T}$ is defined by
\begin{equation}\label{eq_xi}
    \xi_t := \exp\left\{-\(r+\frac{1}{2}||\boldsymbol{\theta}||_2^2\)t - \boldsymbol{\theta}^\top \mathbf{W}_t\right\}, \quad 0\leq t \leq T,
\end{equation}
where $\boldsymbol{\theta} := \boldsymbol{\sigma}^{-1} (\boldsymbol{\mu} - r \mathbf{1}_m)$ represents the market price of risk. 

\begin{remark}\label{remark:supermartingale}
For any admissible $\boldsymbol{\pi} \in \Pi$, the process $\{\xi_t X_t^{\boldsymbol{\pi}}\}_{0\leq t\leq T}$ is a supermartingale. In fact, using Ito's formula and  Eqs.\eqref{eq_notct}-\eqref{eq_xi}, we have
\begin{equation}\label{eq_martingale}
\d (\xi_tX^{\boldsymbol{\pi}}_t) =\xi_t({\boldsymbol{\pi}}_t^\top \boldsymbol{\sigma}-X^{\boldsymbol{\pi}}_t{\boldsymbol{\theta}}^\top )\d \mathbf{W}_t ,  \  t\leq T, \quad X^{\boldsymbol{\pi}}_0 = x_0.
\end{equation}
Hence, $\{\xi_tX^{\boldsymbol{\pi}}_t\}_{0\leq t\leq T}$ is a continuous local martingale. In addition, we can conclude from Eq. \eqref{eq_xi} that $\left\{e^{rt}\xi_t\right\}_{0\leq t\leq T}$ is a martingale. It follows that $\{\xi_t(X^{\boldsymbol{\pi}}_t+Ce^{rt})\}_{0\leq t\leq T}$ is a non-negative continuous local martingale and thus a supermartingale. Finally, we have that $\{\xi_tX^{\boldsymbol{\pi}}_t\}_{0\leq t \leq T}=\{\xi_t(X^{\boldsymbol{\pi}}_t+Ce^{rt})-Ce^{rt}\xi_t\}_{0\leq t \leq T}$ is a supermartingale. 

Therefore, we ensure that the required supermartingale assumption 
in the martingale-duality method holds (\cite*{KLS1987}). On this basis, we can follow all steps of the martingale-duality method.
\end{remark}
\begin{definition}\label{def:utilityfunction}
The utility function $U$ refers to an increasing, concave and continuous function defined on $[0, \infty)$ with $\lim\limits_{x\to \infty}\frac{U(x)}{x}=0$.
\end{definition}
The monotonicity of the utility function implies that greater wealth yields higher satisfaction for the investor. The condition $\lim\limits_{x\to \infty}\frac{U(x)}{x}=0$ is the classic Inada condition in \cite*{KLSX1991}. 

In the traditional portfolio-selection problem, an investor with a given utility function solves an optimization problem to conduct portfolio selection: 
\begin{equation}\label{prob-hat}
	\begin{aligned}
		& \sup_{\boldsymbol{\pi} \in \Pi} \E[ U(X^{\boldsymbol{\pi}}_T)],
	\end{aligned}
\end{equation}
where $U$ is her utility function. Various literature have derived the optimal portfolio by the martingale-duality method for different utility functions.
\begin{assumption}\label{ass:L2}
We assume $\sup_{\boldsymbol{\pi} \in \Pi} \E[ U(X^{\boldsymbol{\pi}}_T)]<\infty$ and $\E[(\xi_TI(\nu\xi_T))^2]<\infty$ for any $\nu>0$
all through this paper, where $I=(U_+')^{-1}$ is the inverse of the right-derivative of $U$.
\end{assumption}
The right-derivative of $U$ exists due to the concavity. These assumptions are quite weak, which guarantee the conditions of the martingale representation theorem in the procedure of the martingale-duality method. Utility functions adopted in most studies such as \cite*{HK2018} and \cite*{LIN2017137} satisfy this condition.
The conclusion below shows that any optimal solution coincides almost surely with the solution obtained by the martingale-duality method.
\begin{proposition}\label{prop:unique}
Under Assumption \ref{ass:L2}, for utility function $U$ introduced in Definition \ref{def:utilityfunction}, the optimization problem (\ref{eq:prob_intr}) admits a unique optimal portfolio $\boldsymbol{\pi}^*$, and a unique optimal terminal wealth $X_T^*$ in the sense of almost surely. That is, for optimal terminal wealth $X_T^1$ and $X_T^2$, we have $\p(X_T^1=X_T^2)=1$.
\end{proposition}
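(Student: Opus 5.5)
Existence comes from the martingale-duality construction, and uniqueness from a strict-concavity argument.

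\emph{Existence.} I will follow \cite*{KLS1987}. Let $I=(U_+')^{-1}$, understood as a generalized inverse, since $U_+'$ is non-increasing and right-continuous. Set $\mathcal{X}(\nu):=\E[\xi_T I(\nu\xi_T)]$. Assumption \ref{ass:L2} with Cauchy--Schwarz makes $\mathcal{X}(\nu)$ finite. It is non-increasing in $\nu$; using $\lim_{x\to\infty}U(x)/x=0$, i.e.\ $U_+'(x)\to 0$, together with monotone convergence, it tends to $\infty$ as $\nu\downarrow 0$. Choose $\nu^*$ with $\mathcal{X}(\nu^*)=x_0$. Where $\mathcal{X}$ jumps, because $U_+'$ has flat pieces so that $I$ is multivalued, I pick a measurable selection of $I(\nu^*\xi_T)$ within the interval of maximizers. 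This is possible because $\xi_T$ has a continuous law, so only a null set of atoms is affected. Define $X_T^*:=I(\nu^*\xi_T)\geq 0$. The martingale $M_t=\E[\xi_T X_T^*\mid\F_t]$ is square integrable by Assumption \ref{ass:L2}, so the martingale representation theorem together with invertibility of $\boldsymbol{\sigma}$ yields $\boldsymbol{\pi}^*$ with $\xi_tX_t^*=M_t$ via Eq.~(\ref{eq_martingale}). Admissibility, condition iv with $C=0$, holds since $X_t^*\geq 0$.

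\emph{Optimality.} For any $\boldsymbol{\pi}\in\Pi$, concavity gives the pointwise Fenchel inequality $U(X_T^{\boldsymbol{\pi}})\le U(X_T^*)+\nu^*\xi_T(X_T^{\boldsymbol{\pi}}-X_T^*)$. This uses $U_+'(X_T^*)\le\nu^*\xi_T$ when $X_T^*=0$ and equality otherwise. Take expectations, and apply Remark \ref{remark:supermartingale}, which gives $\E[\xi_TX_T^{\boldsymbol{\pi}}]\le x_0=\E[\xi_TX_T^*]$. This yields $\E U(X_T^{\boldsymbol{\pi}})\le\E U(X_T^*)$. Finiteness follows from Assumption \ref{ass:L2}.

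\emph{Uniqueness of terminal wealth.} Let $X_T^1,X_T^2$ both be optimal and set $\bar X=\tfrac12(X_T^1+X_T^2)$. Since $\Pi$ is convex and the wealth equation is linear, $\bar X$ is attainable. By concavity it is also optimal, so $U(\bar X)=\tfrac12(U(X_T^1)+U(X_T^2))$ a.s. Hence on $\{X_T^1\neq X_T^2\}$ the function $U$ is affine on the segment between them.

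This is the main obstacle, because $U$ is only concave, not strictly. I will instead compare each $X_T^i$ with the dual candidate $X_T^*$. The inequality chain above must be tight for an optimal $X_T^i$. Tightness forces $\E[\xi_TX_T^i]=x_0$, and it forces $U(X_T^i)=U(X_T^*)+\nu^*\xi_T(X_T^i-X_T^*)$ a.s. So $X_T^i$ lies in the set where the supporting line of slope $\nu^*\xi_T$ touches $U$, namely $[I(\nu^*\xi_T+),I(\nu^*\xi_T-)]$. Since $U_+'$ is monotone, its flat pieces are countably many with values $\{c_k\}$. The event $\{\nu^*\xi_T\in\{c_k\}\}$ is null because $\xi_T$ is lognormal. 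Off that event the contact set is a single point, giving $X_T^i=X_T^*$ a.s. for $i=1,2$. Jumps of $U_+'$ produce intervals of slopes with a unique contact point, so they cause no trouble.

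\emph{Uniqueness of the portfolio.} With $X_T^1=X_T^2$ a.s., both $\xi_tX_t^i$ are supermartingales with the same terminal value and the same expectation $x_0$, hence true martingales equal to $\E[\xi_TX_T^*\mid\F_t]$. Uniqueness of the integrand in the martingale representation then gives $\boldsymbol{\pi}^1=\boldsymbol{\pi}^2$, $\d t\otimes\d\p$-a.e., again using invertibility of $\boldsymbol{\sigma}$.
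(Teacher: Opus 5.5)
Your proof is correct, but for uniqueness you take a different route from the paper. The paper argues on the primal side. First, a rearrangement shows that any optimal terminal wealth puts zero mass in the open interior of every linear piece $(a,b)$ of $U$: move the wealth on $\{X_T^*\in(a,b)\}$ to $b$ where $\xi_T\le z$ and to $a$ where $\xi_T>z$, with $z$ chosen by the intermediate value theorem so the price is unchanged, and this strictly raises expected utility. The paper then runs a midpoint strict-concavity argument on $X_T^1,X_T^2$, and cites the literature for existence. You instead use the dual candidate $I(\nu^*\xi_T)$ as a certificate. Tightness of $\E U(X_T^i)\le\E U(X_T^*)+\nu^*\E[\xi_T(X_T^i-X_T^*)]\le\E U(X_T^*)$ forces every optimizer to exhaust the budget and to maximize $x\mapsto U(x)-\nu^*\xi_Tx$ pointwise a.s. Countability of the slopes of linear pieces, together with the atomless law of $\xi_T$, then makes that maximizer a.s.\ unique. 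Your route buys three things. It identifies all optimizers with $I(\nu^*\xi_T)$ in one stroke. It proves uniqueness of the portfolio explicitly (supermartingale with constant expectation, then uniqueness of the integrand), which the paper only asserts. And it is immune to a case the paper's last step passes over: if $X_T^1=a$ and $X_T^2=b$ on a non-null set, with $a,b$ the endpoints of one linear piece, strict midpoint concavity fails there. The paper would need the extra remark that the optimal midpoint then lies in $(a,b)$, contradicting its first step. The paper's route, in exchange, uses no multiplier in the uniqueness step and yields, by an elementary perturbation, a structural fact about arbitrary optimizers. There are a few minor slips in your existence step. First, $\nu\mapsto\E[\xi_TI(\nu\xi_T)]$ never jumps, by dominated convergence, since $I$ has countably many discontinuities and $\xi_T$ is atomless; so no selection is needed. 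Second, its divergence as $\nu\downarrow0$ uses $U'_+>0$, whereas $U'_+\to0$ is what makes $I$ finite. Third, you also need its limit $0$ as $\nu\to\infty$ to solve for $\nu^*$. Finally, for $X_T^*>0$ the first-order condition is $U'_+(X_T^*)\le\nu^*\xi_T\le U'_-(X_T^*)$, not equality.
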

\begin{proof}
Using the martingale-duality method, we obtain the existence of $\boldsymbol{\pi}^*$ (\cite*{KLSX1991}). Then we prove the uniqueness. Assume $U$ is not constant without loss of generality. Let $X^*_T:
\Omega\to [0,\infty)$ be the optimal terminal wealth. For any $a,b\in [0,\infty)$ such that $U$ is linear (i.e. $U=k_1x+k_2$ for $k_1>0$ and $k_2\in \R$) on $(a,b)$, we prove that $\p(X_T^*\in (a,b))=0$. Assume $\p(X_T^*\in (a,b))>0$. Let $\Omega_1=\{\omega|X_T^*(\omega)\in (a,b)\}$, then $\p(\Omega_1)>0$. Using the intermediate value theorem we deduce that there exists $z\in \R^+$ such that $$a\cdot\E[\xi_T\id_{\Omega_1\cap\{\xi_T> z\}}]+ b\cdot\E[\xi_T\id_{\Omega_1\cap\{\xi_T\leq z\}}]=\E[\xi_TX_T^*\id_{\Omega_1}].$$
Define 
\begin{equation*}
\overline{X}_T(\omega)=\begin{cases}
a, &\omega \in  \Omega_1\cap\{\omega|\xi_T(\omega)> z\},\\
b, &\omega \in  \Omega_1\cap \{\omega|\xi_T(\omega)\leq z\},\\
X_T^*(\omega), &\omega \in \Omega \setminus \Omega_1.
\end{cases}
\end{equation*}
We have $\E[\xi_TX_T^*]=\E[\xi_T\overline{X}_T]$ and thus $\overline{X}_T$ is attainable through an admissible portfolio.
Note 
$\E[(U(\overline{X}_T)-U(X_T^*))(\xi_T-z)]<0$ holds as $U(\overline{X}_T(\omega))\geq U(X_T^*(\omega))$ when $\omega \in\Omega_1\cap\{\xi_T\geq z\}$, $U(\overline{X}_T(\omega))\leq  U(X_T^*(\omega))$ when $\omega\in \Omega_1 \cap \{\xi_T> z\}$, $U(\overline{X}_T(\omega))=U(X_T^*(\omega))$ when $\omega\in \Omega\setminus \Omega_1$, and $\p(\Omega_1\cap\{\overline{X}_T-X_T^*>0\})>0$. Hence, we have $z\cdot(\E[U(\overline{X}_T)]-\E[U(X_T^*)])>k_1\cdot\E[\xi_T(\overline{X}_T-X_T^*)\id_{\Omega_1}]=0$, which contradicts to the optimality of $X_T^*$. Hence, we have $\p(X_T^*\in(a,b))=0$.

For optimal terminal wealth $X_T^1$ and $X_T^2$, we have $\E[U(X_T^1)]=\E[U(X_T^2)]$. Because of the concavity of $U$, we have
$$\E\Big[U\Big(\frac{X_T^1+X_T^2}{2}\Big)\Big]\geq \frac{\E[U(X_T^1)]+\E[U(X_T^2)]}{2}=\E[U(X_T^1)].$$
As $X_T^1$ is optimal, we have the first inequality becomes a equality.
Because $X_T^1$ and $X_T^2$ lies in the strictly concave region of $U$ (with zero probability of lying in the linear part), and $ U(\frac{X_T^1+X_T^2}{2})>\frac{U(X_T^1)+U(X_T^2)}{2}$ if and only if $X_T^1\ne X_T^2$, we conclude $X_T^1=X_T^2$ a.s.. Therefore, we complete the proof.
\end{proof}
However, most studies directly assume a specific functional form for the utility function to solve the optimization problems, which, as discussed in Section \ref{sec:intro}, is not entirely reasonable. 
\cite*{Tversky1992} establish intuitive satisfactory wealth outcomes (certainty equivalence therein) and then infer the utility function via a specific preference criterion (minimal regression error therein), inspiring us to give the following definition.
\begin{definition}[Elicited utility function]\label{def:newutility}
Denoted by $\Xi$ the set of all the wealth outcomes $X:\omega\mapsto [0,\infty)$, by $\tilde{\mathcal{A}}$ the set of utility functions, and by $\delta:\tilde{\mathcal{A}}\times \Xi\to \R$ the elicitation criterion functional. Assuming the optimal terminal wealth of the investor is given by $X^*\in \Xi$, we call $V\in\tilde{\mathcal{A}}$ the elicited utility function of this investor under the criterion $\delta$ if it satisfies
$$\delta_{X^*}(V)=\sup_{X\in \Xi}\delta_{X}(V).$$
\end{definition}
Intuitively, the utility function $V$ renders the outcome $X^*$
optimal among all alternatives $X$, thereby reflecting the investor's preference for $X^*$. The utility is dependent on the appointed $\delta$ and the input $X^*$. Because the criterion embodies the optimality, we still call the preferred outcome $X^*$ the optimal wealth when there is no ambiguity with traditional optimization problems. Within specific models, this elicited utility function admits several applications. For example, in personalized asset allocation, it replaces the rough assumption of specific utility such as uniform constant relative risk aversion (abbr. RRA) utility (\cite*{KS1991}) in traditional models with a quantitative measure of actual investor preferences, enabling utility calibration (e.g., see Remark \ref{remark:beha}). It also permits dynamic adjustments when the environment or investor psychology changes. Moreover, we can employ the elicited utility function to simulate the investor's response to previously untried strategies $\boldsymbol{\pi}$, using the magnitude of $\E[U(X_T^{\boldsymbol{\pi}})]/\E[U(X_T^*)]$ as the evaluation standard. This is a capability particularly relevant when the optimal strategy becomes infeasible or when investors should select diverse investment strategies provided by different firms.

We adopt this elicitation method to Black-Scholes model. We believe the investors typically focus on returns in specific scenarios and are quite sensitive to probability-wealth pairs instead of the utility functions based on the history data, the investment object, and the inner preference. This means they know the satisfactory probability $p$ for obtaining a wealth exceeding a specific level $y$ while they also understand the trade-off: increasing $p$ for a given $y$ sacrifices the opportunity of attaining other wealth levels\footnote{Equivalently, the pair represents the wealth threshold $y$ above which the investor is satisfied for a given probability $p$.} (we will exhibit this mathematically through budget constraints in Section \ref{sec:fittingmethod}). Consequently, investors with different risk preferences prescribe distinct sets of $(p, y)$ pairs. Obviously, the pair $(p, y)$ and the investor's preferred optimal terminal wealth $X_T^*$ satisfy the relation $\mathbb{P}(X_T^* > y) = p$.
The pairs are more intuitive than the distribution of $X_T^*$ because they are discrete data rather than continuous functional expression. Hence, with these pairs, we can fit the optimal terminal wealth. We merely require investors to specify a finite number of pairs to achieve favorable practical outcomes.

Next, we choose the elicitation criterion for the input (fitted) $X_T^*$. When investors weigh investment risks while seeking the highest possible returns inherently, their utility function $U$ is expected to satisfy
\begin{equation}\label{eq:delta}
\E[U(X_T^*)]=\sup_{X_T\in \Xi}\E[U(X_T)],
\end{equation}
where $X_T^*$ is the preferred terminal wealth of the investor which is optimal under the criterion, and $X_T$ represents other alternative terminal wealth distribution. This means that the utility function renders $X_T^*$ the most satisfactory outcome for the investor and thus captures risk preferences. Under this elicitation criterion, we can compare the utility function elicited for investors’ preferred optimal wealth with those directly specified in existing studies to conduct utility calibration.

Recalling Definition \ref{def:newutility}, in the remainder of this paper, $\tilde{\mathcal{A}}$ consists of all the utility functions and $\delta$ is given by Eq. (\ref{eq:delta}). The class $\Xi$ consist of all the terminal wealths that can be attained by an admissible portfolio (we will specify this in the next section). 
Hence, given the $(p,y)$ pairs, what we should do is to fit the optimal wealth $X_T^*$ using the fitted terminal wealth $X_T^n$ and then solve for the elicited function $V_n$ satisfying $\E[V_n(X_T^n)]=\sup_{X_T\in \Xi}\E[V_n(X_T)]$. Moreover, we are particularly concerned with the convergence of the aforementioned method and we obtain an explicit form of the portfolio via the martingale-duality method. In Section \ref{sec:bijection}, we infer the utility function from a given optimal terminal wealth $X_T^*\in \Xi$ and the remaining mentioned content is completed in Sections \ref{sec:fittingmethod} and \ref{sec:phara}.

Problem (\ref{eq:delta}) is solvable if and only if the input optimal terminal wealth $X_T^*$ is a solution of Problem (\ref{eq:prob_intr}) for some specific utility function $U$. Hence, we introduce the following definition of utility-based investor.
\begin{definition}\label{def:utilitybased}
We call an investor a utility-based investor if her terminal wealth can be represented by the optimal solution $X_T^{\boldsymbol{\pi}^*}$ of Problem (\ref{eq:prob_intr}) for a specific utility function $U$.
\end{definition}
For $X_T^*$, the elicited utility function exists if and only if $X_T^*$ is the wealth of a utility-based investor.
An important feature of the terminal wealth for a utility-based investor is decreasing with respect to $\xi_T$, where $\xi_T$ is defined by Eq. (\ref{eq_xi}); see \cite*{LLMV2024}. In fact, under standard conditions in complete market, the unique solution of Problem (\ref{eq:prob_intr}) exists and satisfies $X_T^*=I(\nu^*\xi_T)$, where $I = (U')^{-1}$ a.e. and $\nu^*$ is a unique Lagrange multiplier satisfying $\E\left[ \xi_T I(\nu^* \xi_T) \right] = x_0$. Hence, the optimal wealth is decreasing\footnote{More precisely, according to Proposition \ref{prop:unique}, any two optimal terminal wealths coincide almost surely, so the impact of any exceptional set is negligible. Consequently, we consider only the one that satisfies monotonicity.} with respect to $\xi_T$ due to the monotonicity of $I$. In Section \ref{sec:bijection}, we will demonstrate that if a given terminal wealth $X_T$ is decreasing with respect to $\xi_T$, and satisfies some other conditions, then the investor is qualified as utility-based.
Therefore, for Problem (\ref{eq:delta}), if an input $X_T^*$ satisfies these conditions, we can assert the existence of a corresponding elicited utility. We will explain the meaning of these conditions in Remark \ref{remark:utilitybased}. Regarding uniqueness, we make the following statement.

We point out that for a given $X_T^*$, the elicited utility function $U$ serving as a solution of Problem (\ref{eq:delta}) is not unique.  However, all such utility functions share the same absolute risk aversion (abbr.ARA) function and the same relative risk aversion (abbr. RRA) function almost everywhere (see the next section), and thus, we consider them equivalent\footnote{Our method essentially elicits the investor's ARA and RRA function, which have a one-to-one correspondence with the investor's optimal portfolios. These functions are precisely the key to characterizing intrinsic risk preference.}. Notably, for Problem (\ref{eq:prob_intr}), it is the ARA and RRA function that determine the investor's optimal portfolio, and different utility functions may also lead to the same optimal terminal wealth. Furthermore, two utility functions $U_1$ and $U_2$ are equivalent if and only if there exist $a>0$ and $b\in \R$ such that $U_2=aU_1+b$. Therefore, the utility function we elicit is a representative element and we can easily check the equivalence through the relationship $U_2=aU_1+b$.
\begin{remark}
In the work of \cite*{LLZ2024}, the definition of utility functions is extended to increasing, right-continuous functions (the continuity and concavity are not necessary), including the S-shaped utility presented in \cite*{Tversky1992}, and it is shown that the optimization problem (\ref{eq:prob_intr}) remains solvable. Hence, we can accordingly extend Definition \ref{def:utilityfunction} to this setting. In this case, utility functions $U_1$ and $U_2$ are equivalent if and only if their concave envelopes $U_1^{**}$ and $U_2^{**}$ correspond to the same ARA function almost everywhere, or equivalently, there exists $a>0$ and $b\in \R$ such that $U_2^{**}=aU_1^{**}+b$. Here, the concave envelope is defined as 
\begin{align*}
U^{**}(x):=\inf\{h: \mathcal{D} \to \R \; | \text{ $h$ is a concave function and $h\geq U$}\},
\end{align*}
which is a function that satisfies Definition \ref{def:utilityfunction}.
This is because $\sup\limits_{\boldsymbol{\pi}} \E[U(X^{\boldsymbol{\pi}}_T)]=\sup\limits_{\boldsymbol{\pi}} \E[U^{**}(X^{\boldsymbol{\pi}}_T)]$ holds in traditional portfolio selection problem (\ref{eq:prob_intr}); see e.g., \cite*{LLMV2024}.
\end{remark}
\section{Bijection between Utility Functions and Terminal Wealth Functions}\label{sec:bijection}
The primary objective of this section is to solve Problem (\ref{eq:delta}) for a given $X_T^*$. For Problem (\ref{eq:prob_intr}), giving a utility function $U$, we can obtain the corresponding optimal terminal wealth $X_T^*$ using the martingale-duality method. In our framework, for Problem (\ref{eq:delta}), we seek the inverse of the above solution procedure. To achieve this, we first present a bijection to characterize the features of the terminal wealth for utility-based investors. The primary significance of this bijection is determining whether an investor is utility-based on her terminal wealth and demonstrating how to obtain the utility function $U$ form $X_T^*$. 

In the following definition, we define the class $\mathcal{A}$ to represent all utility functions and define the class $\mathcal{C}$  to represent all terminal wealth functions. The terminal wealth function is the generalized inverse of the right-hand derivative of the utility function. We will further illustrate these definitions in Remark \ref{remark:terminalwealth}. To simplify the discussion, we assume that the utility functions of the investor are strictly increasing without loss of generality. For the function $V$ defined on $(0,\infty)$ with $\lim\limits_{x\to 0^+}V(x)=-\infty$, we extend $V(0)=-\infty$ and say that $V$ has the domain $[0,\infty)$ and is right continuous at $x=0$. We establish a bijection $T_L$ from $\mathcal{A}$ to $\mathcal{C}$, providing the specific correspondence of the mapping, which is presented in Theorem \ref{th:TL}.
\begin{definition}\label{def:fuc-class}
We define three types of function classes as follows: \\
\textbf{Function class $\mathcal{A}$ (Utility functions)}: the set of all functions $V$ with domain $[0, \infty)$, which are concave, right continuous at 0, strictly increasing, and satisfy $V(1) = 0$ and $\lim\limits_{x\to\infty}V'_+(x)=0$, where $V'_+$ is the right-hand derivative of $V$; \\
\textbf{Function class $\mathcal{B}$ (Right-hand derivatives)}: the set of all functions $f_r$ with domain $[0,\infty)$, which are always greater than $0$, decreasing, right continuous, and satisfy $\lim\limits_{x \to \infty} f_r(x)=0$;\\
\textbf{Function class $\mathcal{C}$ (Terminal wealth functions)}: the set of all functions $f_l$ with domain $[0,\infty)$, which are nonnegative, decreasing, left continuous, and satisfy $\lim\limits_{x \to 0^+} f_l(x)=\infty$. 
\end{definition}
\begin{remark}\label{class}
We can directly prove that $V$ is continuous in its domain due to the properties of concavity. Moreover, the condition $V(1)=0$ is reasonable because adding a constant to the utility functions does not affect the form of the optimal wealth and the optimal portfolio. 
The requirement $\lim\limits_{x\to\infty}V'_+(x)=0$ is meaningful as the right-hand derivative (can take values of $\infty$ at $x=0$) of a concave function exists in the interval $[0, \infty)$. 
For the definition of $\mathcal{B}$, in the cases where $\lim\limits_{x\to0^+}f_r(x) =\infty$, we extend $f_r(0)=\infty$. For the definition of $\mathcal{C}$, we extend $f_l(0)=\infty$ as $\lim\limits_{x\to0^+}f_l(x) =\infty$ always happens. The class $\mathcal{B}$ can be seen as the set of right-hand derivative functions formed from the set $\mathcal{A}$. 
\end{remark}


We give the specific mapping relationships in the following definition.
\begin{definition}\label{def:bij}
Define an operator $D_+:\mathcal{A} \to \mathcal{B}$ as $V \mapsto V'_+$. Define an operator $T_0:\mathcal{B} \to\mathcal{A}$ as $f_r \mapsto V_r$, where $V_r(x)=\int_1^{x} f_r(t) \d t$. The operator $T_1:\mathcal{B} \to \mathcal{C}$ is defined as follows: For each function $f_r \in \mathcal{B}$, $T_1(f_r)$ is a function defined on $[0,\infty)$ satisfying $T_1(f_r;y)=\sup\limits_{x\geq 0}\{x:f_r(x)\geq y\}$. The operator  $T_2:\mathcal{C} \to \mathcal{B}$ is defined as follows: For each function $f_l \in \mathcal{C}$, $T_2(f_l)$ is a function defined on $[0,\infty)$ satisfying $T_2(f_l;y)=\sup\limits_{x\geq 0}\{x:f_l(x)> y\}$. The operator $T_L:\mathcal{A}\to \mathcal{C}$ is defined as follows: For each $V\in \mathcal{A}$, $T_L$ is a function defined on $[0,\infty)$ satisfying $T_L(V;y) = \sup\{{\mathrm{argmax}}_{x\geq 0} (V(x)-xy)\} $. By convention, we designate $\sup\{\varnothing\}=0$. The mapping relationships are illustrated as follows:
\begin{figure}[ht]
    \centering
    \begin{minipage}{0.45\textwidth}
        \centering
        \begin{tikzcd}[row sep=2cm, column sep=2cm]
        \mathcal{A} \arrow[r, "D_+"{above}]    
        & \mathcal{B} \arrow[r, "T_1"{above}]  
        & \mathcal{C} \arrow[ll, "T_L^{-1}"{below}, bend left=30]  
    \end{tikzcd}
    \end{minipage}%
    \hfill
    \begin{minipage}{0.45\textwidth}
        \centering
        \begin{tikzcd}[row sep=2cm, column sep=2cm]
        \mathcal{C} \arrow[r, "T_2"{above}]    
        & \mathcal{B} \arrow[r, "T_0"{above}]  
        & \mathcal{A} \arrow[ll, "T_L"{below}, bend left=30]  
    \end{tikzcd}
    \end{minipage}
\end{figure}
\end{definition}
\begin{remark}\label{remark:terminalwealth}
Based on the martingale-duality method, we have the optimal terminal wealth $X_T^*$ of Problem (\ref{eq:prob_intr}) satisfying $X_T^*(\omega) = T_{L}(U;\nu^*\xi_T(\omega))$, where the corresponding utility function is $U$, and $\nu^*$ is the Lagrange multiplier satisfying the budget constraint
\begin{equation*}
\E\left[\xi_TT_L(U;\nu^*\xi_T)\right]=\E[\xi_Tf_l(\nu^*\xi_T)]=x_0,
\end{equation*} 
which guarantees the condition of the martingale representation theorem; see, e.g.,  \cite*{LL2024}, \cite*{LIN2017137}, \cite*{HK2018} and reference therein. The class $\mathcal{A}$ extracts the common properties of the utility functions while the class $\mathcal{C}$ extracts those of all the functions with the form $f_l=T_L(U)$.

We emphasize that the term ``terminal wealth function" differs from ``terminal wealth". In fact, for a utility-based investor, the optimal terminal wealth $X_T^*$ can be derived by the martingale-duality method using the terminal wealth function $f_l$, the pricing kernel $\xi_T$ and the Lagrange multiplier $\nu^*$, and specifically, $X_T^*=f_l(\nu^*\xi_T)$.
\end{remark}
\begin{lemma}\label{inverse}
The mappings $D_+$ and $T_1$ are bijections from $\mathcal{A}$ to $\mathcal{B}$ and from $\mathcal{B}$ to $\mathcal{C}$, respectively. Moreover, $D_+^{-1}=T_0$ and $T_1^{-1} = T_2$.
\end{lemma}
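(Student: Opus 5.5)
We prove the two claims separately. In each case we check that the proposed inverse maps into the right class, and that both compositions are the identity.

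\textbf{Step 1: $D_+$ and $T_0$.} For $V\in\mathcal{A}$, concavity gives that $V'_+$ exists on $(0,\infty)$, with value in $(0,\infty]$ at $0$. It is decreasing and right continuous, since the right derivative of a concave function is right continuous. Strict monotonicity together with concavity forces $V'_+>0$: if $V'_+(x_1)=0$, then $V$ would be nonincreasing on $[x_1,\infty)$. Finally $\lim_{x\to\infty}V'_+(x)=0$ is part of the definition, so $D_+V\in\mathcal{B}$.

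Conversely, for $f_r\in\mathcal{B}$ we set $V_r=\int_1^x f_r$. Because $f_r$ is decreasing, $V_r$ is concave. Because $f_r>0$, it is strictly increasing, and $V_r(1)=0$. Right continuity at $0$ follows from monotone convergence, allowing the limit $-\infty$. Right continuity of $f_r$ gives $(V_r)'_+=f_r$ by the fundamental theorem of calculus for right derivatives: $\frac1h\int_x^{x+h}f_r\to f_r(x^+)=f_r(x)$. Hence $D_+T_0=\mathrm{id}$.

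For the other composition, a concave $V$ is absolutely continuous on compact subintervals of $(0,\infty)$ with $V'=V'_+$ a.e. Thus $V(x)=V(1)+\int_1^x V'_+=T_0D_+V(x)$ for $x>0$, and the value at $0$ agrees by right continuity. So $T_0D_+=\mathrm{id}$.

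\textbf{Step 2: $T_1$ maps $\mathcal{B}$ into $\mathcal{C}$, and $T_2$ maps $\mathcal{C}$ into $\mathcal{B}$.} For $f_r\in\mathcal{B}$, the set $S_y=\{x:f_r(x)\geq y\}$ is an interval $[0,a_y)$ or $[0,a_y]$, shrinking as $y$ increases, so $T_1(f_r;y)=a_y$ is decreasing and nonnegative. As $y\to0^+$, any fixed $x$ eventually lies in $S_y$ because $f_r(x)>0$; hence $T_1(f_r;y)\to\infty$. Left continuity holds because $S_{y}=\bigcap_{y'<y}S_{y'}$, so $\sup S_y=\lim_{y'\uparrow y}\sup S_{y'}$. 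This uses the monotone structure and the fact that $\{f_r\ge y\}$ is an initial interval; the only case needing care is $a_{y'}\to\infty$, which forces $S_y$ unbounded.

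Symmetrically, for $f_l\in\mathcal{C}$ the set $R_y=\{x:f_l(x)>y\}$ is an initial interval, because $f_l$ is decreasing and $f_l(0^+)=\infty$. It is nonempty, since it contains small $x$, so $T_2(f_l;y)>0$. $T_2(f_l)$ is decreasing in $y$. Right continuity follows from $R_y=\bigcup_{y'>y}R_{y'}$. We also get $T_2(f_l;y)\to0$ as $y\to\infty$: $f_l$ is finite on $(0,\infty)$, so for each $x>0$ we have $f_l(x)\le y$ eventually, and then $R_y\subset[0,x)$.

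\textbf{Step 3: the compositions $T_2T_1$ and $T_1T_2$ are the identity.} This is the main obstacle, since it is where left versus right continuity and the strict versus weak inequalities must match exactly. The plan is to use the two Galois-type equivalences
\[
T_1(f_r;y)> x \iff f_r(x')\ge y \text{ for some } x'>x \iff f_r(x^{+})\ge y\ \text{(suitably)},
\]
and, for $g=T_1(f_r)$, the equivalence $g(y)>x \iff y< f_r(x)$, which follows from right continuity of $f_r$. With the latter,
\[
T_2(g;x)=\sup\{y:g(y)>x\}=\sup\{y:y<f_r(x)\}=f_r(x),
\]
so $T_2T_1=\mathrm{id}$. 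In the same way, for $h=T_2(f_l)$ one shows $h(x)\ge y\iff f_l(y)\ge x$ using left continuity of $f_l$, and then $T_1(h;y)=\sup\{x:x\le f_l(y)\}=f_l(y)$. Boundary values $0$ and $\infty$, and the convention $\sup\varnothing=0$, will be checked separately.

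Steps 1–3 together show that $D_+$ and $T_1$ are bijections with $D_+^{-1}=T_0$ and $T_1^{-1}=T_2$.
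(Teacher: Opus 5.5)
Your decomposition is the paper's own. Step 1 reproduces its treatment of $D_+$ and $T_0$, and Step 3 repackages its two-sided contradiction argument for $T_2T_1=\mathrm{id}$.

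\textbf{The gap: Step 2, where you claim $T_2(\mathcal{C})\subseteq\mathcal{B}$.} Elements of $\mathcal{B}$ are real-valued on $(0,\infty)$; the value $\infty$ is permitted only at $0$. So you must show $T_2(f_l;y)<\infty$ for every $y>0$. You never check this, and for $\mathcal{C}$ as literally defined it fails.

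Take $f_l(y)=1+1/y$. It is nonnegative, decreasing, continuous and blows up at $0^+$, so $f_l\in\mathcal{C}$. Yet $f_l(x)>1\geq y$ for all $x$, hence $T_2(f_l;y)=\infty$ for all $y\in[0,1]$.

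Dually, $T_1$ is not onto $\mathcal{C}$. Every $f_r\in\mathcal{B}$ is finite at each $\epsilon>0$, so $T_1(f_r;y)\leq\epsilon$ once $y>f_r(\epsilon)$. Thus every element of $T_1(\mathcal{B})$ tends to $0$ at infinity, which $1+1/y$ does not. (It is the terminal wealth function of $\ln(x-1)$, which is not in $\mathcal{A}$.)

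The lemma only becomes true after adding $\lim_{y\to\infty}f_l(y)=0$ to the definition of $\mathcal{C}$. Then finiteness follows from your own limit argument with the roles swapped: pick $x$ with $f_l(x)\leq y$, so $\{x':f_l(x')>y\}\subseteq[0,x)$. The observation above shows that $T_1$ lands in this smaller class. The paper's ``we can directly verify'' skips exactly this point, so the defect is inherited from the source. It still has to be repaired for your argument to be valid.

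\textbf{A smaller repair: Step 3.} Both ``Galois'' equivalences are false on flat stretches.
\begin{itemize}
\item If $f_r\equiv y$ on $[x,x_1]$ with $x_1>x$, then $g(y)\geq x_1>x$, although $y<f_r(x)$ fails.
\item If $f_l\equiv x$ on $(y_0,y]$ with $y_0<y$, then $f_l(y)\geq x$, although $h(x)\leq y_0<y$.
\end{itemize}
What does hold are two chains of inclusions:
\begin{itemize}
\item $\{y:y<f_r(x)\}\subseteq\{y:g(y)>x\}\subseteq\{y:y\leq f_r(x)\}$, with the first inclusion by right continuity and the second by monotonicity;
\item $\{x:x<f_l(y)\}\subseteq\{x:h(x)\geq y\}\subseteq\{x:x\leq f_l(y)\}$, with the second inclusion by left continuity.
\end{itemize}
In each chain the outer sets have the same supremum, using $\sup\varnothing=0$ when $f_l(y)=0$. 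So $T_2T_1=\mathrm{id}$ and $T_1T_2=\mathrm{id}$ follow once you argue with these inclusions rather than with equivalences.
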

\begin{proof}
See Appendix \ref{appendix_A}.
\end{proof}
\begin{theorem}\label{th:TL}
$T_L$ is a bijection from $\mathcal{A}$ to $\mathcal{C}$. Moreover, $T_L = T_1 \circ D_+$ and $T_L^{-1}= T_0 \circ T_2$.
\end{theorem}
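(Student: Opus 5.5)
By Lemma \ref{inverse}, $D_+:\mathcal{A}\to\mathcal{B}$ and $T_1:\mathcal{B}\to\mathcal{C}$ are bijections with inverses $T_0$ and $T_2$. So the composition $T_1\circ D_+$ is a bijection from $\mathcal{A}$ to $\mathcal{C}$, with inverse $T_0\circ T_2$. The whole theorem therefore reduces to one identity:
\begin{equation*}
T_L(V;y)=T_1(V'_+;y)\quad\text{for every } V\in\mathcal{A} \text{ and every } y\geq 0.
\end{equation*}
Once this holds, both the bijectivity and the formula for $T_L^{-1}$ follow immediately.

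\textbf{Case $y>0$.} Fix $V\in\mathcal{A}$ and set $g(x)=V(x)-xy$, which is concave and continuous on $[0,\infty)$. We use the standard characterization of maximizers of a concave function: $x^*\geq 0$ maximizes $g$ exactly when
\begin{equation*}
V'_+(x^*)\leq y\leq V'_-(x^*),
\end{equation*}
where at $x^*=0$ the left condition is simply $V'_+(0)\leq y$. Since $V'_+$ is decreasing, right continuous and tends to $0$, the set $\{x: V'_+(x)\geq y\}$ is an interval $[0,a)$ or $[0,a]$, or is empty. Here $a=T_1(V'_+;y)$ is finite because $V'_+\to 0$ and $y>0$.

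We then check that $a$ is the largest maximizer.
\begin{itemize}
\item \emph{$a$ is a maximizer.} Right continuity gives $V'_+(a)\leq y$, since points just to the right of $a$ have $V'_+<y$. On the other side, $V'_-(a)=\lim_{x\uparrow a}V'_+(x)\geq y$ whenever $a>0$.
\item \emph{No point beyond $a$ is a maximizer.} For $x>a$ we have $V'_-(x)\leq V'_+(a')$ for some $a'\in(a,x)$, and $V'_+(a')<y$. So the optimality condition fails at $x$.
\end{itemize}
Hence $\sup\operatorname{argmax} g=a$. If the set $\{x:V'_+(x)\geq y\}$ is empty, then $V'_+(0)<y$, the maximizer is $x=0$, and both sides equal $0$.

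\textbf{Boundary cases.} At $y=0$, the function $g=V$ is strictly increasing, so the argmax is empty and the convention $\sup\varnothing=0$ applies. On the other side, $T_1(V'_+;0)=\sup\{x:V'_+(x)\geq 0\}=\infty$, which matches the extension $f_l(0)=\infty$ in Remark \ref{class}. I will align the conventions here by treating $T_L(V;0)=\infty$, which is the same value-at-zero convention the paper uses for $\mathcal{C}$. The value at $0$ is in any case immaterial, since $\xi_T>0$.

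\textbf{Main obstacle.} The delicate step is the endpoint bookkeeping in the case $y>0$: whether $a$ belongs to the superlevel set, and why the supremum of the argmax set equals $a$ exactly rather than some other point of a flat piece. Both are settled by the right continuity of $V'_+$ together with $V'_-(x)=V'_+(x^-)$. I would handle this with the optimality inequality stated above rather than by direct comparison of the values of $g$.
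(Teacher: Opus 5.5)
Your proposal is correct and follows the paper's proof: you reduce via Lemma \ref{inverse} to the pointwise identity $T_L(V;y)=T_1(V'_+;y)$, then verify it for $y>0$ using the first-order characterization $V'_+(x)\le y\le V'_-(x)$ of the maximizers of $V(x)-xy$, with right continuity of $V'_+$ and the bound $V'_-(x)\le V'_+(a')<y$ for $a<a'<x$ identifying the largest maximizer. Your handling of $y=0$ is more careful than the paper's, which just asserts both sides are $\infty$ although the argmax is empty and the stated convention $\sup\varnothing=0$ gives $0$; adopting the extension $f_l(0)=\infty$ from Remark \ref{class}, as you do, is the right fix.
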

\begin{proof}
We first prove $T_L = T_1 \circ D_+$, i.e., the equality
$$\sup\{{\mathrm{argmax}}_{x} (V(x)-xy)\} =\sup\{x:D_+V(x)\geq y\}$$ holds for any $y\in [0,\infty)$. Particularly, when $y=0$, we see that both sides of the equation are $\infty$. For $y>0$, we define $$x_+ \triangleq \sup\{x:D_+V(x)\geq y\},$$ and then have   $x_+<\infty$ due to $\lim\limits_{x\to \infty}D_+V(x)=0$. Obviously, $D_+V(x)< y$ holds for any $x>x_+$. Hence, $D_+V(x_+)\leq y$ as $D_+(V)$ is right continuous. In addition, we easily conclude that $V'_-(x^+)\geq y$ because  for any $x<x_+$, it holds that $V'_-(x)\geq V'_+(x)\geq y$ and $V'_-$ is left continuous. The property of Legendre's transform yields 
$$\mathrm{argmax}_x (V(x)-xy)=\{x:V'_+(x)\leq y \leq V'_-(x)\}.$$ Then  $x_+ \in \{x:V'_+(x)\leq y \leq V'_-(x)\} $. Because $V'_-(x)<y$ holds for any $x>x_+$, we conclude  that $x_+ = \sup\{\mathrm{argmax}_x (V(x)-xy)\} =\max\{\mathrm{argmax}_x (V(x)-xy)\}$. Thus, $T_L = T_1 \circ D_+$. Finally, using Lemma \ref{inverse}, we have that $T_L$ is a bijection and $T_L^{-1}= T_0 \circ T_2$.
\end{proof}
\begin{remark}\label{remark:X_T^*}
On one hand, the terminal wealth function for an investor with utility function $V$ is $T_L(V)\in\mathcal{C}$. On the other hand, if an investor's terminal wealth function belongs to the class $\mathcal{C}$, we can derive the corresponding utility function in $\mathcal{A}$ via the inverse mapping of $T_L$.
According to Remark \ref{remark:terminalwealth}, there exists a decreasing function $\mathcal{X}_T^*:\xi \mapsto f_l(\nu^*\xi)$ for any $\xi\geq 0$ such that $X_T^*=\mathcal{X}_T^*(\xi_T)$, revealing that the optimal terminal wealth for Problem (\ref{eq:prob_intr}) can be regarded as a function $\mathcal{X}$ with respect to the pricing kernel $\xi_T$ and $\mathcal{X}$ belongs to the class $\mathcal{C}$ as the single proportional linear transformation does not change the properties of functions in class $\mathcal{C}$.
\end{remark}

To ensure the existence of the elicited utility function, we must properly define $\Xi$ for Problem (\ref{eq:delta}), as stated in Section \ref{sec:model}.
We derive the sufficient and necessary conditions\footnote{Existence results for optimization problems in incomplete markets are abundant, but explicit expressions for the price kernel remain scarce. This paper restricts attention to complete markets, where the terminal wealth and portfolio of a utility-based investor admit an explicit characterization.} under which a function $\mathcal{X}_T$ constitutes an optimal terminal wealth in the form $X_T^*=\mathcal{X}_T(\xi_T)$ for some utility-based investor with an initial asset $x_0$: first, $\mathcal{X}_T$ belongs to the class $\mathcal{C}$; and second, the budget constraint $\E[\xi_T\mathcal{X}_T(\xi_T)]=x_0$ is satisfied. Thus, we define $\Xi$ as the set of all the $\X$ satisfying these two conditions.
For the sufficiency, based on Remarks \ref{remark:terminalwealth} and \ref{remark:X_T^*}, if these conditions are fulfilled, according to the bijection established in Theorem \ref{th:TL} and the martingale-duality method, we have that the optimal terminal wealth of the investor with the utility function $V:x \mapsto T_L^{-1}(\mathcal{X}_T)$ is precisely $X_T^*$ and the corresponding Lagrange multiplier is equal 
to 1. This implies that by specifying a terminal wealth that satisfies the aforementioned two conditions, we can conclude that the investor is utility-based, and thus the sufficiency holds. For the necessity, we have the unique optimal terminal wealth is given by $X_T^*=T_L(U;\nu^*\xi_T)$ according to Proposition \ref{prop:unique} and Remark \ref{remark:X_T^*}, which satisfies the above two conditions. Hence, the necessity is obviously holds.

As outlined in Section \ref{sec:model}, for $\X$ satisfying the above two conditions, the elicited utility function corresponding to the optimal terminal wealth $X_T^*=\X(\xi_T)$ is not unique and we can prove all these utility functions form the affine equivalence class $[V]_{\text{aff}}=\{\nu^{-1} \cdot T_L^{-1}(\mathcal{X})+a \mid \nu \in \mathbb{R}^+,a\in \R\}$.
In fact, according to Remark \ref{remark:X_T^*}, we assume $X_T^*=f_{l,1}(\xi)=f_{l,2}(\nu\xi)$ for any $\xi>0$, where $\nu>0$ satisfies $\E[\xi_Tf_{l,1}(\xi_T)]=\E[\xi_Tf_{l,2}(\nu\xi_T)]=x_0$. Hence, we have $f_{l,1}(\xi)=f_{l,2}(\nu\xi)$ for any $\xi>0$. Then, using the properties of $T_0$ and $T_2$, we obtain $V_1=T_L^{-1}(f_{l,1})=\nu T_L^{-1}(f_{l,2})-b$ for some $b\in\R$ and the result follows.
Without loss of generality, we always take $\nu=1$ in the following, and thus the optimal terminal wealth is given by $X_T=\X(\xi_T)=f_l(\xi_T)$. Therefore, we can regard $\mathcal{C}$ as the set of all optimal terminal wealths.
\begin{remark}\label{remark:utilitybased}
The investor is utility-based if and only if her terminal wealth satisfies the above two conditions.
The first condition is natural: it states that investors expect higher returns when market conditions are better. The second condition is endogenous to the Black-Scholes model and is independent of the martingale-duality method; it implies that the investor's objective is to maximize wealth as much as possible.
Specifically, in the Black-Scholes model, the vector $\boldsymbol{\theta}$ represents the market price of risk, while a larger realization of $\mathbf{W}_T$ corresponds to better performance of the risky asset. Noting that $\xi_T$ is decreasing in $\boldsymbol{\theta}^\top \mathbf{W}_T$, we consider that a larger $\xi_T$ reflects worse market conditions. The second condition reflects a trade-off: as the pricing-kernel-weighted expectation of terminal wealth is fixed, pursuing a higher wealth level at a fixed probability forces the investor to forgo the opportunity of attaining higher wealth levels at other probability levels. We argue that the second condition is reasonable. For any admissible portfolio $\boldsymbol{\pi}$, we have $\E[\xi_TX^{\boldsymbol{\pi}}_T]\leq x_0$ (this result is inherent in the Black-Scholes model), as shown in Remark \ref{remark:supermartingale}. When the above equality does not hold, the investor fails to pursue greater wealth sufficiently; in other words, the investor could have increased the expected terminal wealth at no risk cost. Hence, this condition broadly characterizes the requirement that a profit-maximizing investor who measures risk while inherently pursuing return maximization should satisfy.
\end{remark}
Then we consider Problem (\ref{eq:delta}). Define $\Xi$ as all the terminal wealth $X_T^*=\mathcal{X}(\xi_T)$, where $\mathcal{X}$ satisfies $\mathcal{X}\in\mathcal{C}$ and $\E[\xi_T\X(\xi_T)]=x_0$. 
Define $\tilde{\mathcal{A}}:=\mathcal{A}$ introduced in Definition \ref{def:fuc-class} and $\delta$ is given by Eq. (\ref{eq:delta}). 
At this point, this elicitation problem is completely solved. That is, giving $X_T^*\in \Xi$ (and thus the investor is utility-based), an elicited utility function is given by $U=T_L^{-1}(X_T^*)$, and all the elicited utility functions satisfying Eq. (\ref{eq:delta}) form an affine set
$V_{\text{aff}} = \{ V \mid V = aU + b,\ a > 0,\ b \in \mathbb{R} \}$. Obviously, all these utility functions share a common absolutely risk aversion function $A:x\mapsto -\frac{U^{''}(x)}{U'(x)}$ and a common relative risk aversion function $R:x\mapsto-\frac{xU''(x)}{U'(x)}$.
ARA and RRA respectively measure how fast an investor's additional satisfaction decreases when earning a fixed absolute amount of money and a fixed percentage of wealth. The more curved (i.e., the more concave) the utility function, the larger ARA value and RRA value are, reflecting stronger risk aversion. When the utility function is linear, both ARA and RRA are zero. These functions effectively capture the investor's risk preference across different wealth levels and can be uniquely elicited. Moreover, in the traditional problem (\ref{eq:prob_intr}), given the ARA or RRA function instead of the utility function, the optimal portfolio can be derived.

While the terminal wealth is intuitive and tractable, directly specifying its explicit form for the investors remains challenging. Instead, as shown in Section \ref{sec:model}, the investors possess a notion of the satisfactory wealth value across different scenarios, which manifests as a grasp of specific probability-wealth pairs. In Section \ref{sec:fittingmethod}, we aim to fit the optimal terminal wealth $X_T^*$ of the investor using a function $X_T^k$, which is constructed by finite probability-wealth pair samples. We ensure that $X_T^k$ is the terminal wealth for a utility-based investor. 
This requirement is equivalent to seeking for a function $\mathcal{X}_T^k$ that is nonnegative, decreasing, left continuous and satisfies $\lim\limits_{\xi\to 0^+}\mathcal{X}_T^k(\xi) = \infty$ and $\E\left[\xi_T\mathcal{X}_T^k(\xi_T)\right]=x_0$. Then, we can set the fitted terminal wealth as $X_T^k=\mathcal{X}_T^k(\xi_T)$.

\begin{remark}\label{remark:nu_no}
In our framework, the fitted terminal wealth introduced in Section \ref{sec:fittingmethod} is independent of the Lagrange multipliers. In traditional problems, when we obtain the terminal wealth function, we require the Lagrange multiplier to further explore the form of the optimal terminal wealth, while in our framework, when we construct the fitted terminal wealth $X^k_T$, we directly obtain the fitted utility function $T_L^{-1}(\mathcal{X}_T^k)$.
\end{remark}

The example below is an application of the bijection, which reveals that an investor with a state-dependent utility (see \cite*{LLZ2024} or \cite*{HL2025}) can also be a utility-based investor.
\begin{example}\label{ex_B}
Let $m=1$. We consider an investor with a random benchmark, which depends on $W_T$. For example, suppose that the investor's utility function is given by $U^{W_T}(x) = \ln(x - \exp \{-\xi_T\})$ (when $x_0$ is large enough, we have $X_T^*>\exp \{-\xi_T\}$). When the market is good (i.e., when $W_T$ is large), the investor's utility reference point will be appropriately elevated in order to offer greater opportunities for pursuing a higher return. We let $r=0.05$, $T=1$, $x_0=1$, and $\theta=0.25$. Using numerical integration methods, we obtain $\nu^*=1.553$ and we let $f_l(x)=\frac{1}{x}+e^{-\frac{x}{\nu^*}}$. Assuming $V(1)=0$, according to Theorem \ref{th:TL}, the corresponding function in $\mathcal{A}$ is given by
\begin{align*}
V(x)&=\int_1^xT_2(f_l)(t)\d t=\int_1^x f_l^{-1}(t)\d t=\int_{T_2(f_l)(x)}^{T_2(f_l)(1)}f_l(t)\d t+xT_L(f_l(x))-T_L(f_l(1))\\ &=\ln\left(T_2(f_l)(1)\right)-\ln\left(T_2(f_l)(x)\right)-\nu^*\exp\left\{-\frac{T_2(f_l(1))}{\nu^*}\right\}+\nu^*\exp\left\{-\frac{T_2(f_l(x))}{\nu^*}\right\}\\ &\quad + xT_L(f_l(x))-T_L(f_l(1)).
\end{align*}
In Figure \ref{fig:dif}, we simultaneously plot the graphs of different utility functions for comparison while in Figure \label{fig:arr}, we exhibit their RRA functions.
\begin{figure}[htbp]
    \centering
    \begin{minipage}{0.48\textwidth}
        \centering
        \includegraphics[width=\textwidth]{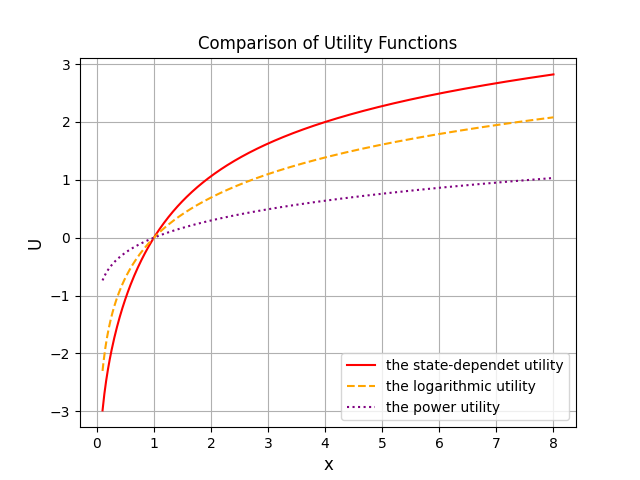}
        \caption{The specific expression of the logarithmic utility and the power utility is $U_1(x)=\ln x$ and $U_2(x)=2(x^{0.2}-1)$, respectively.}
        \label{fig:dif}
    \end{minipage}
    \hfill
    \begin{minipage}{0.48\textwidth}
        \centering
        \includegraphics[width=\textwidth]{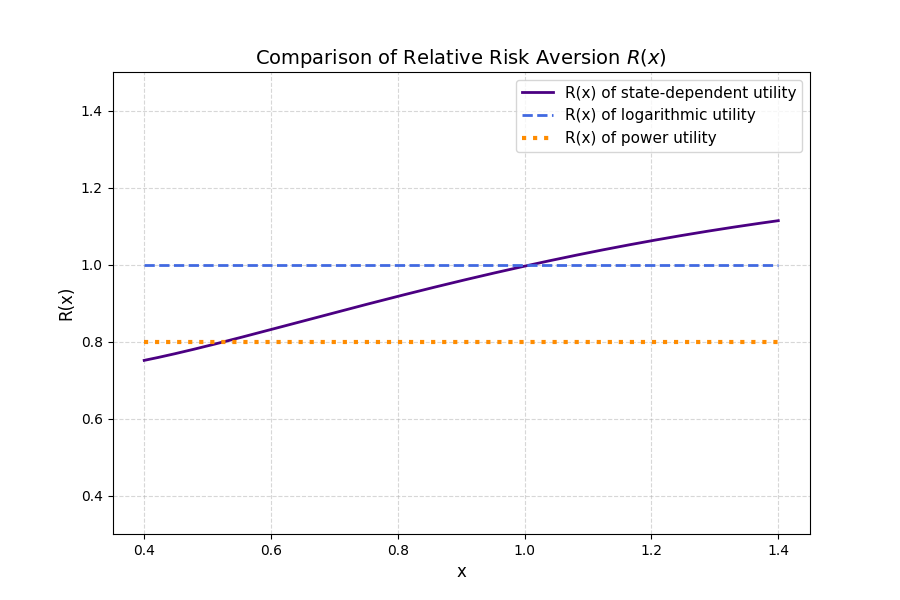}
        \caption{RRA functions accurately reflect the risk preference characteristics of different utility functions. When wealth is large, the state-dependent utility is the most risk-averse.}
        \label{fig:arr}
    \end{minipage}
\end{figure}
This example implies that we can transform the study of some state-dependent utility functions into the study of deterministic utility functions.
\end{example}

In the next section, based on the bijection, we construct a fitted terminal wealth which is considered as a function of $\xi_T$ using probability-wealth pairs. The fitted terminal wealth is the terminal wealth of a utility-based investor, and thus the corresponding utility function can be obtained using $T_L^{-1}$. 
\section{Preference-fitting Method}\label{sec:fittingmethod}
We aim to fit the optimal terminal wealth of the investor with the probability-wealth pairs and make the fitted terminal wealth a terminal wealth of a utility-based investor. Then, we can construct a fitted utility function which is the elicited utility function corresponds to the fitted terminal wealth. Then, applying the martingale-duality method, we obtain the fitted wealth process and the fitted portfolio. 
Preference-fitting presented below is effective for all investors who admit they are utility-based (i.e. profit-maximizing investors whose wealth increases as market conditions improve, according to Remark \ref{remark:utilitybased}), including those described in \cite*{KLSX1991}, \cite*{LIN2017137}, \cite*{HK2018}, \cite*{DZ2019}, \cite*{LL2024}, and so on.


We start with some clarifications. Fix $p\in(0,1)$ and define $N:=\frac{\boldsymbol{\theta}^\top \mathbf{W}_T}{||\boldsymbol{\theta}||_2\sqrt{T}}=\frac{-(r+\frac12 ||\boldsymbol{\theta}||^2_2)T-\ln (\xi_T) }{||\boldsymbol{\theta}||_2\sqrt{T}}$. Then $N$ has a standard normal distribution and acts as an indicator of the market condition; a larger $N$ means a better market condition, as stated in Remark \ref{remark:utilitybased}.
Define $\xi^p=\{\xi_T:N=\Phi^{-1}(p)\}$ as the value of $\xi_T$ at the $p$-confidence level, meaning that there is a probability of $1-p$ that the market has a rather good state. We can see from Eq. \eqref{eq_xi} that $\xi_T$ is decreasing with respect to $N$. We take the similar notation as \cite*{LLMV2024} that $d_{1}(z)=\frac{1}{-||\boldsymbol{\theta}||_2\sqrt{T}}\left(\log\left(z\right)+\left(r-\frac{||\boldsymbol{\theta}||_2^2}{2}\right)T\right)$. The investor chooses $p_0^1$ and $p_1^1$ such that $p_0^1>p_1^1$ as reference points of extremely good and extremely bad market condition, respectively. For example, she sets $p_0^1=0.999$ and $p_1^1=0.001$. We consider the market conditions are normal when $\xi_T\in \left(\xi^{p_0^1},\xi^{p_1^1}\right)$, and otherwise, we regard the market conditions as overly extreme. In this section, we are primarily concerned with normal market conditions\footnote{To simplify the subsequent analysis of the uniform convergence, considering extreme and normal market cases separately is merely technical. For the preference-fitting procedure, however, this separation is unnecessary. See Remark \ref{remark:Stepone} for details.}. The hyperbolic expression refers to the inverse of the derivative of the later Eqs. \eqref{eq:base1}-\eqref{eq:base3]} or refers to a constant function $y=h_{i+1}^k$. For example, for Eq. \eqref{eq:base3]}, the hyperbolic expression is given by $y=h_{i+1}^k+\frac{\kappa_{i+1}^k}{\xi}$ defined on some interval, where $\kappa_{i+1}^k\in \R^+$ and $h_{i+1}^k\in \R$.

In traditional research, the investor possesses a given utility function $\tilde{U}$, with which she can solve for the optimal terminal wealth $\tilde{X}_T^*=\tilde{\mathcal{X}}_T^*(\xi_T)$ using the Legendre transform, where $\tilde{\mathcal{X}}_T^*:\xi \mapsto \tilde{f}_l(\nu^*\xi)$, $\tilde{f}_l=T_L\left(\tilde{V}\right)$ and $\nu^*$ represents the Lagrange multiplier. However, the investor often finds it challenging to specify the exact form of $\tilde{U}$. 
Instead, she is much clearer about her desired wealth under various market states, which can be mathematically represented by the probability-wealth pairs $(p,y)$. The pairs $(p,y)$ indicate that ``the investor requires a return exceeding $y$ with a probability of $p$ in her terminal wealth $X_T^*$". Then, we introduce the preference-fitting procedure. Roughly speaking, assuming that $X_T^*$ is the terminal wealth for a utility-based investor, we find that the pair $(p,y)$ means $\mathcal{X}_T^*(\xi^p)=y$ because the optimal terminal wealth of an investor with a utility function can be seen as a decreasing function of $\xi_T$ according to Remark \ref{remark:X_T^*}. Following this observation, by connecting the pairs $(\xi^p,y)$ with some specific expression, we construct a fitted terminal wealth $X_T^k=\mathcal{X}_T^k(\xi_T)$ which satisfies $\mathcal{X}_T^k(\xi^p)=y$. We can derive the fitted utility function of the investor through the bijection established in Section \ref{sec:bijection}. Moreover, the fitted terminal wealth can be achieved by the fitted portfolio, which has an explicit form and is independent of the Lagrange multipliers.
 
The method is divided into three steps. In the first step, the investor directly gives a specific expression of $\mathcal{X}_T^*$ when the market conditions are extreme, namely the cases where $\xi_T\notin\left(\xi^{p_0^1},\xi^{p_1^1}\right)$ happens. In other words, the investor admits her optimal terminal wealth is given as that in \textbf{Step 1}. We will explain this requirement later and in Remark \ref{remark:Stepone}. The budget bounds in \textbf{Step 2} are necessary conditions for becoming utility-based and guaranteeing the existence of the elicited utility function. The bounds reflect the investor's understanding of the Black-Scholes model, allowing them to strike a balance between pursuing wealth and mitigating risk. In the rest of this section, we first present the specific steps and then provide further explanations.\\
\textbf{Step 1. Specify the optimal terminal wealth under extreme market conditions.}\\
Choose $p_0^1$ and $p_1^1$ such that $p_0^1>p_1^1$, and define $y_0^1=\frac{\kappa_0^1}{\xi^{p_0^1}}$ and $y_1^1=\frac{\kappa_1^1}{\xi^{p_1^1}}$, where $\kappa_0^1>0$ and $\kappa_1^1>0$ are chosen to fulfill $y_0^1\geq y_1^1$ and the following conditions
\begin{equation*}\small
\begin{aligned}
&\kappa_0^1(1-p_0^1)+\kappa_1^1p_1^1 +y_1^1\E\left[\xi_T 
\id_{\{\xi_T\in (\xi^{p_0^1},\xi^{p_1^1})\}}\right] \leq x_0 < \kappa_0^1(1-p_0^1)+\kappa_1^1p_1^1+y_0^1\E\left[\xi_T\id_{\{\xi_T\in (\xi^{p_0^1},\xi^{p_1^1})\}}\right],
\end{aligned}
\end{equation*}
and
assume the optimal terminal wealth $\mathcal{X}_T^*(\xi)=\frac{\kappa^1_0}{\xi}$ when $\xi \in (0,\xi^{p_0^1})$ and $\mathcal{X}_T^*(\xi)=\frac{\kappa^1_1}{\xi}$ when $\xi \in (\xi^{p_1^1},\infty)$. For example, the investor can set $p_0^1=0.999$, $p_1^1=0.001$ and $\kappa_0^1=\kappa_0^2=x_0$, where $x_0$ is the value of initial asset.\\
\textbf{Step 2. Give the probability-wealth pairs under normal market conditions.}\\
Suppose that $p_0^n,p_1^n,\cdots , p_n^n$  and  $y_0^n,y_1^n,\cdots , y_n^n$ ($n\in \N^*$) are well defined. Choose $p_{n+1}$ satisfying $p_{n+1} \notin \{p_0^n,p_1^n,\cdots ,p_n^n\}$ and $p_{n+1}\in (p_1^1,p_0^1)$. Arrange $p_{n+1}$, $ p_0^n$, $p_1^n,\cdots,p_n^n$ in descending order and rename them sequentially as $p_0^{n+1},\cdots,p_{n+1}^{n+1}$. Obviously, we have $p_0^{n+1}=p_0^n=p_0^1$ and $p_{n+1}^{n+1}=p_n^n=p_1^1$. Then, there exists $j_{n+1}\in \N^*$, $0<j_{n+1}<n+1$ such that $p_{n+1}=p_{j_{n+1}}^{n+1}$. Define $y_i^{n+1}=y_i^n$ for $0\leq i<j_{n+1}$ and $y_i^{n+1}=y_{i-1}^n$ for $j_{n+1}<i\leq n+1$. Then choose  $y_{j_{n+1}}^{n+1}\in[ y_{j_{n+1}+1}^{n+1},y_{j_{n+1}-1}^{n+1}]$ satisfying
\begin{equation}\label{eq:selections1}
\begin{aligned}
\sum_{i=0}^{n-1} y_{i}^{n+1}\E\left[\xi_T\id_{\{\xi_T\in (\xi^{p_i^{n+1}},\xi^{p_{i+1}^{n+1}})\}}\right]+\kappa_0^1(1-p_0^1)+\kappa_1^1p_1^1> x_0,
\end{aligned}
\end{equation}
and
\begin{equation}\label{eq:selections2}
\begin{aligned}
\sum_{i=0}^{n-1} y_{i+1}^{n+1}\E\left[\xi_T\id_{\{\xi_T\in (\xi^{p_i^{n+1}},\xi^{p_{i+1}^{n+1}})\}}\right]+\kappa_0^1(1-p_0^1)+\kappa_1^1p_1^1\leq x_0.
\end{aligned}
\end{equation}
We refer to the right-hand side of Eqs. \eqref{eq:selections1} and \eqref{eq:selections2} as the $(n+1)$-th budget lower bound and $(n+1)$-th budget upper bound, respectively.\\
\textbf{Step 3. Connect the pairs with the hyperbolic expressions.}\\
Choose $k\in \N^* $. Repeating the operations in Step 2, we obtain $(p_i^k, y_i^k)$, $i=0,1,\cdots,k$. Then, using a hyperbolic expression to connect $(\xi^{p_i^{k}},y_i^k)$ and $(\xi^{p_{i+1}^{k}},y_{i+1}^k)$ for every $i=0,1,\cdots,k-1$, we obtain a function $\mathcal{X}^k_T$ defined on $\left[\xi^{p_0^k},\xi^{p_k^k}\right]$. Then, set $\mathcal{X}^k_T(\xi)=\mathcal{X}_T^*(\xi)$ when $\xi \in(0,\xi^{p_0^k})\cup (\xi^{p_k^k},\infty)$ according to Step 1. This procedure needs to satisfy two requirements: first, the connected function $\mathcal{X}_T^k$ is in the function class $\mathcal{C}$; and second, the budget constraint $\E\left[\xi_T\mathcal{X}_T^k(\xi_T)\right]=x_0$ holds, where $x_0$ is the value of initial asset. Therefore, we have completed defining $\mathcal{X}_T^k$ for any $\xi\geq 0$. Finally, we define $X_T^k=\mathcal{X}_T^k(\xi_T)$.

The schematic diagram of preference-fitting method is shown in Figure \ref{fig:pair}.
\begin{lemma}\label{lemma_existences}
One can guarantee the existence of $\kappa_0^1$ and $\kappa_1^1$ in Step 1, $y_{j_{n+1}}^{n+1}$ in Step 2, and the hyperbolic expressions in Step 3.
\end{lemma}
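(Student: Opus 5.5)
The plan is to prove the three existence claims separately. Each reduces to a monotonicity argument plus the intermediate value theorem. Throughout, write $m_i:=\E\bigl[\xi_T\id_{\{\xi_T\in(\xi^{p_i},\xi^{p_{i+1}})\}}\bigr]>0$. Since $\xi_T$ is strictly decreasing in $N\sim\mathcal N(0,1)$, we have $\{\xi_T<\xi^{p}\}=\{N>\Phi^{-1}(p)\}$. Hence $\E[\xi_T\cdot\frac{\kappa_0^1}{\xi_T}\id_{\{\xi_T<\xi^{p_0^1}\}}]=\kappa_0^1(1-p_0^1)$ and $\E[\xi_T\cdot\frac{\kappa_1^1}{\xi_T}\id_{\{\xi_T>\xi^{p_1^1}\}}]=\kappa_1^1p_1^1$. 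This identifies the constant terms in all the budget bounds as the budget spent in the two extreme regions.

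\textbf{Step 1.} Take $\kappa_0^1=\kappa_1^1=\kappa>0$. Then $y_0^1=\kappa/\xi^{p_0^1}>\kappa/\xi^{p_1^1}=y_1^1$, because $p_0^1>p_1^1$ gives $\xi^{p_0^1}<\xi^{p_1^1}$. Both sides of the Step 1 condition are linear in $\kappa$ with zero intercept:
\[
L(\kappa)=\kappa\Bigl(1-p_0^1+p_1^1+\tfrac{m}{\xi^{p_1^1}}\Bigr),\qquad U(\kappa)=\kappa\Bigl(1-p_0^1+p_1^1+\tfrac{m}{\xi^{p_0^1}}\Bigr),
\]
where $m=\E[\xi_T\id_{\{\xi_T\in(\xi^{p_0^1},\xi^{p_1^1})\}}]>0$, and the slope of $U$ is strictly larger than that of $L$. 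Choosing $\kappa$ with $L(\kappa)=x_0$ gives $L(\kappa)\le x_0<U(\kappa)$.

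\textbf{Step 2.} I argue by induction: assume the level-$n$ budget bounds hold, with the level-$1$ case being exactly Step 1. Regard the level-$(n+1)$ lower and upper budget expressions as functions $\mathcal L(y)$, $\mathcal U(y)$ of the free value $y=y_{j_{n+1}}^{n+1}\in[y_{j+1}^{n+1},y_{j-1}^{n+1}]$. Both are affine and nondecreasing in $y$, and $\mathcal L\ge\mathcal U$ pointwise because $y_i\ge y_{i+1}$.
\begin{itemize}
\item At $y=y_{j-1}^{n+1}$, the two new subintervals in $\mathcal L$ carry the same value the merged interval carried at level $n$, so $\mathcal L(y_{j-1}^{n+1})$ equals the level-$n$ lower bound, which is $>x_0$.
\item Symmetrically, $\mathcal U(y_{j+1}^{n+1})$ equals the level-$n$ upper bound, which is $\le x_0$.
\end{itemize}
Let $y^*=\sup\{y:\mathcal U(y)\le x_0\}$ within the interval. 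If $y^*=y_{j-1}^{n+1}$, take $y=y^*$. Otherwise continuity gives $\mathcal U(y^*)=x_0$, and it remains to show $\mathcal L(y^*)>\mathcal U(y^*)$. This holds because
\[
\mathcal L-\mathcal U=\sum_i (y_i-y_{i+1})m_i ,
\]
every $m_i>0$, and the chain $y_0\ge\dots\ge y_{n+1}$ telescopes from $y_0^1$ to $y_1^1$, which differ strictly by Step 1. Some consecutive gap is therefore positive. This bookkeeping, namely matching the indices of the merged interval with its two halves and keeping the strict inequality, is where I expect the main care to be needed.

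\textbf{Step 3.} Constant functions are admitted as hyperbolic expressions, so on each normal interval $(\xi^{p_i},\xi^{p_{i+1}}]$ I use a two-level step: $y_i$ on $(\xi^{p_i},s_i]$ and $y_{i+1}$ on $(s_i,\xi^{p_{i+1}}]$. Any such choice, glued with Step 1's tails, is nonnegative, decreasing, left continuous and tends to $\infty$ at $0^+$, so it lies in $\mathcal C$. Decreasingness across the junctions at $\xi^{p_0}$ and $\xi^{p_k}$ uses $y_0=\kappa_0^1/\xi^{p_0}$ and $y_k=\kappa_1^1/\xi^{p_k}$. Now move a single parameter $s\in[0,1]$ that shifts the switch points $s_i$ from right ends to left ends, say one interval at a time. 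The budget $\E[\xi_T\mathcal X^k_T(\xi_T)]$ is continuous in $s$, by dominated convergence and the fact that $\xi_T$ has no atoms. It runs from the level-$k$ lower bound, which is $>x_0$, down to the level-$k$ upper bound, which is $\le x_0$. The intermediate value theorem then yields a choice with budget exactly $x_0$. If smoother connections are wanted, the same argument applies to any continuous one-parameter family of $h+\kappa/\xi$ pieces that interpolates between these two step extremes.
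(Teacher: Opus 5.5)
Your route is the paper's. Step 1 is a scaling/intermediate-value argument. Step 2 is an induction that compares the two affine budget expressions at the ends of $[y_{j_{n+1}+1}^{n+1},y_{j_{n+1}-1}^{n+1}]$ with the level-$n$ bounds. Step 3 applies the intermediate value theorem between the ``left-value'' and ``right-value'' step functions.

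Steps 1 and 2 are correct. In Step 2 you are more careful than the paper, whose proof records only non-strict comparisons (``not less than the budget lower bound''). Those alone do not give the strict inequality \eqref{eq:selections1}. Your observation that $\mathcal L-\mathcal U=\sum_i(y_i-y_{i+1})m_i$ telescopes against $y_0^1-y_1^1>0$ is exactly what closes this.

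There is a concrete defect in Step 3 as you set it up. Step 3 requires more than $\mathcal X^k_T\in\mathcal C$ and budget $x_0$: the function must also pass through the nodes, i.e.\ $\mathcal X^k_T(\xi^{p_i^k})=y_i^k$. Suppose you move the switch points ``one interval at a time''. The intermediate value theorem may then return a parameter at which later intervals are still in their initial state $s_i=\xi^{p_{i+1}}$. On such an interval $\mathcal X^k_T\equiv y_i$ on $(\xi^{p_i},\xi^{p_{i+1}}]$, so by left continuity $\mathcal X^k_T(\xi^{p_{i+1}})=y_i\neq y_{i+1}$ whenever $y_i>y_{i+1}$. This happens, for instance, as soon as moving $s_0$ alone already brings the budget down to $x_0$.

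The fix is to move all switch points simultaneously:
$s_i(\lambda)=\xi^{p_{i+1}}-\lambda(\xi^{p_{i+1}}-\xi^{p_i})$ for $\lambda\in[0,1]$. At $\lambda=0$ the budget is the level-$k$ budget lower bound, which is strictly greater than $x_0$. Hence the root satisfies $\lambda^*>0$, so every $s_i<\xi^{p_{i+1}}$. Then $\mathcal X^k_T=y_{i+1}$ on $(s_i,\xi^{p_{i+1}}]$, and every node is hit, including at $\lambda^*=1$. With this change your Step 3 is a valid, explicit version of the paper's one-line ``evident by the intermediate value theorem''.
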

\begin{proof}
Assume that $\xi^{p_0^{n}},\cdots,\xi^{p_n^n}$ and $y_0^n,\cdots,y_n^n$ are given. By straightforward calculation, we find that the budget upper bound is not lower than $y_{j_{n+1}+1}^{n+1}$ and is not less than the budget lower bound, which is not greater than $y_{j_{n+1}-1}^{n+1}$. Therefore, the existence of $y_{j_n+1}^{n+1}$ holds by induction. The existences of $\kappa_0^1$ and $\kappa_1^1$ and the hyperbolic expressions are evident due to the intermediate value theorem.

\end{proof}
For \textbf{Step 1}, the key is to determine the expression for $X_T^*$ under extreme market conditions. In this step, when $(\xi^{p_0^1},y_0^1)$ is defined, we connect it with the point $(0,\infty)$ using a hyperbolic expression $y=\frac{\kappa_0^1}{\xi}$ to define $\mathcal{X}_T^*$ and we treat $(\xi^{p_1^1},y_1^1)$ and $(\infty,0)$ in the similar way. 
This step is intended to facilitate the subsequent discussion of the uniform convergence and it is not necessary.
On one hand, directly assuming the form of the terminal wealth under extreme market conditions has little impact due to the extremely low probability of such scenarios. In the studies we reference, these cases also contribute negligibly to the value of $\E[\xi_TX_T^*]$. On the other hand, \cite*{DB1954} demonstrates that the utility of large wealth in people's minds generally takes a logarithmic form through empirical analysis. This form corresponds to the hyperbolic expression introduced in \textbf{Step 1}.
We will add some supplementary descriptions in Remark \ref{remark:Stepone}. \textbf{Step 1} also plays a role in the initialization of the budget upper and lower bounds. In the subsequent steps, $\kappa_0^1$ and $\kappa_1^1$ only affect the values of the budget upper and lower bounds but not the feasibility of the steps. The inequalities in \textbf{Step 1} are special forms of Eqs. \eqref{eq:selections1} and  \eqref{eq:selections2}.


For \textbf{Step 2}, the discussion preceding Remark \ref{remark:nu_no} suggests that the terminal wealth of a utility-based investor can be considered as a function belong to the class $\mathcal{C}$ satisfying the budget constraint. To guarantee that the fitted terminal wealth satisfies the two conditions, the investor strategically positions her anticipated return within the range of the budget upper and lower bounds. Through direct calculation we have Eq. (\ref{eq:selections1}) is equivalent to 
\begin{align*}
y_{j_{n+1}}^{n+1}>&\frac{x_0-\kappa_0^1(1-p_0^1)-\kappa_1^1p_1^1-e^{-rT}\sum_{\substack{i \ne j_{n+1}\\0\leq i \leq n }}y_i^{n+1}\left[\Phi\left(d_{1}\left(\xi^{p_i^{n+1}}\right)\right)-\Phi\left(d_{1}\left(\xi^{p_{i+1}^{n+1}}\right)\right)\right]}{e^{-rT}\left[\Phi\left(d_{1}\left(\xi^{p_{j_{n+1}}^{n+1}}\right)\right)-\Phi\left(d_{1}\left(\xi^{p_{j_{n+1}+1}^{n+1}}\right)\right)\right]},
\end{align*}
while Eq. (\ref{eq:selections2}) is equivalent to
\begin{align*}
y_{j_{n+1}}^{n+1} &\leq  \frac{x_0-\kappa_0^1(1-p_0^1)-\kappa_1^1p_1^1-e^{-rT}\sum_{\substack{i \ne j_{n+1}\\0\leq i \leq n }}y_{i+1}^{n+1}\left[\Phi\left(d_{1}\left(\xi^{p_{i}^{n+1}}\right)\right)-\Phi\left(d_{1}\left(\xi^{p_{i+1}^{n+1}}\right)\right)\right]}{{e^{-rT}\left[\Phi\left(d_{1}\left(\xi^{p_{j_{n+1}}^{n+1}}\right)\right)-\Phi\left(d_{1}\left(\xi^{p_{{j_{n+1}}+1}^{n+1}}\right)\right)\right]}}.
\end{align*}
As discussed in Section \ref{sec:bijection}, the requirement $\E[\xi_TX_T^n]=x_0$ reflects a trade-off: while investors pursue higher returns, they also recognize that higher returns come with greater risk.  
This is reflected in the fact that an increase in $y_i^n$ leads to a decrease in the $n+1$-th budget upper bound.
We emphasize that the upper and lower bounds contain all possible values of wealth at a specified probability level and are endogenous to the Black-Scholes model, which are independent of the martingale-duality method. In other words, the bound constraints are necessary conditions for an investor to become utility-based. Thus, we consider the preferences of a utility-based investor to be inadmissible if they do not satisfy these requirements of the bound. For instance, an investor aiming to achieve a return of 100 with a probability of 0.99, starting with an initial asset of 1, would be deemed to have unrealistic anticipation in the case of the same parameter settings as in Section \ref{sec:VaR} later. Eqs. (\ref{eq:selections1}) and (\ref{eq:selections2}) mathematically characterizes the investor's perception of the Black-Scholes market, namely, that higher returns come at the cost of higher risk. This trade-off is prevalent across different market settings.

For \textbf{Step 3}, we find a continuous function in class $\mathcal{C}$ passing through the points determined by the pairs and satisfying the budget constraint. This terminal wealth retains the preference characteristics of utility-based investors because $\mathcal{X}_T^k \in \mathcal{C}$. 
We can use the martingale-duality method to derive the fitted portfolio which achieves the fitted terminal wealth through the SDE (\ref{eq_notct}). 
Moreover, \citet*{LLMV2024} reveal that the equation $\E[\xi_TX_T^k]=\E[\xi_T\X_T^k(\xi_T)]=x_0$ is equivalent to an explicit equation with multiple parameters.
Hence, the hyperbolic expressions are not unique\footnote{the existence is guaranteed by \textbf{Step 2}.}. Investors first choose the pairs as their core focus, then calibrate the parameters in the hyperbolic expressions to forms that meets their satisfaction and satisfies the budget constraints $\E[\xi_T\X_T^k(\xi_T)]=x_0$.

We emphasize that, because the budget constraint is endogenous to the model, the preference-fitting method can construct any attainable optimal terminal wealth process (under normal market conditions, if we specify the extreme market conditions).

The preference-fitting method eliminates the necessity of discussing the Lagrange multiplier. In fact, it is evident from the steps that the fitted terminal wealth $X_T^k$ is independent of $\nu^*$. 
Then, using the martingale-duality method, we derive the fitted wealth process as $X_t^k=\xi_t^{-1}\E\left[\xi_TX_T^k|\F_t\right]:=\mathcal{X}^k(t,\xi_t)$ and the fitted portfolio as $\boldsymbol{\pi}_t^k=-\xi_t \frac{\partial \mathcal{X}^k(t,\xi_t)}{\partial \xi_t}(\boldsymbol{\sigma}^\top)^{-1}\boldsymbol{\theta}$, which are independent of $\nu^*$; see Section \ref{sec:phara} later. 
\begin{figure}
    \centering
    \includegraphics[width=0.65\linewidth]{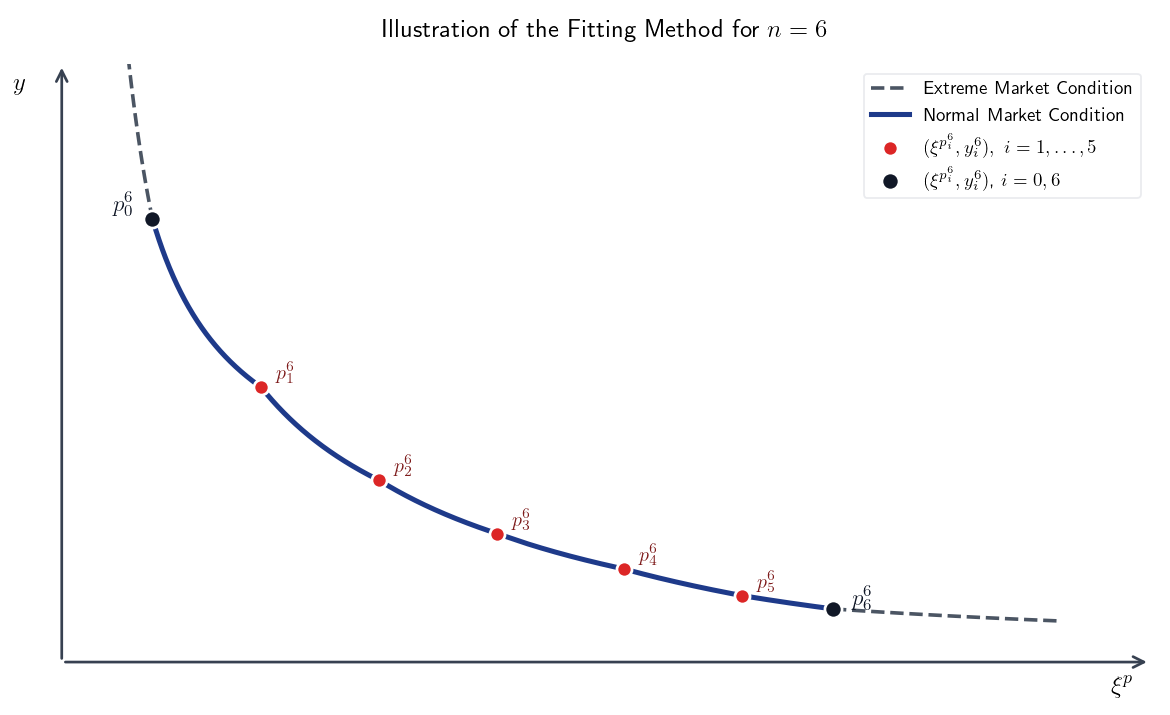}
    \caption{This figure depicts the approximate shape of the fitted terminal wealth when $n=6$. The investor targets returns at six probability levels; once the corresponding pairs are determined in turn, she adopts a suitable hyperbolic specification to connect these points.}
    \label{fig:pair}
\end{figure}
\begin{definition}\label{def:fitXpi}
For a fitted terminal wealth $X_T^k$, we define the fitted utility function as $U_k=T_L^{-1}(X_T^k)$, the fitted wealth process as $X_t^k=\xi_t^{-1}\E\left[\xi_TX_T^k|\F_t\right]$, and the fitted portfolio as $\boldsymbol{\pi}_t^k=-\xi_t \frac{\partial \mathcal{X}^k(t,\xi_t)}{\partial \xi_t}(\boldsymbol{\sigma}^\top)^{-1}\boldsymbol{\theta}$, where $k\in \N^*$ and $T_L^{-1}$ is the mapping given by Definition \ref{def:bij}.
\end{definition}
In practice, investors may select several focal scenarios (corresponding to different probability levels $p$) and specify their expected wealth based on investment objects and psychological expectations, thereby constructing the corresponding fitted portfolios. We believe this fitted portfolio is already practically usable. Naturally, in order to obtain more accurate results of preference and to demonstrate the effectiveness of the preference-fitting method, we require that the fitted terminal wealth converges as the fitting accuracy increases. Moreover, we require the fitted wealth process, the fitted portfolio and the fitted utility function converge to the optimal wealth process, the optimal portfolio and the elicited utility function, respectively. Indeed, if $\max\limits_{\{i=0,1,\cdots, n-1\}} \{|\xi_i^n-\xi_{i+1}^n| \} \to 0$, as $n\to \infty$, we can verify various kinds of convergences of the fitted wealth process and the fitted portfolio to the optimal wealth process and the optimal portfolio, respectively, using the PHARA approximation method; see Section \ref{sec:phara}.
\begin{remark}\label{remark:Stepone}
We provide further explanation for \textbf{Step 1} which is an optional operation rather than a mandatory step. We can revise \textbf{Step 1} as follows. The investor provides $\{\xi^{p^k_n}\}_{0\leq n \leq k}$ lying in the interval $(0, \infty)$, and she only needs to ensure that the budget lower bound constraint is satisfied when choosing the corresponding $y$ values (as the upper bound is always $\infty$). Additionally, in order to guarantee the convergences, we require that $\xi^{p_0^k}\to 0$, $\xi^{p_k^k}\to \infty$, and $\max\limits_{1\leq n\leq k}\{\xi^{p_n^k}-\xi^{p_{n-1}^k}\}\to 0$ as $k \to \infty$. However, if \textbf{Step 1} is modified in this manner, the convergence of the method weakens (Section \ref{sec:phara}), as no upper bound is imposed on the constraints. Theoretically, the investor's behavior can be highly pathological: for instance, they may forgo income under normal market conditions in exchange for a extremely small probability of securing an extraordinary payoff.
\end{remark}

The preference-fitting method presents several advantages featuring intuitiveness and data utilization efficiency. First, drawing on insights from behavioral economics surveys, this approach aligns more closely with the cognitive process of the investor, relying merely on finite samples. Second, we fit the terminal wealth and obtain the fitted portfolio with merely finite samples rather than continuous functional expression. Third, the utility function derived from this method circumvents the need for Lagrange multipliers. Instead, we propose budget bounds to guide and assist investors in allocating wealth, thereby enhancing their operational flexibility. 
Forth, investors can reduce asset fluctuations by evenly distributing the $y$-values corresponding to different p-values. According to Section 7 in \cite*{LLMV2024}, it is the high frequency of discontinuous points (arising from identical $y$-values) that leads to high fluctuations.

\begin{remark}\label{remark:beha}
Research in behavioral economics indicates that investors often adopt a psychological benchmark as a reference point and display loss aversion (\cite*{kahneman1979d}). In our approach, the investors may have tendency to establish a higher target wealth value under the poor market conditions to mitigate the risk of bankruptcy. The budget upper bound indicates the highest wealth value they can choose, and if this does not align with their preferences, they tend to lower their $y$ values during more favorable market states to enhance the upper bound, thus achieving a more robust portfolio.
\end{remark}

In Subsection \ref{subsec:app}, we propose the PHARA-approximation approach. In Subsection \ref{subsec:fittingrate}, we consider the following scenario: the investor has a preferred optimal terminal wealth $X_T^*$ (though its exact distribution is unknown), and sequentially provides pairs $(p, y)$ satisfying $\mathbb{P}(X_T^* > y) = p$. From these, we obtain the fitted terminal wealth, fitted wealth process, fitted utility function, and fitted portfolio. We then show that, as the fitting accuracy increases, these fitted elements converge in various senses. We demonstrate the explicit solutions of fitted wealth process and fitted portfolio in Subsection \ref{subsec:expl}.
\section{PHARA Utility Approximation}\label{sec:phara}
In this section, we present the PHARA approximation method to derive explicit expressions for the fitted wealth process and the fitted portfolio and then demonstrate their convergences. For an investor with a utility function $V$, the PHARA-approximated wealths and the PHARA-approximated portfolios defined in Definition \ref{def:fitting} 
(which differ from the definitions of the fitted wealth and the fitted portfolio in Section \ref{sec:fittingmethod} to allow a broader range of applications for the PHARA approximation; see Section \ref{sec:ex_hyper}.) converge to the optimal wealth and optimal portfolio, respectively. We first present the PHARA-approximation because the analyses of its form and convergence are more general.
\subsection{The Approximation Procedure}\label{subsec:app}
To begin, we introduce the HARA base.
\begin{definition}
We define two kinds of HARA bases on interval $[s_0,t_0)$ as follows:\\
(1) The power base
\begin{equation}\label{eq:base1}
B(x)=[a(x-u)^\gamma+b]\id_{\{s_0\leq x< t_0\}},
\end{equation}
where $a>0,u\leq s_0,\gamma\in (0,1],b \in \R$, or $a<0,u\leq s_0,\gamma\in(-\infty,0),b\in \R$, and
\begin{equation}\label{eq:base2}
B(x)=[a(u-x)^\gamma+b]\id_{\{s_0\leq x< t_0\}},
\end{equation}
where $a<0,u\geq t_0,\gamma\in (1,\infty),b \in \R$.\\
(2) The logarithmic base
\begin{equation}\label{eq:base3]}
B(x)=[a\ln (x-u)+b]\id_{\{s_0\leq x< t_0\}},
\end{equation}
where $a>0,u\leq s_0,b\in \R$. For the sake of simplicity in calculations, we denote the logarithmic utility corresponding to $\gamma = 0$.
\end{definition}

An explicit solution for the optimal portfolio can be obtained only when the investor's utility function conforms to specific forms. Prominent examples of such utility functions include PHARA, which contains power utility, logarithmic utility, and exponential utility, as outlined in \cite*{LLMV2024}.

Given $x_1<x_2$, $k_{x_1}\geq k_{x_2}$ and $v_1 \in \R$, we can use a power base $B(x)$ (which may not be unique) on the interval $[x_1,x_2)$ such that $B'_+(x_1)=k_{x_1},B'_-(x_2)=k_{x_2}$, and $B(x_1)= v_1$. This assertion can be verified through straightforward calculation. If $k_{x_1}> k_{x_2}$, we can also find a logarithm base that satisfies these conditions. This observation motivates us to introduce the following definition.
\begin{definition}\label{def:V_n}
Define the approximated utility function $V_n\in \mathcal{A}$ for $V \in \mathcal{A}$ and $n \in \mathbb{N}^*$ as follows. For $n\in \N^*$, let $T(n)>0$ and $M(n)\in \N^*$. Divide the interval $[0,T(n)]$ into $M(n)$ segments. We require that $T(n) \to \infty$ as $n\to\infty$ and that the maximum length of the $M(n)$ parts, denoted as $d_n$, converges to 0 as $n \to \infty$. For the $i$-th segment $[s_i^n, t_i^n)$, where $1\leq i \leq M(n)$, we construct a HARA base $B_i^n(x) = \left[a_i^n (x - u_i^n)^{\gamma_i^n} + b_i^n \right]\id_{\{ x\in[s_i^n,t_i^n)\}}$ or alternatively, $B_i^n(x)=[a^n_i \ln (x-u_i^n)+b_i^n]\id_{\{ x\in[s_i^n,t_i^n)\}}$ such that $(B_i^n)'_+(s_i^n)=V'_+(s_i^n)$ and $(B_i^n)'_-(t_i^n)=V'_-(t_i^n)$ hold for any $1\leq i \leq M(n)$ while $B_i^n(s_i^n) = B_{i-1}^n(s_i^n)$ holds for any $2\leq i \leq M(n)$. In addition, we define \begin{small}$B_{M(n)+1}^{n}(x)= \left( \ln (x-u^n_{M(n)+1}) + b_{M(n)+1}^n \right)\id_{\left\{ x\in \left[t_{M(n)}^n,\infty \right) \right\}}$\end{small}. We require \begin{small}$B_{M(n)+1}^{n}\left(t_{M(n)}^n\right)=V\left(t_{M(n)}^n\right)$\end{small} and \begin{small} $\left(B_{M(n)+1}^{n}\right)'_{+}\left(t^n_{M(n)}\right)=V'_+\left(t_{M(n)}^n\right)$\end{small}. Define $V_n(x) := \sum\limits_{i=1}\limits^{M(n)+1}B_i^n(x)$, i.e., 
define
\begin{equation}\label{fuc_phara}
    V_n(x) = \left\{
    \begin{aligned}
    &\left[a_i^n (x - u_i^n)^{\gamma_i^n} + b_i^n \right]\id_{\{ x\in[s_i^n,t_i^n)\}}, \quad \gamma_i \in (0,1],\ x \in [s_i^n, t_i^n),\ i=1,\cdots,M(n); \\
    &  a_i^n\ln\left(x - u_i^n\right)+ b_i^n \id_{\{ x\in[s_i^n,t_i^n)\}}, \quad \gamma_i = 0, \ x \in [s_i^n, t_i^n),\ i=1,\cdots,M(n);\\
    &\left[\ln (x-u^n_{M(n)+1}) + b_{M(n)+1}^n \right]\id_{\left\{ x\in \left[t_{M(n)}^n,\infty \right) \right\}}, \quad   x\in[t_{M(n)+1},\infty).
    \end{aligned}
    \right.
\end{equation}
We refer to any $V_n$ constructed in this manner as a PHARA-approximated utility function of $V$.
\end{definition}
In Section \ref{sec:fittingmethod}, we first obtain the fitted terminal wealth $X_T^k=\X(\xi_T)$, and the fitted utility function $V_k=T_L^{-1}(\X)$ takes the form of an approximated utility.
\begin{remark}
Without loss of generality, we may assume $V_n(1) = 0$ to ensure $V_n \in \mathcal{A}$. Otherwise, this can be achieved by vertically
shifting the graph of $V_n$, which has no impact on the subsequent analysis.
\end{remark}
\begin{remark}\label{remark:domain}
The domain of the various functions in Definitions \ref{class} and \ref{def:V_n} is $[0,\infty)$ and we can easily extend it to $[a, \infty)$ or $(- \infty, \infty)$ for $a\in \R$. For example, the definition of $\mathcal{A}$ can be presented as the set of all concave and strictly increasing functions $V$ with domain $(-\infty, \infty)$, $V(1) = 0$ and $\lim_{x\to\infty}V'_+(x)=0$. When $V$ is defined on $(- \infty, \infty)$, the divisions in the construction of $V_n$ should be made within the interval $[-T_1(n), T_2(n)]$, and we use a logarithmic base defined on $(-\infty, -T_1(n))$ with the form $B_1^n(x)=[-\ln(u_1^n-x)-b_1^n]\id_{\{ x\in(-\infty,-T_1(n))\}}$, where $u_1^n>-T_1(n),b_1\in \R$ such that $B_1(-T_1(n))=V(-T_1(n))$ and $(B_1)'_-(-T_1(n))=V'_-(-T_1(n))$ hold to formulate $V_n$. 
\end{remark}

For $V$ and the corresponding $V_n \in \mathcal{A}$, where $n\in \N^*$, we define $f=T_L(V)$ and $f_n=T_{L}(V_n)$. In addition, we define $g: \nu \mapsto \E [\xi_Tf(\nu \xi_T)]$ and $g_n:  \nu \mapsto \E [\xi_Tf_n(\nu \xi_T)]$ as functions of $\nu$. For Problem (\ref{eq:prob_intr}), because $V_n$ is a PHARA utility for any $n\in \N^*$, the optimal portfolio of an investor with utility function $V_n$ has an explicit form.

The second condition in Assumption \ref{ass:L2} is equivalent to $\E \[(\xi_Tf(\nu \xi_T))^2\] < \infty$ for any $\nu>0$ in this section. The following results rely on this key assumption.

For an investor with utility function $V_n$, the existence of the optimal portfolio is guaranteed by the following lemma \ref{nu_covergence}.
\begin{lemma}\label{nu_covergence}
For $V \in \mathcal{A}$ and $V_n$ in Definition \ref{def:V_n}, we have the following results.\\
(1) The series of functions $ \{V_n\}_{n \geq 1} $ converges uniformly to $V$ on any closed subset of $(0, \infty)$.\\ 
(2) $\lim\limits_{n\to \infty}\E\left[(\xi_Tf_n(\nu\xi_T)-\xi_Tf(\nu\xi_T))^2\right]=0$. Thus $\E [(\xi_Tf_n(\nu \xi_T))^2]$ has a uniform upper bound $M_\nu$ for any $\nu>0$.\\
(3) Each of the equations $g(\nu) =x_0$ and $g_n(\nu)=x_0$, $n\in\N^*$, admits a unique solution. Denote the unique solutions by $\nu^*$ and $\nu_n^*$, respectively. Then we have $\nu_n^* \to \nu^*$ as $n \to \infty$.
\end{lemma}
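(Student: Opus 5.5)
For part (1), I would work at the level of right-hand derivatives and then integrate. Fix a compact interval $[\alpha,\beta]\subset(0,\infty)$. For $n$ large, $T(n)>\beta$, so $[\alpha,\beta]$ is covered by segments $[s_i^n,t_i^n)$ of length at most $d_n$. Each HARA base $B_i^n$ is concave with $(B_i^n)'_+(s_i^n)=V'_+(s_i^n)$ and $(B_i^n)'_-(t_i^n)=V'_-(t_i^n)$. Its derivative is monotone on the segment, so it is squeezed:
$$V'_-(t_i^n)\le (V_n)'_+(x)\le V'_+(s_i^n),\qquad x\in[s_i^n,t_i^n).$$
The same bounds hold for $V'_+(x)$. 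Hence $|(V_n)'_+(x)-V'_+(x)|\le V'_+(s_i^n)-V'_-(t_i^n)$, the jump of the monotone function $V'$ across a segment of length at most $d_n$.

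The first technical point is that this oscillation need not be small pointwise, because $V'_+$ may have jumps. Still, the total variation of $V'_+$ over segments meeting $[\alpha/2,\beta]$ is bounded by $V'_+(\alpha/2)-V'_-(\beta+1)<\infty$. I would therefore integrate, $\int_\alpha^\beta |(V_n)'_+-V'_+|\,\d x\le d_n\sum_i(\text{oscillation}_i)\le d_n\,(V'_+(\alpha/2)-V'_-(\beta+1))\to0$, and likewise from the left endpoint of the grid. Since $V_n$ is continuous and matches the value at grid points only up to accumulated error, I would normalize as in the Remark ($V_n(1)=V(1)=0$) and write $V_n(x)-V(x)=\int_1^x((V_n)'_+-V'_+)\,\d t$. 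This gives uniform convergence on $[\alpha,\beta]$; any closed subset contained in a compact set is handled this way. Closed but unbounded subsets need more care: I would interpret ``closed subset'' as compact, or else use the tail logarithmic base together with $V'_+\to0$.

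For part (2), I set $f=T_L(V)=T_1(V'_+)$ and $f_n=T_1((V_n)'_+)$. From the derivative sandwich, $(V_n)'_+$ and $V'_+$ both lie between $V'_+(s_i^n)$ and $V'_-(t_i^n)$ on each segment. Inverting via $T_1$ shows that $f_n(y)$ and $f(y)$ lie in the same or adjacent segments whenever $f(y)\in(0,T(n))$. Thus $|f_n(y)-f(y)|\le 2d_n$ for $y$ with $f(y)<T(n)$. On the remaining region $\{f(y)\ge T(n)\}$, which for $y=\nu\xi_T$ is the event $\{\xi_T\le c_n\}$ with $c_n\to0$, the tail base is logarithmic, so $f_n(y)=u_{M(n)+1}^n+1/y$.

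I would then bound $\E[(\xi_T f_n(\nu\xi_T)-\xi_T f(\nu\xi_T))^2]$ in two pieces. The first is at most $4d_n^2\E[\xi_T^2]\to0$. The second is at most $2\E[\xi_T^2f(\nu\xi_T)^2\id_{\{\xi_T\le c_n\}}]+2\E[(u^n\xi_T+\nu^{-1})^2\id_{\{\xi_T\le c_n\}}]$. It tends to $0$ by Assumption \ref{ass:L2} and dominated convergence, provided $u^n$ stays bounded; I will check this using $t^n_{M(n)}-u^n=1/V'_+(t^n_{M(n)})$. This tail estimate, together with the possible degeneracy near $x=0$ (first segment, where $V'_+(0)$ may be $\infty$), is the step I expect to be the main obstacle. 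Near $0$ I would argue the same way, using that $f(\nu\xi_T)$ small corresponds to large $\xi_T$ and $|f_n-f|\le 2d_n$ there. The uniform bound $M_\nu$ then follows from the triangle inequality in $L^2$.

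For part (3), $g$ and $g_n$ are decreasing in $\nu$ because $f,f_n$ are decreasing. They are continuous by dominated convergence, using the $L^2$ bounds and the monotone bracketing $f(\nu_1\xi)\le f(\nu\xi)\le f(\nu_2\xi)$. By monotone convergence they tend to $\infty$ as $\nu\to0$, since $f(0^+)=\infty$, and to $0$ as $\nu\to\infty$. They are strictly decreasing because $\xi_T$ has full support on $(0,\infty)$ and $f\in\mathcal{C}$ is not eventually constant where positive. Hence unique roots $\nu^*,\nu_n^*$ exist. By part (2) and Cauchy–Schwarz, $g_n\to g$ pointwise. For $\epsilon>0$ we have $g(\nu^*-\epsilon)>x_0>g(\nu^*+\epsilon)$, so for large $n$ the same strict inequalities hold for $g_n$. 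By monotonicity, $\nu_n^*\in(\nu^*-\epsilon,\nu^*+\epsilon)$, and therefore $\nu_n^*\to\nu^*$.
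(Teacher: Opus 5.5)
Your proposal follows the paper's proof closely in all three parts. In (1) the paper also squeezes $(V_n)'_+$ and $V'_+$ between $V'_+(s_i^n)$ and $V'_-(t_i^n)$ and controls $\int|(V_n)'_+-V'_+|$ by the gap between upper and lower Darboux sums; your bound $d_n\sum_i\mathrm{osc}_i$ is the explicit form of that. In (2) it splits at $\{f(\nu\xi_T)\ge T(n)\}$ and uses $|f_n-f|=O(d_n)$ below that level. In (3) it uses the same monotonicity, continuity and limit properties and the same bracketing of $\nu^*$. Getting $g_n\to g$ directly from (2) by Cauchy--Schwarz is slightly cleaner than the paper's a.s.\ convergence plus uniform integrability.

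The step you flagged as the main obstacle would not go through as planned, because $u^n:=u^n_{M(n)+1}$ need not stay bounded. The matching condition gives $u^n=T(n)-1/V'_+(T(n))$. This equals $T(n)-T(n)^{1-\gamma}\to+\infty$ for $V(x)=(x^\gamma-1)/\gamma$, and $T(n)-T(n)^2\to-\infty$ for $V(x)=1-1/x$; both are admissible under Assumption~\ref{ass:L2}.

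What you need is a comparison with $T(n)$, not boundedness, and this is what the paper does. On $\{\nu\xi_T<V'_+(T(n))\}$, where $f_n(\nu\xi_T)=u^n+\frac{1}{\nu\xi_T}$, you have $f(\nu\xi_T)\ge T(n)$ and $\xi_T/V'_+(T(n))<\nu^{-1}$. Hence
\begin{equation*}
0<\xi_T T(n)<\xi_Tf_n(\nu\xi_T)=\xi_T T(n)+\nu^{-1}-\frac{\xi_T}{V'_+(T(n))}\le \xi_Tf(\nu\xi_T)+\nu^{-1},
\end{equation*}
so $(\xi_Tf_n(\nu\xi_T)-\xi_Tf(\nu\xi_T))^2\le 2(\xi_Tf(\nu\xi_T))^2+2\nu^{-2}$ there. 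On the rest of $\{f(\nu\xi_T)\ge T(n)\}$ one checks $f_n=f=T(n)$ up to a null set. Dominated convergence, using Assumption~\ref{ass:L2} and $\mathbb{P}(f(\nu\xi_T)\ge T(n))\to0$, then sends the tail to $0$. With this replacement your argument for (2) is complete.

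Two harmless slips in (3). First, for $\nu_1\le\nu\le\nu_2$ the bracketing is $f(\nu_2\xi)\le f(\nu\xi)\le f(\nu_1\xi)$, so the dominating function is $\xi_Tf(\nu_1\xi_T)$; for $g_n$ its integrability comes from the bound $M_{\nu_1}$ in (2). Second, strict monotonicity really rests on this: equality $f(\nu_1x)=f(\nu_2x)$ near $0$ would force $f$ to be constant near $0$, contradicting $f(0^+)=\infty$. Left continuity then gives strict inequality on an interval, which has positive probability under $\xi_T$.
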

\begin{proof}
See Appendix \ref{l4.1}.
\end{proof}
For an investor with a utility function $V$, we denote by $X_T^*$ the optimal terminal wealth, by $\{X_t^*\}_{0\leq t\leq T}$ the optimal wealth and by $\{\boldsymbol{\pi}_{t}^*\}_{0\leq t\leq T}$ the optimal portfolio. According to Theorem 6.3 in \cite*{KLSX1991}, for investor with the utility function $V_n$, the optimal terminal wealth, denoted by $X_{n,T}^*$, the optimal wealth process, denoted by $\{X_{n,t}^*\}_{0 \leq t \leq T}$ and the optimal portfolio, denoted by $\{\boldsymbol{\pi}_{n,t}^*\}_{0\leq t \leq T}$, exist. Moreover, we have $X_{n,T}^*=f_n(\nu^*_n\xi_T)$ and $X_{n,t}^*=\xi_t^{-1}\E[f_n(\nu^*_n\xi_T)|\F_t]$, where $\nu^*_n$ is the Lagrange multiplier satisfying $\E[\xi_Tf_n(\nu^*_n\xi_T)]=x_0$. Based on this, we give the following definition.
\begin{definition}\label{def:fitting}
For an investor with a utility function $V$ and $n\in \N^*$, suppose that the PHARA-approximated utility function is $\{V_n\}_{n\in \N^*}$. We denote by $X_{n,T}^*$ the PHARA-approximated terminal wealths, by $\{X_{n,t}^*\}_{0 \leq t \leq T}$ the PHARA-approximated optimal wealths and by $\{\boldsymbol{\pi}_{n,t}^*\}_{0\leq t \leq T}$ the PHARA-approximated optimal portfolios.
\end{definition}
We now shift our attention to the convergence of the PHARA-approximated optimal wealths and the PHARA-approximated portfolios. The main result is presented in Theorem \ref{th:co-X} and the following lemmas are crucial in the proof. The proofs of the lemmas are relegated to Appendix \ref{appendix_A}.
\begin{lemma}\label{lemma:L^2-co}
For $\nu^*$ and $\nu^*_n$ defined in Lemma \ref{nu_covergence}, we have $\E\left[(\xi_Tf_n(\nu_n^*\xi_T)-\xi_Tf(\nu^*\xi_T))^2\right]\to 0$ as $n\to\infty$.
\end{lemma}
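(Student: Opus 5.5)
We split the error at the intermediate multiplier $\nu^*$ and use the elementary bound $(a+b)^2\le 2a^2+2b^2$:
\begin{align*}
\E\left[(\xi_Tf_n(\nu_n^*\xi_T)-\xi_Tf(\nu^*\xi_T))^2\right]
&\leq 2\,\E\left[(\xi_Tf_n(\nu_n^*\xi_T)-\xi_Tf(\nu_n^*\xi_T))^2\right]\\
&\quad+2\,\E\left[(\xi_Tf(\nu_n^*\xi_T)-\xi_Tf(\nu^*\xi_T))^2\right].
\end{align*}
The second term involves only the fixed function $f=T_L(V)\in\mathcal{C}$ and the convergence $\nu_n^*\to\nu^*$ from Lemma \ref{nu_covergence}(3). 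Fix $0<\nu_-<\nu^*<\nu_+$; for large $n$ we have $\nu_n^*\in[\nu_-,\nu_+]$. Since $f$ is decreasing,
$$0\le \xi_Tf(\nu_n^*\xi_T)\le \xi_Tf(\nu_-\xi_T),$$
and the right-hand side is square integrable by Assumption \ref{ass:L2}, so it serves as a dominating function. The map $\nu\mapsto f(\nu\xi)$ is monotone with at most countably many discontinuity points $\nu\xi\in D_f$. Because $\xi_T$ is lognormal, $\p(\nu^*\xi_T\in D_f)=0$. Hence $f(\nu_n^*\xi_T)\to f(\nu^*\xi_T)$ almost surely, and dominated convergence sends the second term to $0$.

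The first term is where the work lies, since Lemma \ref{nu_covergence}(2) gives $L^2$ convergence only at a fixed $\nu$, whereas here $\nu=\nu_n^*$ moves with $n$. First, we try to extract pointwise convergence of $f_n$ from the construction. Lemma \ref{nu_covergence}(1) gives $V_n\to V$ locally uniformly. Concave functions converging locally uniformly have right-derivatives converging at every point where $V$ is differentiable. Passing to the generalized inverses via $T_1$ (Lemma \ref{inverse}), $f_n(y)\to f(y)$ at every continuity point $y$ of $f$, that is, for all $y$ outside a countable set. The same argument shows $f_n(y_n)\to f(y)$ whenever $y_n\to y$ and $y$ is a continuity point, using the monotonicity of $f_n$ and sandwiching $y_n$ between $y\pm\epsilon$. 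Therefore $\xi_Tf_n(\nu_n^*\xi_T)-\xi_Tf(\nu_n^*\xi_T)\to0$ almost surely.

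Next we need uniform integrability of the squares. Monotonicity of $f_n$ gives
$$\xi_Tf_n(\nu_n^*\xi_T)\le \xi_Tf_n(\nu_-\xi_T),$$
and
$$\xi_Tf_n(\nu_-\xi_T)\le |\xi_Tf_n(\nu_-\xi_T)-\xi_Tf(\nu_-\xi_T)|+\xi_Tf(\nu_-\xi_T).$$
The first summand tends to $0$ in $L^2$ by Lemma \ref{nu_covergence}(2), and the second is a fixed $L^2$ variable. A sequence of the form (fixed $L^2$ function) $+$ ($L^2$-null sequence) has uniformly integrable squares. So the family $\{(\xi_Tf_n(\nu_n^*\xi_T))^2\}_n$ is uniformly integrable, the family $\{(\xi_Tf(\nu_n^*\xi_T))^2\}_n$ is dominated, and by Vitali's theorem the first term tends to $0$.

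The main obstacle is the uniform integrability step, because $\nu_n^*$ varies with $n$. The plan is to handle it as above, bracketing by the fixed level $\nu_-$ and applying Lemma \ref{nu_covergence}(2) there. If the pointwise derivative convergence proves delicate, an alternative is to use the explicit tangency conditions of Definition \ref{def:V_n}. These conditions make $(V_n)'_\pm$ agree with $V'_\pm$ at the grid points, so $f_n$ and $f$ coincide at grid slopes. Since both functions are decreasing, $f_n$ is then sandwiched between values of $f$ at neighbouring grid slopes, whose spacing vanishes as $d_n\to0$.
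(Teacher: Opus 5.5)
Your proof is correct. You split through the intermediate term $\xi_T f(\nu_n^*\xi_T)$ exactly as the paper does. Your treatment of the second term is essentially the paper's: dominated convergence, with domination at a multiplier below $\inf_n\nu_n^*$, plus the fact that $\nu^*\xi_T$ a.s.\ avoids the countable discontinuity set of $f$. Your version is slightly more direct, since you dominate globally by $\xi_T f(\nu_-\xi_T)$ rather than restricting to $\{f(\nu^*\xi_T)\in[x_1,x_2]\}$.

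The genuine difference is in the first term. The paper re-runs the proof of Lemma~\ref{nu_covergence}(2) at the moving multiplier. The tangency conditions of the construction give the deterministic bound $|f_n(y)-f(y)|\le d_n$ for all $y>V'_+(T(n))$. Because this bound is uniform in $y$, it does not matter that $\nu_n^*$ depends on $n$. The remaining event $\{f(\nu_n^*\xi_T)\ge T(n)\}$ is controlled through the explicit logarithmic last piece $f_n(y)=T(n)+1/y-1/V'_+(T(n))$ together with $\nu_{\inf}$.

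You instead combine two ingredients and then apply Vitali's theorem:
\begin{itemize}
\item almost sure convergence, from pointwise convergence of $f_n$ to $f$ off a countable set, which follows from local uniform convergence of the concave $V_n$;
\item uniform integrability, obtained by bracketing $\nu_n^*$ from below by a fixed $\nu_-$ and invoking Lemma~\ref{nu_covergence}(2) only at that fixed level.
\end{itemize}

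The paper's route is quantitative: it gives an error of order $d_n^2$ plus a tail term, which is what one would need for convergence rates. It does, however, rely on the specific grid/tangency structure and on the explicit tail base of $V_n$. Your route gives no rate, but it uses only structural facts: monotonicity of $f_n$, pointwise convergence at continuity points, and $L^2$ convergence at a single multiplier. It would therefore apply verbatim to any approximating sequence in $\mathcal{A}$ with those properties. Your fallback via tangency at grid slopes is precisely the paper's key estimate.
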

\begin{lemma}\label{prop:1} 
For any $r_1\in \R,r_2 \in [0,\infty),r_3\in[1,2)$ and $t\in[0,T)$ and for fixed $t>0$ and $\xi>0$, we have
$$\left|\int_{0}^{\infty}x^{r_1}|\ln x|^{r_2}\left(f(\nu^*\xi x)\right)^{r_3}\d F_{t}(x)\right|<\infty,$$ where $F_t$ is the conditional  distribution function of $\frac{\xi_T}{\xi_t}$ given $\F_t$. Moreover, $$\left\{\left|\int_{0}^{\infty}x^{r_1}|\ln x|^{r_2}\left(f_n(\nu^*_n\xi x)\right)^{r_3}\d F_{t}(x)\right|\right\}_{n\in \N^*}$$ has a uniform upper bound. 
\end{lemma}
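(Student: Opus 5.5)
The plan is to reduce the integral to a Gaussian moment bound by H\"older's inequality, using the square-integrability in Assumption \ref{ass:L2} together with the fact that $\xi_T/\xi_t$ is independent of $\F_t$ and lognormal. Indeed, $\xi_T/\xi_t=\exp\{-(r+\frac12\|\boldsymbol{\theta}\|_2^2)(T-t)-\boldsymbol{\theta}^\top(\mathbf{W}_T-\mathbf{W}_t)\}$, so $F_t$ is a deterministic lognormal law with variance parameter $\|\boldsymbol{\theta}\|_2^2(T-t)>0$ for $t<T$. Moreover, all moments $\int x^{q}|\ln x|^{s}\,\d F_t(x)$ are finite for every $q\in\R$ and $s\ge 0$. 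Write $Z:=\xi_T/\xi_t$, so the integral equals $\E[Z^{r_1}|\ln Z|^{r_2} f(\nu^*\xi Z)^{r_3}]$. Since $f\ge 0$, the absolute value causes no difficulty.

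First, I would transfer the $L^2$ information from $\xi_T$ to the conditional variable $Z$. Assumption \ref{ass:L2} gives $\E[(\xi_T f(\nu\xi_T))^2]<\infty$ for all $\nu>0$. Because $\xi_T=\xi_t Z$ with $\xi_t$ independent of $Z$, and $\xi_t$ has a smooth positive lognormal density, Tonelli gives
\[
\int_0^\infty \E\big[(\eta Z f(\nu\eta Z))^2\big]\,\d P_{\xi_t}(\eta)<\infty .
\]
Hence $\E[(\eta Z f(\nu \eta Z))^2]<\infty$ for a.e.\ $\eta$. Using that $f$ is monotone, together with the freedom in $\nu$ (take $\nu'$ slightly smaller, so that $f(\nu'\cdot)\ge f(\nu\cdot)$), I would upgrade this to every $\eta>0$. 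Concretely, this shows $\E[Z^2 f(cZ)^2]<\infty$ for every constant $c>0$, in particular for $c=\nu^*\xi$.

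Next, I would apply H\"older with exponents $p=2/r_3\in(1,2]$ and its conjugate $p'$, splitting
\[
Z^{r_1}|\ln Z|^{r_2}f^{r_3}=\big(Z^{r_3}f^{r_3}\big)\cdot\big(Z^{r_1-r_3}|\ln Z|^{r_2}\big).
\]
The first factor raised to $p$ is $Z^2f(cZ)^2$, which is integrable by the previous step. The second factor raised to $p'$ is a lognormal log-moment, which is finite. In the boundary case $r_3=1$, $p'=2$ and the same argument works. This gives the first claim.

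For the uniform bound in $n$, I would repeat the argument with $f_n(\nu_n^*\xi Z)$. Lemma \ref{nu_covergence}(3) gives $\nu_n^*\to\nu^*$, so $\nu_n^*\ge \nu^*/2=:\underline\nu$ for large $n$. Since $f_n$ is decreasing, $f_n(\nu_n^*\xi Z)\le f_n(\underline\nu\xi Z)$. It then suffices to bound $\sup_n \E[Z^2 f_n(\underline\nu\xi Z)^2]$.

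This is the main obstacle. Lemma \ref{nu_covergence}(2) gives $\sup_n\E[(\xi_T f_n(\nu\xi_T))^2]\le M_\nu$, but that bound is only for the unconditional $\xi_T$, not for each fixed conditioning value $\eta=\xi$. I would handle this with the monotonicity trick in a quantitative form. On the event $\{\xi_t\in[\xi/2,\xi]\}$, which has positive probability $q$, monotonicity gives $f_n(\underline\nu\xi Z)\le f_n(\underline\nu\xi_t Z)$, and also $Z^2\le 4\xi^{-2}(\xi_tZ)^2$. Using independence of $\xi_t$ and $Z$,
\[
q\,\E[Z^2f_n(\underline\nu\xi Z)^2]\le 4\xi^{-2}\,\E[(\xi_T f_n(\underline\nu\xi_T))^2]\le 4\xi^{-2}M_{\underline\nu}.
\]
This yields a bound uniform in $n$. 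The finitely many small $n$ are handled individually by the first part. Combining this with the H\"older step completes the uniform bound.
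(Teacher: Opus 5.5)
Your proof is correct, but it moves the unconditional $L^2$ information to the fixed level $\xi$ differently from the paper.

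The paper first bounds the unconditional moments $\E[\xi_T^{r_1}|\ln\xi_T|^{r_2}f_n^{r_3}(\nu_n^*\xi_T)]$ uniformly in $n$, using a three-factor H\"older inequality with exponents $2/r_3$, $4/(2-r_3)$, $4/(2-r_3)$. It then substitutes $y=\xi x$ and compares the lognormal law $F_t$ explicitly with $F_0$, the law of $\xi_T$. It bounds the density ratio by $C_1(\xi)\,y^{C_2(\xi)}$ and splits $|\ln(y/\xi)|^{r_2}$, which reduces the conditional integral to unconditional moments of the same form.

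You avoid any density computation. Independence of $\xi_t$ and $Z$, the fact that $\p(\xi_t\in[\xi/2,\xi])>0$ for $t>0$, and monotonicity of $f_n$ give $\E[Z^2f_n(\underline{\nu}\xi Z)^2]\le 4\xi^{-2}M_{\underline{\nu}}/\p(\xi_t\in[\xi/2,\xi])$. A two-factor H\"older inequality under $F_t$ alone then finishes the argument.

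Your route is softer and more portable: it needs only a non-degenerate conditioning variable independent of $Z$ and a monotone integrand. It also states explicitly the two facts $f_n(\nu_n^*\,\cdot)\le f_n(\underline{\nu}\,\cdot)$ and $\sup_n\E[(\xi_Tf_n(\underline{\nu}\xi_T))^2]\le M_{\underline{\nu}}$ from Lemma \ref{nu_covergence}(2). The paper's displayed bound leaves these implicit, since it is written with $f(\nu_{\inf}\xi_T)$ in place of $f_n$. The paper's density comparison, for its part, uses the Gaussian structure directly and does not rely on monotonicity of the integrand.

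Your localization does require $t>0$, which the statement assumes. For $t=0$ one has $Z=\xi_T$, and both claims follow directly from Assumption \ref{ass:L2} and Lemma \ref{nu_covergence}(2) together with H\"older.
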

We will use Lemma \ref{prop:1} to ensure the property of uniform integrability in the proof of Theorem \ref{th:co-X}.
\begin{lemma}\label{prop:2}
The function defined on 
$(0, \infty) \times (0, T)$ by 
\begin{equation}\label{eq:K(x)}
K(x,t):=\frac{1}{\sqrt{2\pi}||\boldsymbol{\theta}||_2\sqrt{T-t}}\exp\left\{-\frac12\left(\frac{\ln x+\left(r+\frac{||\boldsymbol{\theta}||_2^2}{2}\right)(T-t)}{||\boldsymbol{\theta}||_2\sqrt{T-t}}\right)^2\right\}
\end{equation}
is $C^{\infty}$ (i.e., derivative functions of any order are continuous) with respect to both $t$ and $x$. Moreover, we have $$
\frac{\d^n}{\d x^n}K(x,t)=K(x,t)\sum_{i\leq\tau_n}c_i^nx^{\alpha_i^n}(\ln x)^{\beta_i^n},$$
where $\tau_n\in \N^*$ is determined by $n$, and $\alpha_i^n \in \mathbb{Z},\beta_i^n \in \N$ and $c_i^n$ are constants that do not depend on $x$.
\end{lemma}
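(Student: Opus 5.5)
The plan is to write $K(x,t)=C(t)\exp\{-\tfrac12 h(x,t)^2\}$ with
\[
C(t)=\frac{1}{\sqrt{2\pi}\,\|\boldsymbol{\theta}\|_2\sqrt{T-t}},\qquad h(x,t)=\frac{\ln x+\left(r+\frac{\|\boldsymbol{\theta}\|_2^2}{2}\right)(T-t)}{\|\boldsymbol{\theta}\|_2\sqrt{T-t}}.
\]
On $(0,\infty)\times(0,T)$ both $C$ and $h$ are compositions and products of $C^\infty$ functions: $\ln x$ is smooth for $x>0$, $\sqrt{T-t}$ is smooth and positive for $t<T$, and hence so is its reciprocal. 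Therefore $-\tfrac12 h^2$ is $C^\infty$ jointly in $(x,t)$, and composing with $\exp$ and multiplying by $C(t)$ preserves smoothness. This gives the first claim directly; no uniformity in $t$ near $T$ is needed, since the domain is open.

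For the derivative formula I will argue by induction on $n$, with $t$ fixed. Let $\mathcal{P}$ be the set of finite linear combinations of monomials $x^{\alpha}(\ln x)^{\beta}$ with $\alpha\in\mathbb{Z}$, $\beta\in\mathbb{N}$, where the coefficients may depend on $t$ but not on $x$. The key observation is that $\mathcal{P}$ is closed under $\frac{\d}{\d x}$:
\[
\frac{\d}{\d x}\bigl[x^{\alpha}(\ln x)^{\beta}\bigr]=\alpha x^{\alpha-1}(\ln x)^{\beta}+\beta x^{\alpha-1}(\ln x)^{\beta-1},
\]
and the term with $\beta=0$ simply drops out. For the base case, $\frac{\d}{\d x}K=-K\,h\,\partial_x h$. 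Since $\partial_x h=\frac{1}{\|\boldsymbol{\theta}\|_2\sqrt{T-t}}\,x^{-1}$ and $h=c\ln x+c'$, the factor $-h\,\partial_x h$ equals $c_1 x^{-1}\ln x+c_2x^{-1}$, which lies in $\mathcal{P}$. For the inductive step, suppose $\frac{\d^n}{\d x^n}K=K\,P_n$ with $P_n\in\mathcal{P}$. Then
\[
\frac{\d^{n+1}}{\d x^{n+1}}K=K\bigl(P_n' -h\,\partial_x h\,P_n\bigr).
\]
The class $\mathcal{P}$ is closed under multiplication: exponents add in $\mathbb{Z}$ and powers of the logarithm add in $\mathbb{N}$. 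So $P_{n+1}:=P_n'-h\,\partial_xh\,P_n\in\mathcal{P}$, and it is a finite sum whose number of terms $\tau_{n+1}$ is bounded in terms of $\tau_n$. This closes the induction.

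There is no real obstacle here. The only point that needs care is bookkeeping: the coefficients $c_i^n$ do depend on $t$, and the statement's phrase ``do not depend on $x$'' allows exactly that. One could also note that $\alpha_i^n\in\{-n,\dots,-1\}$ and $\beta_i^n\le n$, which makes the count $\tau_n$ explicit. The joint $C^\infty$ regularity is routine once the domain is restricted to $t<T$ and $x>0$.
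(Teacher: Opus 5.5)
Your proposal is correct and is essentially the paper's argument. The paper only remarks that the lemma ``can be directly proved by induction'' and omits the details; your induction via $\partial_x K=-K\,h\,\partial_x h$, together with the closure of finite sums of $x^{\alpha}(\ln x)^{\beta}$ under differentiation and multiplication, supplies exactly those details. As a small sharpening, your induction in fact shows that every term has $\alpha_i^n=-n$ and $\beta_i^n\le n$, i.e. $\frac{\d^n}{\d x^n}K=K\,x^{-n}Q_n(\ln x)$ with $Q_n$ a polynomial of degree $n$ whose coefficients depend only on $t$.
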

Lemma \ref{prop:2}  can be directly proved by induction, and thus we skip the proof. Lemma \ref{lemma:twice-dif} below reveals that the optimal portfolio is in fact a function of the time and the pricing kernel, thus transforming the stochastic convergence problem into a convergence problem for real functions.
\begin{lemma}\label{lemma:twice-dif}
Under Assumption \ref{ass:L2}, we have the following two results:\\
(1) There exists a two-variable twice continuously differentiable function $\mathcal{X}^*$ defined on $[0,T) \times [0,\infty)$ such that the optimal wealth process $X_t^*=\xi_t^{-1}\E[\xi_Tf(\nu^*\xi_T)|\F_t]=\mathcal{X}^*(t,\xi_t)$.\\
(2) The optimal portfolio is given by $\boldsymbol{\pi}_t^*=-\xi_t \frac{\partial \mathcal{X}^*(t,\xi_t)}{\partial \xi_t}(\boldsymbol{\sigma}^\top)^{-1}\boldsymbol{\theta} $.
\end{lemma}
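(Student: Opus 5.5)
The plan is to write the conditional expectation as an explicit integral against the lognormal transition density, and then move derivatives in $\xi$ onto the kernel $K$ of Lemma \ref{prop:2}.

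\textbf{Step 1: integral representation.} For $t<T$, the ratio $\xi_T/\xi_t$ is independent of $\F_t$. It is lognormal with $\ln(\xi_T/\xi_t)\sim N\big(-(r+\tfrac12\|\boldsymbol\theta\|_2^2)(T-t),\,\|\boldsymbol\theta\|_2^2(T-t)\big)$, so its density is $K(x,t)/x$ with $K$ as in \eqref{eq:K(x)}. This gives
\begin{equation*}
X_t^*=\xi_t^{-1}\E[\xi_Tf(\nu^*\xi_T)\mid\F_t]=\int_0^\infty x\, f(\nu^*\xi_t x)\,\frac{K(x,t)}{x}\,\d x=:\mathcal{X}^*(t,\xi_t).
\end{equation*}
Substituting $y=\xi x$ yields
\begin{equation*}
\mathcal{X}^*(t,\xi)=\int_0^\infty f(\nu^* y)\,K(y/\xi,t)\,\xi^{-1}\,\d y .
\end{equation*}
In this form all the dependence on $(t,\xi)$ sits in the smooth kernel, and $f$, which is only monotone and left continuous, is never differentiated.

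\textbf{Step 2: differentiation under the integral.} By Lemma \ref{prop:2}, every partial derivative in $\xi$ or $t$ (up to order two) of $K(y/\xi,t)\xi^{-1}$ equals $K(y/\xi,t)\xi^{-1}$ times a finite sum of terms $c\,(y/\xi)^{\alpha}(\ln(y/\xi))^{\beta}\xi^{-j}$, possibly times powers of $(T-t)^{-1}$. Changing back to $x=y/\xi$, each such derivative integral becomes an integral of the type controlled by Lemma \ref{prop:1}, with $r_3=1$.

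To justify interchanging derivative and integral, fix a compact set $[t_1,t_2]\times[\xi_1,\xi_2]\subset[0,T)\times(0,\infty)$. On it I would dominate these integrands by a single integrable function, built from Gaussian tails in $\ln y$ uniform in the parameters. Integrability of that function against $f(\nu^*y)$ is exactly where Lemma \ref{prop:1} and Assumption \ref{ass:L2} enter.

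Dominated convergence then gives that $\mathcal{X}^*$ is $C^2$ (indeed $C^\infty$) on $[0,T)\times(0,\infty)$. Near $\xi=0$ one only needs the function to be defined, or one can extend it as allowed by the statement.

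\textbf{Step 3: the portfolio (part (2)).} Apply Itô's formula to $\mathcal{X}^*(t,\xi_t)$ using $\d\xi_t=-r\xi_t\,\d t-\xi_t\boldsymbol\theta^\top\d\mathbf W_t$. The diffusion term is $-\xi_t\partial_\xi\mathcal{X}^*\,\boldsymbol\theta^\top\d\mathbf W_t$. Compare this with \eqref{eq_notct}, where the diffusion of $X^*$ is $\boldsymbol\pi_t^{*\top}\boldsymbol\sigma\,\d\mathbf W_t$; uniqueness of the semimartingale decomposition then forces
\begin{equation*}
\boldsymbol\sigma^\top\boldsymbol\pi_t^*=-\xi_t\partial_\xi\mathcal{X}^*(t,\xi_t)\,\boldsymbol\theta ,
\end{equation*}
which is the claimed formula. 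Alternatively, the same identity follows from \eqref{eq_martingale} together with the martingale representation of $\xi_tX_t^*$, whose integrand is identified by the same Itô computation.

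\textbf{Main obstacle.} The hard step is the uniform domination in Step 2. The function $f$ may blow up at $0$ and grow at $\infty$, while the derivatives of $K$ introduce factors $x^{\alpha}|\ln x|^{\beta}$. I would split the integral into $y$ near $0$, $y$ near $\infty$, and a middle compact region. On the two tails, the Gaussian factor $\exp\{-c(\ln y)^2\}$, taken uniformly over the compact parameter set, absorbs the polynomial-log factors up to a slightly wider Gaussian. That wider Gaussian corresponds to a conditional density at a nearby time and scale, so Lemma \ref{prop:1} (with $r_3=1$) bounds the resulting integral.
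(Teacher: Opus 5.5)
Your proposal is correct and follows essentially the paper's own route. Both write $\mathcal{X}^*(t,\xi)=\int_0^\infty f(\nu^*\xi x)K(x,t)\,\d x$, substitute so that all $\xi$-dependence sits on the smooth kernel, differentiate under the integral by dominated convergence justified through Lemmas \ref{prop:1}--\ref{prop:2}, and then obtain part (2) from It\^o's formula by matching diffusion coefficients; your uniform Gaussian envelope on compact parameter sets is just a tidier way to organize the domination than the paper's splitting of the difference quotient into $I_1(h)+I_2(h)$ with mean-value bounds, and it also gives continuity of the derivatives directly.
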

\begin{remark}
The correctness of Lemma \ref{lemma:twice-dif} is attributed to the smoothness of function $K$ defined by Eq. (\ref{eq:K(x)}). According to the Lebesgue decomposition, the function $f$ can be decomposed into a continuous part and a jump part, and the function $K$ somewhat ``smoothens out" the impact of the jump part of $f$. Additionally, Assumption \ref{ass:L2} is also vital, as it not only ensures the applicability of the martingale representation theorem but also provides a foundation for the use of the dominated convergence theorem (abbr. DCT) in the proof.
\end{remark}
For a matrix $\boldsymbol{\sigma}$, we define $\left|\left|\cdot \right|\right|_2$ as $||\boldsymbol{\sigma}||_2=\sqrt{tr(\boldsymbol{\sigma}\boldsymbol{\sigma}^\top)}$. For a random vector $\boldsymbol{\pi}_t = (\pi_t^1,\pi_t^2,\cdots,\pi_t^m)$, the notation $\boldsymbol{\pi}_{n,t} \xrightarrow{L^r(\Omega;\R^m)} \boldsymbol{\pi}_t$ ($r\in [1,2)$) means that each component of the random vector $\boldsymbol{\pi}_{n,t}$ converges to the corresponding component of $\boldsymbol{\pi}_{t}$ in $L^r(\Omega,\R)$ (which is simply denoted by $L^r(\Omega)$). For a fixed $t\in [0,T]$, we define $\boldsymbol{\pi}_n \xrightarrow{L^1(\Omega \times [0,T])} \boldsymbol{\pi}$ and $\boldsymbol{\pi}_{n,t} \xrightarrow{a.s.} \boldsymbol{\pi}_t$ in the similar way. To prove the a.s. convergence of a random variable, such as the fitted terminal wealth, we typically treat it as a real-valued function of $\xi_t$ and demonstrate that this real function converges a.e..
\begin{theorem}\label{th:co-X}
We have the convergence results as follows.\\
(1) For every $r\in[1,2)$, the optimal wealth process satisfies  $X_{n,t}^* \overset{L^r(\Omega)}{\underset{\text{a.s.}}{\longrightarrow}} X_t^*$,\   a.e.\  $t\in[0,T]$ and $X_{n}^* \xrightarrow{L^r(\Omega \times [0,T])}  X^*$. \\
(2) For every $r\in[1,2)$, the optimal portfolio satisfies  $\boldsymbol{\pi}_{n,t}^* \overset{L^r(\Omega)}{\underset{\text{a.s.}}{\longrightarrow}} \boldsymbol{\pi}_t^*$\ a.e. \ $t\in[0,T)$ and $\boldsymbol{\pi}_{n}^{*} \xrightarrow{L^1(\Omega \times [0,T])}  \boldsymbol{\pi}^{*}$.
\end{theorem}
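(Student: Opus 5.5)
The plan is to turn every stochastic statement into a statement about deterministic functions of $(t,\xi)$, using the Markov structure of $\xi$. By Lemma \ref{lemma:twice-dif}, $X_t^*=\mathcal{X}^*(t,\xi_t)$, where
\[
\mathcal{X}^*(t,\xi)=\int_0^\infty x f(\nu^*\xi x)\,\d F_t(x),
\]
and $F_t$ has a density that is essentially $K(x,t)/x$ from Lemma \ref{prop:2}. The analogous formula $\mathcal{X}_n(t,\xi)=\int_0^\infty x f_n(\nu_n^*\xi x)\,\d F_t(x)$ holds for $X_{n,t}^*$. The portfolio formula of Lemma \ref{lemma:twice-dif}(2) holds for $V_n$ as well, with $\mathcal{X}_n$ in place of $\mathcal{X}^*$.

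\textbf{Convergence of the wealth.} First I show $f_n(\nu_n^*\,\cdot)\to f(\nu^*\,\cdot)$ pointwise at every continuity point of $f$, hence a.e. For this I use Lemma \ref{nu_covergence}(1),(3). Uniform convergence of the concave functions $V_n\to V$ on compacts gives convergence of the right derivatives at differentiability points of $V$. Since $T_L=T_1\circ D_+$ (Theorem \ref{th:TL}) and generalized inverses of monotone functions converge at continuity points, this yields $f_n\to f$ at continuity points of $f$. Together with $\nu_n^*\to\nu^*$ and the monotonicity of $f$, it yields the pointwise statement.

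Then, for fixed $t$ and $\xi$, I pass to the limit in the integral. The exponent $r_3\in(1,2)$ bound in Lemma \ref{prop:prop:1}—more precisely Lemma \ref{prop:1} with $r_1=r_3$, $r_2=0$ and some $r_3\in(1,2)$—gives a uniform bound $\sup_n\int x^{r_3}f_n(\nu_n^*\xi x)^{r_3}\,\d F_t(x)<\infty$. This gives uniform integrability, so Vitali's theorem gives $\mathcal{X}_n(t,\xi)\to\mathcal{X}^*(t,\xi)$ for every $\xi>0$ and $t<T$, which is a.s. convergence of $X_{n,t}^*$.

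For $L^r(\Omega)$ convergence with $r<2$, I use Lemma \ref{lemma:L^2-co}. It gives $\xi_TX_{n,T}^*\to\xi_TX_T^*$ in $L^2$. Conditional Jensen then gives $\xi_tX_{n,t}^*\to\xi_tX_t^*$ in $L^2$. Finally, Hölder with $\xi_t^{-1}\in L^q$ for all $q$ (lognormal) converts this into $L^r$ convergence for every $r<2$.

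For $L^r(\Omega\times[0,T])$, I show that $\sup_t\E|X_{n,t}^*-X_t^*|^r$ tends to $0$. This follows because the $L^2$ bound from conditional Jensen is uniform in $t$, and $\sup_t\|\xi_t^{-1}\|_{q}<\infty$. Fubini then finishes.

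\textbf{Convergence of the portfolio.} By Lemma \ref{lemma:twice-dif}(2) it suffices to treat $\xi\,\partial_\xi\mathcal{X}_n(t,\xi)$. I substitute $y=\xi x$ to move the $\xi$-dependence into the smooth kernel:
\[
\mathcal{X}_n(t,\xi)=\xi^{-1}\int y f_n(\nu_n^*y)\,\xi^{-1}K(y/\xi,t)\,\frac{\d y}{y/\xi}.
\]
I then differentiate under the integral, using Lemma \ref{prop:2}. The derivative of $K$ is $K$ times a finite sum of terms $x^{\alpha}(\ln x)^{\beta}$. Hence $\xi\partial_\xi\mathcal{X}_n$ becomes a finite combination of integrals of the form $\int x^{r_1}(\ln x)^{r_2}f_n(\nu_n^*\xi x)\,\d F_t(x)$. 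Each of these converges by the same Vitali argument, with uniform integrability again supplied by Lemma \ref{prop:1}. This gives a.s. convergence of $\boldsymbol{\pi}_{n,t}^*$ for each $t<T$. For $L^r(\Omega)$, I bound $|\xi\partial_\xi\mathcal{X}_n(t,\xi)|^r$ through Hölder and the same integrals, and use uniform integrability in $\omega$.

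\textbf{Main obstacle.} The main difficulty is the $L^1(\Omega\times[0,T])$ statement for the portfolio. The derivative of $K$ carries factors $(T-t)^{-1/2}$ that blow up as $t\to T$, so the per-$t$ bounds are not uniform. I would first try to use the martingale representation instead. From Remark \ref{remark:supermartingale}, $\xi_t(\boldsymbol{\pi}_t^\top\boldsymbol{\sigma}-X_t\boldsymbol{\theta}^\top)$ is the integrand of the martingale $\E[\xi_TX_T\mid\F_t]$. The Itô isometry and Lemma \ref{lemma:L^2-co} then give
\[
\E\int_0^T\xi_t^2\bigl\|(\boldsymbol{\pi}_{n,t}^*-\boldsymbol{\pi}_t^*)^\top\boldsymbol{\sigma}-(X_{n,t}^*-X_t^*)\boldsymbol{\theta}^\top\bigr\|_2^2\,\d t\to 0.
\]
Combining this with the wealth convergence already established, Cauchy–Schwarz against $\xi_t^{-1}$ removes the weight and yields $L^1(\Omega\times[0,T])$ convergence. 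This route avoids the endpoint singularity altogether.
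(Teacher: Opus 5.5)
Your proposal is correct and follows essentially the paper's route. It uses a.s.\ convergence of $f_n(\nu_n^*\xi_T)$ at continuity points of $f$, then $L^r$ convergence of the wealth from Lemma \ref{lemma:L^2-co} via conditional Jensen and moments of $\xi_t^{-1}$, then fixed-$t$ a.s.\ convergence of wealth and portfolio through the uniform integrability from Lemma \ref{prop:1}, and finally the $L^1(\Omega\times[0,T])$ portfolio convergence via martingale representation, It\^o's isometry and Cauchy--Schwarz, exactly as in the paper's third step. Your minor variations are sound and slightly cleaner: you obtain the pointwise limit from uniform convergence of the concave $V_n$ and continuity of generalized inverses rather than the partition sandwich, and you treat every $r\in[1,2)$ by H\"older instead of reducing to $r=1$.
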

\begin{proof}
Without loss of generality, we only prove the result  when $r=1$. We prove the following various senses of convergence one by one. We first prove that
 $X_{n,T}^* \overset{\text{a.s.}}{\longrightarrow} X_T^*$ holds.

Noting that $\nu^*$ is a deterministic constant, we suppose that  $x= \nu^*\xi_T(\omega)$ is a continuous point of $f \in \mathcal{C}$, and all of these $\omega$ consist of a set of probability measures $1$ because  the set of discontinuities of $f$ forms at most a countable set. For any $0<\epsilon<f(\nu^*\xi_T)$, there exists $\delta_1>0$ such that for any $k_0\in (\nu^*\xi_T-\delta_1,\nu^*\xi_T+\delta_1)$ and $k_0>0$, we have $$f(k_0) \in (f(\nu^*\xi_T)-\epsilon,f(\nu^*\xi_T)+\epsilon).$$ There exists $0<\delta_2<\delta_1$ such that for any $k_1\in (\nu^*\xi_T-\delta_2,\nu^*\xi_T+\delta_2)$, we have 
$$
f(k_1) \in \(f(\nu^*\xi_T)-\frac{\epsilon}{2},f(\nu^*\xi_T)+\frac{\epsilon}{2}\).
$$ 
By Lemma \ref{nu_covergence}, there exists $N_1\in \N^*$ such that for $n>N_1$, we have $\nu^*_n\xi_T(\omega) \in (\nu^*\xi_T-\delta_2,\nu^*\xi_T+\delta_2)$. Let $N_2$ be the integer satisfying the partition radius $d_{N_2} < \epsilon/4$. Then for any $n > \max\{N_1,N_2\}$, there exist $s_i^n\in (f(\nu^*\xi_T)-\epsilon,f(\nu^*\xi_T)-\epsilon/2)$ and $s_j^n \in (f(\nu^*\xi_T)+\epsilon/2,f(\nu^*\xi_T)+\epsilon)$ such that 
$\nu^*_n\xi_T \in [s_i^n,s_j^n]$. Then $f_n(\nu^*_n\xi_T)\in [f_n(s_i^n),f_n(s_j^n)]=[f(s_i^n),f(t_i^n)]$. Thus, $$f_n(\nu^*_n\xi_T)\in(f(\nu^*\xi_T)-\epsilon,f(\nu^*\xi_T)+\epsilon), $$ i.e.,  the arbitrariness of $\epsilon>0$ yields $f_n(\nu^*_n\xi_T) \to f(\nu^*\xi_T)$ a.s..  Using DCT, we obtain the first result. \\
Second, we prove that 
 $X_{n,t}^* \overset{L^r(\Omega)}{\longrightarrow} X_t^*$  a.e. $t\in[0,T)$ and $X_{n}^* \xrightarrow{L^r(\Omega \times [0,T])}  X^*$ hold. \\
Using Lemma \ref{lemma:L^2-co}, we have 
$$\lim\limits_{n\to \infty}\E\left[(\xi_Tf(\nu^*\xi_T)-\xi_Tf_n(\nu^*_n\xi_T))^2\right] =0.$$ 
Using Cauchy-Schwarz's inequality yields
$$
\E\left[\left|X_{n,T}^*-X_T^*\right|\right]=\E\left[\left|f_n(\nu_n^*\xi_T)-f(\nu^*\xi_T)\right|\right] \leq \(\E\left[(\xi_Tf_n(\nu^*_n\xi_T)-\xi_Tf(\nu^*\xi_T))^2\right]\)^{\frac{1}{2}}\cdot \(\E\left[\xi_T^{-2}\right]\)^{\frac{1}{2}},
$$ 
which indicates  $X_{n,T}^* \overset{L^1(\Omega)}{\longrightarrow} X_T^*$. Moreover, as $X_t^*=\xi_t^{-1}\E\left[\xi_Tf(\nu^*\xi_T)|\F_t\right]$, we have 
\begin{align*}\small
&\mathbb{E}\left[ \left| \xi_t^{-1}\mathbb{E}[\xi_TX_{T}^* | \mathcal{F}_t]  - \xi_t^{-1}\mathbb{E}[\xi_TX_{n,T}^*| \mathcal{F}_t] \right| \right] 
\leq \mathbb{E}\left[  \mathbb{E}\left[\frac{\xi_T}{\xi_t}\left|X_{T}^* -X_{n,T}^*\right| \Big{|} \mathcal{F}_t\right]\right] \\
&= \mathbb{E}\left[ \frac{\xi_T}{\xi_t}\left|X_{T}^* - X_{n,T}^*\right| \right]\leq\E\left[\xi_t^{-2}\right]^{\frac{1}{2}} \cdot \E\left[|\xi_T^2 (X_T^*-X_{n,T}^*)|^2\right]^{\frac{1}{2}},
\end{align*}
which implies $X_{n,t}^* \overset{L^1(\Omega)}{\longrightarrow} X_t^*$ for any $t\in[0,T)$. In addition, using Fubini's theorem, we obtain
$$\E\left[\int_0^T\left|X_{n,t}^*-X_t^*\right| \d t\right]\leq 
 \(\E\left[|\xi_T^2(X_T^*-X_{n,T}^*)|^2\right]\)^{\frac12}\cdot \int_0^T \(\E\left[\xi_t^{-2}\right]\)^{\frac12} \d t,
$$
which implies $X_{n}^* \xrightarrow{L^1(\Omega \times [0,T])}  X^*$. \\
Third, we prove $\boldsymbol{\pi}_{n}^{*} \xrightarrow{L^1(\Omega \times [0,T])}  \boldsymbol{\pi}^{*}$.  \\  According to the martingale representation theorem, we have that there exists a square integrable random process $\boldsymbol{\psi}_t$ such that $\E\left[\xi_Tf(\nu^*\xi_T)\right|\F_t]=\int_0^t \boldsymbol{\psi}_s \d \mathbf{W}_s$, and we can similarly define $\{\boldsymbol{\psi}_t^n\}_{\{t\in [0,T]\}}$. Using Itô's isometry, we obtain
$$\E\left[\int_0^T ||\boldsymbol{\psi}_t^n-\boldsymbol{\psi}_t||^2_2 \d t\right]=\E\left[(\xi_TY_T^n-\xi_TY_T)^2\right]=\E\left[(\xi_Tf_n(\nu^*_n\xi_T)-\xi_Tf(\nu^*\xi_T))^2\right].$$
Using Fubini's theorem and Cauchy-Schwarz's inequality, 
\begin{align*}\small
\E\left[\int_0^T \left|\left|\xi_t^{-1}\boldsymbol{\sigma}^{-1}(\boldsymbol{\psi}^n_t-\boldsymbol{\psi}_t)\right|\right|_2 \d t\right]
&\leq\left|\left|\boldsymbol{\sigma}^{-1}\right|\right|_2\cdot\left(\int_0^T\E\left[\xi_t^{-2}\right] \d t\right)^{\frac12}\cdot\left(\int_0^T\E\left[\left|\left|\boldsymbol{\psi}^n_t-\boldsymbol{\psi}_t\right|\right |^2_2\right] \d t\right)^{\frac12}.
\end{align*}
Thus, $$\lim\limits_{n\to\infty}\E\left[\int_0^T \left|\left|\xi_t^{-1}\boldsymbol{\sigma}^{-1}(\boldsymbol{\psi}^n_t-\boldsymbol{\psi}_t)\right|\right|_2\d t\right] 
= 0.$$
Using Fubini's Theorem and Jensen's inequality,  
\begin{align*}\small
\E\left[\int_0^T \left|\left|\left(Y_t^n-Y_t\right)\boldsymbol{\sigma}^{-1}\boldsymbol{\theta}\right|\right|_2 \d t\right]
&\leq \left|\left|\boldsymbol{\sigma}^{-1}\boldsymbol{\theta}\right|\right|_2 \cdot \int_0^T\mathbb{E}\left[\E\left[ \frac{\xi_T}{\xi_t}\left|f(\nu^*\xi_T) - f_n(\nu^*_n\xi_T)\right|\Big|\F_t\right] \right]\d t
\\
&\leq \left|\left|\boldsymbol{\sigma}^{-1}\boldsymbol{\theta}\right|\right|_2\cdot\(\E\left[(\xi_TY_T^n-\xi_TY_T)^2\right]\)^{\frac12}\cdot\int_0^T \(\E\left[\xi_t^{-2}\right]\)^{\frac{1}{2}}\d t.
\end{align*}
Thus, according to Theorem 6.3 in \cite*{KLSX1991}, we immediately obtain
$$\E\left[\int_0^T \left|\left|\boldsymbol{\pi}_{n,t}^*-\boldsymbol{\pi}_t^*\right|\right|_2\d t\right]=\E\left[\int_0^T\left|\left|\xi_t^{-1}\boldsymbol{\sigma}^{-1}(\boldsymbol{\psi}^n_t-\boldsymbol{\psi}_t)+(Y_t^n-Y_t)\boldsymbol{\sigma}^{-1}\boldsymbol{\theta}\right|\right|_2\d t\right] \to 0\ ( n\to\infty).$$
Fourth, we prove \textbf{ $X_{n,t}^* \overset{\text{a.s.}}{\longrightarrow} X_t^*$ \ a.s. \ $t\in[0,T]$.} 

We have $X_t^*=\xi_t^{-1}\E\left[\xi_Tf(\nu^*\xi_T)|\F_t\right]=\int_0^\infty xf(\nu^*\xi_tx)\d F_t(x)$. By Lemma \ref{prop:1}, for any fixed $t\in[0,T)$, and $\xi_t\in(0,\infty)$, we have $\left|\int_{0}^{\infty}\left[xf(\nu^*\xi_tx)\right]^{\frac43}\d F_{t}(x)\right|$ has a uniform upper bound. Therefore, we obtain $X_{n,t}^* \to X_t^*$ a.s. based on the properties of uniform integrability and the fact that $f_n(\nu^*_n\xi_tx) \to f(\nu^*\xi_tx)$ a.e. for a fix $\xi_t$. \\
At last, we prove  $\boldsymbol{\pi}_{n,t}^* \overset{L^r(\Omega)}{\underset{\text{a.s.}}{\longrightarrow}} \boldsymbol{\pi}_t^*$ \ a.s.\  $t\in[0,T)$.
\\
In order to verify the a.s. convergence, we only need to prove $\int_0^\infty f_n(\nu^*_n\xi_tx)(xK(x))'\d x$ converge to $\int_0^\infty f(\nu^*\xi_tx)(xK(x))'\d x$ as $n \to \infty$ based on the proof of Lemma \ref{lemma:twice-dif}. Imitating the proof of the fourth step, we easily conclude the result. As for the $L^1(\Omega)$ convergence, using
\begin{align*}\small
&\int_0^\infty f_n(\nu^*_n\xi_tx)(xK(x))'\d x=\int_0^\infty f_n(\nu^*_n\xi_tx)\left[x+\sum_{i\leq\tau_1}c_i^{n}x^{\alpha_i^n+2}(\ln x)^{\beta_i^n}\right] \d F_{t}(x)\\&=\E\left[\frac{\xi_T}{\xi_t}f_n(\nu^*_n\xi_T)\Big|\F _t\right]+\sum_{i\leq \tau_1}c_i^n\E\left[\left(\frac{\xi_T}{\xi_t}\right)^{\alpha_i^n+2}\left(\ln \frac{\xi_T}{\xi_t}\right)^{\beta_i^n}f_n(\nu^*_n\xi_T)\Big| \F_t\right],
\end{align*}
and the result of Lemma \ref{prop:1}, and imitating the proof of the second step, we have that each of the above terms converges in $L^1(\Omega)$. Thus the final result follows.
\end{proof}
\begin{remark}\label{remark:convergence}
From the fact that $X_{n,t}^* \overset{L^r(\Omega)}{\underset{\text{a.s.}}{\longrightarrow}} X_t^*$ and $\pi_{n,t}^{*,d} \overset{L^r(\Omega)}{\underset{\text{a.s.}}{\longrightarrow}}\pi_t^{*,d}$ hold for any $t\in [0,T]$, it can be seen that the optimal wealth process and the optimal portfolio exhibit convergence of random variables at fixed time points. This indicates that, the wealth generated by the PHARA-approximated portfolio can be arbitrarily close to that of the optimal portfolio at each time point. From $X_{n}^* \xrightarrow{L^r(\Omega \times [0,T])}  X^*$ and $\pi_{n}^{*,d} \xrightarrow{L^r(\Omega \times [0,T])}  \pi^{*,d}$, we observe that they satisfy the convergence of trajectories as a whole. This suggests that, throughout the entire investment process, the expectation of the cumulative error resulting from the PHARA-approximated portfolio can be made arbitrarily small.
\end{remark}
\subsection{Convergences of the Preference-fitting Method and PHARA Approximation}\label{subsec:fittingrate}
We return to the preference-fitting method introduced in Section \ref{sec:fittingmethod}, establishing and summarizing its convergences. Assume the optimal wealth process of the investor is given by $\{X_t^*\}_{0\le t\le T}$ and the optimal portfolio is given by $\{\boldsymbol{\pi_t}\}_{0\leq t\leq T}$.

If $\max\limits_{i=0,1,\cdots, n-1} \{|\xi_i^n-\xi_{i+1}^n| \} \to 0$, as $n\to \infty$, it is easy to verify that $X_T^n(\xi^-)$ and $X_T^n(\xi^+)$ exist for any $\xi>0$ due to the monotonicity of $\{y_i^n\}_{0\leq i\leq n}$, and thus $X_T^n$ converge a.s. (to a certain $X_T^*$) due to the fact that the discontinuous points of $X_T^*$ form a countable set. In addition, we have $$\E\left[(\xi_TX_T^n-\xi_TX_T^*)^2\right]\leq 2y_1^1\cdot\E[\xi_T]\cdot\E[\left|\xi_TX_T^n-\xi_TX_T^*\right|].$$ Thus, we conclude $X_T^n\to X_T^*$ in  $L^r(\Omega,\F, \p)$, $r\in[1,2)$ and $\E\left[(\xi_TX_T^n-\xi_TX_T^*)^2\right] \to 0$ as $n\to \infty$. Because $X_T^n\in \mathcal{C}$, using a similar way to Theorem \ref{th:co-X}, we obtain the convergence in the sense of a.s., $L^r$ ($r\in[1,2)$), and $L^1(\Omega \times [0,T])$ for the fitted wealth process and the fitted portfolio.

For fixed $t\in[0,T]$, when analyzing uniform convergence with respect to $\xi_t$ (market conditions), we treat the random variables as real-valued functions of $\xi_t$, with the convergences pertaining to the real-valued functions defined on $(0,\infty)$. We have the following results.
\begin{theorem}\label{th:fitcon}
Assume $X_t^n$ and $\boldsymbol{\pi}_t^n$ are given in Definition \ref{def:fitXpi}.
If $\max\limits_{\{i=0,1,\cdots, n-1\}} \{|\xi_i^n-\xi_{i+1}^n| \} \to 0$ as $n\to \infty$, we have the following results.\\
(1) $X_t^n\to X_t^*$ and $\boldsymbol{\pi}_t^n\to \boldsymbol{\pi}_t^* $ a.e. uniformly on $(0,\infty)$ for each $t\in [0,T)$.\\
(2) In addition, if $\X_T^*$ is continuous on $(0,\infty)$, we have $X_t^n\to X_t^*$ and $\boldsymbol{\pi}_t^n\to \boldsymbol{\pi}_t^* $ a.e. uniformly on $ (0,\infty)$ for each $t\in[0,T]$.
\end{theorem}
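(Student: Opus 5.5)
We treat everything as real functions of $\xi=\xi_t$ on $(0,\infty)$, using the representation from Lemma \ref{lemma:twice-dif}:
\[
\mathcal{X}^n(t,\xi)=\int_0^\infty x\,\mathcal{X}_T^n(\xi x)\,K(x,t)\,\d x ,
\qquad
\boldsymbol{\pi}^n_t=-\xi\,\partial_\xi\mathcal{X}^n(t,\xi)\,(\boldsymbol{\sigma}^\top)^{-1}\boldsymbol{\theta},
\]
and similarly for $\mathcal{X}^*$. The first step is a pointwise bound on the terminal error $e_n(\xi):=|\mathcal{X}_T^n(\xi)-\mathcal{X}_T^*(\xi)|$. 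Outside $[\xi^{p_0^1},\xi^{p_1^1}]$ both functions equal $\kappa_0^1/\xi$ or $\kappa_1^1/\xi$ by Step 1, so $e_n\equiv0$ there. On the compact interval $[\xi^{p_0^1},\xi^{p_1^1}]$, both $\mathcal{X}_T^n$ and $\mathcal{X}_T^*$ are decreasing, bounded by $y_0^1$, and agree at the grid points $\xi^{p_i^n}$. Hence on each cell $[\xi^{p_i^n},\xi^{p_{i+1}^n}]$ we have $e_n\le y_i^n-y_{i+1}^n$, which is the oscillation of $\mathcal{X}_T^*$ over that cell.

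Two consequences follow. If $\mathcal{X}_T^*$ is continuous, it is uniformly continuous on the compact interval, so $\sup_\xi e_n(\xi)\to0$ as the mesh goes to $0$; this yields (2) at $t=T$ directly. In general, $e_n\le y_0^1$ and $e_n\to0$ off the countable discontinuity set $D$ of $\mathcal{X}_T^*$. Moreover, for any $\eta>0$ only finitely many jumps exceed $\eta$, so
\[
\lambda\{\xi:e_n(\xi)>2\eta\}\to0 ,
\]
since it is contained in the cells touching a big jump, whose total length vanishes with the mesh.

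For $t<T$ I will pass the error through the kernel. Write
\[
|\mathcal{X}^n(t,\xi)-\mathcal{X}^*(t,\xi)|\le\int x\,e_n(\xi x)K(x,t)\,\d x=\frac1\xi\int u\,e_n(u)\,K(u/\xi,t)\,\frac{\d u}{\xi}.
\]
Because $e_n$ vanishes outside the fixed compact $[a,b]=[\xi^{p_0^1},\xi^{p_1^1}]$ and is bounded by $y_0^1$, this is at most
\[
\sup_{\xi>0}\,\sup_{u\in[a,b]}\frac{b\,K(u/\xi,t)}{\xi^2}\int_a^b e_n(u)\,\d u .
\]
The integral tends to $0$ by the measure statement above together with bounded convergence. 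The prefactor is finite for fixed $t<T$, because $K(u/\xi,t)$ decays like a lognormal density and kills every power of $\xi$ as $\xi\to0$ or $\xi\to\infty$, uniformly for $u\in[a,b]$. This gives uniform convergence of $X_t^n$.

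For $\boldsymbol{\pi}_t^n$, I differentiate under the integral, as in the proof of Lemma \ref{lemma:twice-dif}, after the change of variables $u=\xi x$. The $\xi$-derivative then falls on $K(u/\xi,t)/\xi^2$, and Lemma \ref{prop:2} shows it equals $K$ times finitely many terms $\xi^{\alpha}(\ln(u/\xi))^{\beta}$. The same bound then holds with a new prefactor $\sup_{\xi,u}\xi\,|\partial_\xi(\cdots)|$, which is again finite by Gaussian-in-$\ln$ decay. This settles (1) for every $t<T$, and also (2) for $t<T$. For $t=T$ under continuity, $\boldsymbol{\pi}_T$ is defined through the limit of these expressions; I would argue that the representation extends, or interpret $\boldsymbol{\pi}_T$ as the limit $t\to T^-$. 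I also read ``a.e.\ uniformly'' as allowing exclusion of a null set, namely $D$ at $t=T$.

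The main obstacle is making the prefactor bound uniform over all $\xi\in(0,\infty)$. It is not uniform as $t\to T$, which is exactly why (1) excludes $t=T$ and why the continuity assumption is needed there. Establishing this requires carefully tracking the lognormal tails against the polynomial and log factors produced by Lemma \ref{prop:2}.
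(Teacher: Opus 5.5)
Your proposal is correct and follows essentially the same route as the paper: after the substitution $u=\xi_t x$ the terminal error is supported on the fixed compact interval $[\xi^{p_0^1},\xi^{p_1^1}]$, and the rescaled kernel is bounded uniformly in $\xi_t>0$ for fixed $t<T$, so $\int_{\xi^{p_0^1}}^{\xi^{p_1^1}}|\X_T^n-\X_T^*|\,\d x\to 0$ gives (1); the modulus of continuity of $\X_T^*$ gives (2); and the portfolio is handled by moving the $\xi$-derivative onto the kernel (your oscillation bound on each cell simply makes explicit the $L^1$ convergence that the paper takes for granted). Two small remarks: first, the paper's $K(x,t)$ already equals $x$ times the lognormal density of $\xi_T/\xi_t$, so your representation carries a superfluous factor $x$, which is harmless because only the lognormal tail decay is used; second, as you suspected, the portfolio claim at $t=T$ in (2) is not reached by this kernel argument, since the kernel-derivative bound blows up as $t\uparrow T$ and mere continuity of $\X_T^*$ gives no control on derivatives --- the paper's ``similar to the proof for $X_t^n$'' does not cover that case either.
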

\begin{proof}
(1) By the definition of $X_T^n$, we have
\begin{align*}
X_t^n-X_t^*&=\xi_t^{-1}\E[\xi_T\X_T^n(\xi_T)-\xi_T\X_T^*|\F_t]=\int_0^\infty [\X_T^n(\xi_t x)-\X_T^*(\xi_tx)]K(x,t)\d x\\&=\int_{\xi^{p_0^1}}^{\xi^{p_1^1}}\frac{1}{\xi_t^2}[\X_T^n(x)-\X_T^*(x)]K\Big(\frac{x}{\xi_t},t\Big)\d x.
\end{align*}
Direct calculation indicates that $\Big\{\Big|\frac{1}{\xi_t^2}K\Big(\frac{x}{\xi_t},t\Big)\Big|\Big\}_{\xi_t>0}$ is bounded by a constant $C>0$. Hence, we have
$$|X_t^n-X_t^*|\leq C \int_{\xi^{p_0^1}}^{\xi^{p_1^1}}|\X_T^n(x)-\X_T^*(x)|\d x\to 0\quad (n\to \infty).$$
(2) If $\X_T^*$ is continuous, we have
\begin{align*}
|X_t^n-X_t^*|&\leq \xi^{p_1^1}\cdot w\Big(\max\limits_{\{i=0,1,\cdots, n-1\}} \{|\xi_i^n-\xi_{i+1}^n| \}\Big)\cdot \int_0^\infty \frac1x K(x,t)\d x\\&=\xi^{p_1^1}\cdot w\Big(\max\limits_{\{i=0,1,\cdots, n-1\}} \{|\xi_i^n-\xi_{i+1}^n| \}\Big) \to 0 \quad (n\to \infty),
\end{align*}
where $w$ is the modulus of continuity for $\X_T^*$ on $[\xi^{p_0^1},\xi^{p_1^1}]$.

To show the uniform convergence of the fitted portfolio, we just need to prove the uniform convergence of $\int_0^\infty \X_n(\xi_tx)\frac{\partial(xK(x,t))}{\partial x}\d x$, according to Theorem \ref{th:co-X}. This is similar to the proof for $X_t^n$.
\end{proof}
In the case where $X_T^*$ is discontinuous (there exists linear parts in the elicited utility function), it is easy to see that the uniform convergence of $X_T^n$ fails at the discontinuity points.
While according to Theorem \ref{th:fitcon}, as $K$ is smooth, if the time $t<T$, the convergence remains uniform. Moreover, if the terminal wealth function $\X$ is continuous, the convergence is uniform over the entire investment horizon.
\begin{remark}
Theorem \ref{th:fitcon} shows that, for any fixed time instant, the convergence is effective regardless of market conditions. Additionally, the proof of (2) suggests a stronger result: if $\X$ is continuous, then $X_t^n \to X_t^*$ and $\boldsymbol{\pi}_t^n \to \boldsymbol{\pi}_t^*$ a.e. uniformly with respect to $(t,\xi_t) \in [0,T] \times (0,\infty)$.
\end{remark}
Using Theorems \ref{th:co-X}-\ref{th:fitcon}, we immediately obtain the following corollary regarding convergence rates.
\begin{corollary}
Assume $\X_T^*\in C^1(0,\infty)$ and $\max\limits_{\{i=0,1,\cdots, n-1\}} \{|\xi_i^n-\xi_{i+1}^n| \}= O(\frac1n)$. The convergence rates are given by $|X_t^n-X_t^*|=O(\frac{1}{n})$, a.s., $\E[|X_t^n-X_t^*|]=O(\frac1n)$ and $\E\left[\int_0^T|X_t^n-X_t^*|\d t\right]=O(\frac1n)$ for any $t\in [0,T]$.
\end{corollary}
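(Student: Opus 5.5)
The plan is to reduce all three rates to one deterministic bound on the terminal wealth functions,
\[
\sup_{\xi\in[\xi^{p_0^1},\xi^{p_1^1}]}|\X_T^n(\xi)-\X_T^*(\xi)| = O\Big(\frac1n\Big),
\]
and then push this bound through the kernel representation used in the proof of Theorem \ref{th:fitcon}. Outside $[\xi^{p_0^1},\xi^{p_1^1}]$ the two functions coincide by Step 1, so only the normal-market interval matters.

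\textbf{Step 1 (sup bound).} On each subinterval $[\xi^{p_i^n},\xi^{p_{i+1}^n}]$, both $\X_T^n$ and $\X_T^*$ are decreasing and take the same endpoint values $y_i^n$ and $y_{i+1}^n$. Hence both lie in $[y_{i+1}^n,y_i^n]$, and
\[
|\X_T^n-\X_T^*|\le y_i^n-y_{i+1}^n=\X_T^*(\xi^{p_i^n})-\X_T^*(\xi^{p_{i+1}^n}).
\]
Since $\X_T^*\in C^1(0,\infty)$, its derivative is bounded by some $L$ on the compact interval $[\xi^{p_0^1},\xi^{p_1^1}]$. The mean value theorem then gives the bound $L\max_i|\xi_i^n-\xi_{i+1}^n|=O(1/n)$. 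This uses only monotonicity and interpolation, not the specific hyperbolic form.

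\textbf{Step 2 (pointwise bound).} As in the proof of Theorem \ref{th:fitcon}(2),
\[
|X_t^n-X_t^*|\le\int_0^\infty|\X_T^n(\xi_tx)-\X_T^*(\xi_tx)|\,K(x,t)\,\d x\le \sup|\X_T^n-\X_T^*|\int_0^\infty K(x,t)\,\d x .
\]
Here $K(\cdot,t)$ is the conditional density of $\xi_T/\xi_t$, so $\int_0^\infty K(x,t)\,\d x=1$. For $t=T$ the difference is evaluated directly at $\xi_T$. This gives $|X_t^n-X_t^*|\le L\max_i|\xi_i^n-\xi_{i+1}^n|$ for every $\omega$ and every $t\in[0,T]$, so the a.s. rate holds with a constant independent of $t$ and $\omega$.

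\textbf{Step 3 (expectation bounds).} Because the constant in Step 2 is deterministic, taking expectations gives $\E|X_t^n-X_t^*|=O(1/n)$ directly. Fubini then gives $\E\int_0^T|X_t^n-X_t^*|\,\d t\le T\cdot O(1/n)$. No Cauchy--Schwarz argument via Lemma \ref{lemma:L^2-co} is needed.

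The main obstacle is Step 1. I need to check that the interpolation nodes defining $\X_T^n$ really satisfy $\X_T^*(\xi^{p_i^n})=y_i^n$, which holds because the pairs come from $\p(X_T^*>y)=p$ and $\X_T^*$ is continuous. I also need to avoid depending on the mixed variables in the statement: the mesh hypothesis is phrased in terms of $\xi_i^n$, while the Lipschitz bound must be applied in the $\xi$ variable. If the mesh is instead measured in $p$, I would convert it using the fact that $p\mapsto\xi^p$ is smooth, with bounded derivative on $[p_1^1,p_0^1]$.
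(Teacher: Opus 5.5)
Your proposal is correct and follows the paper's route. The paper reads the corollary off the bound $|X_t^n-X_t^*|\le C\,w(\max_i|\xi_i^n-\xi_{i+1}^n|)$ in the proof of Theorem \ref{th:fitcon}(2): the $C^1$ hypothesis makes the modulus of continuity $w$ of $\X_T^*$ on $[\xi^{p_0^1},\xi^{p_1^1}]$ linear, and the deterministic constant gives the $L^1$ and $L^1(\Omega\times[0,T])$ rates at once. Your Step 1 just spells out the endpoint-interpolation argument the paper leaves implicit.

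One small slip: $K(\cdot,t)$ is $x$ times the conditional density of $\xi_T/\xi_t$, not the density itself (that is $K(x,t)/x$). Hence $\int_0^\infty K(x,t)\,\d x=\E[\xi_T/\xi_t]=e^{-r(T-t)}$ rather than $1$. This only changes the constant, which stays deterministic and bounded uniformly in $t$.
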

The various convergences for the preference-fitting method are summarized in Table \ref{tab:convergence}.
\begin{table}[h!]\small
    \centering
    \begin{tabular}{|c|c|c|c|c|c|c|}
        \hline
                & a.s. & $L^r(\Omega),r\in[1,2)$ & $L^1(\Omega\times [0,T])$  & uniform & locally uniform ($t<T$) \\ \hline
        with linear parts in $V$    & \checkmark    & \checkmark    & \checkmark        &     &\checkmark     \\ \hline
        without linear part    & \checkmark    & \checkmark    & \checkmark        & \checkmark    & \checkmark    \\ \hline
    \end{tabular}
    \caption{Convergences of $X_t^n$ and $\pi_{t}^{n,d}$ for the preference-fitting method. The uniform convergence is with respect to the real-valued functions of $(t,\xi_t)$, and the locally uniform convergence is with respect to $\xi_t$ for fixed $t$.}
    \label{tab:convergence}
\end{table}

\textbf{Step 1} introduced in the Preference-Fitting method, namely, specifying preferences under extreme market conditions, is a technical device that enables us to easily obtain stronger uniform convergence results. In contrast, for the PHARA approximation approach, the analysis of uniformity becomes more complicated. First, $0$ and $\infty$ corresponding to extreme market conditions might become two ``discontinuity points", which destroys the uniform convergence (the example in Subsection \ref{sec:ex_hyper} reflects the influence of this when $X_t^*$ lies in the first segment. See also Figure \ref{table:Td}). Second, the Lagrange multiplier is no longer fixed, but appears as a convergent sequence $\{\nu_n^*\}_{n\geq 1}$, whose error may affect the uniformity. Concerning the PHARA approximation, some conclusions on uniform convergence are as follows.

For PHARA approximated wealth process and portfolio, the uniform convergence holds with respect to $\xi_t\in [\epsilon,\frac1\epsilon]$ for each fixed $t\in [0,T)$ (and for $t=T$ if $f=T_L(V)$ is continuous). We provide a outline of proof. Note
\begin{align*}
|X_{n,t}^*-X_t^*|&=\Big|\int_0^\infty [f_n(\nu_n^*\xi_t x)-f(\nu^*\xi_t x)]K(x,t)\d x\Big|\\&=\Big|\int_0^\infty \frac{1}{\nu^*_n\xi_t}(f_n(x)-f(x))K\Big(\frac{x}{\nu_n^*\xi_t},t\Big)\d x+\int_0^\infty K\Big(x,t\Big)\Big(f(\nu_n^*\xi_tx)-f(\nu^*\xi_tx)\Big)\d x\Big|.
\end{align*}
For any fixed $t\in[0,T)$, there exists $\tilde{K}:\R^+\times [0,T]\to \R^+$ such that $K\Big(\frac{x}{\nu_n^*\xi_t},t\Big)\leq \tilde{K}(x,t)$ and $\int_0^\infty (f_n(x)-f(x))\tilde{K}\Big(x,t\Big)\d x\to 0$, $n\to \infty$ for any $x\geq 0$ and $\xi_t \in [\epsilon,\frac1\epsilon]$.
The function $\tilde{K}$ can be taken as
$$\tilde{K}(x,t)=\begin{cases}
K(\frac{x}{\nu_{inf}\epsilon},t), &0\leq x<x_1,\\
C_1,& x_1\leq x\leq x_2 ,\\
K(\frac{\epsilon x}{\nu_{sup}},t),&x_2<x,
\end{cases}$$
where $C_1$, $x_1$ and $x_2$ are constants with regarding $t$, $\epsilon$, $\nu_{inf}$ and $\nu_{sup}$.
Additionally, we can similarly prove $\Big|\int_0^\infty \Big(K\Big(\frac{x}{\nu_n^*\xi_t},t\Big)-K\Big(\frac{x}{\nu^*\xi_t},t\Big)\Big)f(x)\d x\Big |\leq a_n\to 0$, where $\{a_n\}_{n\geq 1}$ is independent of $\xi_t$. Thus, we obtain the uniform convergence. The above result for PHARA approximation approach suggests that when market conditions are particularly good or particularly bad, we need to increase the accuracy of the PHARA approximation to ensure a good fitting performance.

Moreover, for any $\epsilon>0$, there exists $I\subset [0,T]\times (0,\infty)$ such that $\p((t,\xi_t)\in I)>1-\epsilon$ and the uniform convergence holds for $\{X_{n,t}^*\}_{n\geq 1}$ and $\{\boldsymbol{\pi}_{n,t}^*\}_{n\geq 1}$ with respect to $(t,\xi_t)\in I$. This is a direct corollary of Egoroff's theorem, indicating that after excluding a small-probability event, the method enjoys uniform convergence. We list the convergences for PHARA-approximation in Table \ref{tab:convergence2} below.
\begin{table}[h!]\small
    \centering
    \begin{tabular}{|c|c|c|c|c|c|c|}
        \hline
               a.s. & $L^r(\Omega),r\in[1,2)$ & $L^1(\Omega\times [0,T])$  & uniform & locally uniform ($\xi_t\in[\epsilon,\frac1\epsilon]$) & Egoroff convergence\\ \hline
             \checkmark    & \checkmark    & \checkmark        &     &\checkmark  & \checkmark   \\ \hline
    \end{tabular}
    \caption{Convergences for the PHARA-approximation and the preferece-fitting method without specifying extreme market conditions. The uniform convergence is with respect to $\xi_t$ for fixed $t$.}
    \label{tab:convergence2}
\end{table}
\subsection{The Explicit Expression for PHARA Approximation}\label{subsec:expl}
Having an explicit expression is an advantage of the PHARA approximation. Recalling Definition \ref{def:V_n}, we additionally define $s_{M(n)+1}^n=T(n)$, $s^n_{M(n)+2}=t_{M(n)+1}^n=\infty$, $\tau_{n,i}^{-}=(V_n)'_-(s_i^n)$, $\tau_{n,i}^+=(V_n)'_+(s_i^n)$ for $i=0,1,\cdots,M(n)+1$. We demonstrate the explicit expressions for the PHARA-approximated wealth and the PHARA-approximated portfolios in Theorems \ref{th:phara_results} and \ref{th:log_phara}, the proofs of which are similar to the proofs of Proposition 2 and Theorem 1 in \cite*{LLMV2024}.

\begin{theorem}\label{th:phara_results}
For the $n$-th division,  we have\\
(1) The PHARA-approximated terminal wealth is given by
\begin{equation}\small
\begin{aligned}
X_{n,T}^*=&\sum_{k=1}^{M(n)+1}\Bigg\{s_{k}^n\id_{\left\{\nu^*_n\xi_T\in \left(\tau_{n,k}^+,\tau_{n,k}^-\right)\right\}}
+\left(u_k^n+\left(\frac{\tau_{n,k}^+}{\nu^*_n\xi_T}\right)^{\frac{1}{1-\gamma_k^n}}\left(s_{k}^n-u_k^n\right)\right)\id_{\left\{\nu^*_n\xi_T\in\left(\tau_{n,k+1}^-,\tau_{n,k}^+\right)\right\}}\times \id_{\left\{\gamma_k^n< 1\right\}}
\Bigg\}.
\end{aligned}
\end{equation}
(2) The PHARA-approximated wealth process at $t\in [0,T)$ is given by
\begin{equation}
\begin{aligned}
X_{n,t}^*:=X_t^D+X_t^A+X_t^R=\sum_{k=1}^{M(n)+1}(X_{t,k}^D+X_{t,k}^A+X_{t,k}^R),
\end{aligned}
\end{equation}
where \begin{small}
\begin{align*}
&X_{t,k}^D=e^{-r(T-t)}s_k^n\left[\Phi\left(d_1\left(\frac{\tau_{n,k}^+}{\nu^*\xi_t}\right)\right)-\Phi\left(d_1\left(\frac{\tau_{n,k}^-}{\nu^*\xi_t}\right)\right)\right],\quad \\&
X_{t,k}^A=e^{-r(T-t)}u_k^n\left[\Phi\left(d_1\left(\frac{\tau_{n,k+1}^+}{\nu^*\xi_t}\right)\right)-\Phi\left(d_1\left(\frac{\tau_{n,k}^+}{\nu^*\xi_t}\right)\right)\right]\times \id_{\left\{\gamma_k^n < 1\right\}}, \\&
X_{t,k}^R=e^{-r(T-t)}(s_k^n-u_k^n)\frac{\Phi'\left(d_1\left(\frac{\tau_{n,k}^+}{\nu^*\xi_t}\right)\right)}{d^{\gamma_k^n}\left(\frac{\tau_{n,k}^+}{\nu^*\xi_t}\right)} 
\left[\Phi\left(d^{\gamma_k^n}\left(\frac{\tau_{n,k+1}^-}{\nu^*\xi_t}\right)\right)-\Phi\left(d^{\gamma_k^n}\left(\frac{\tau_{n,k}^+}{\nu^*\xi_t}\right)\right)\right]\times \id_{\left\{\gamma_k^n < 1\right\}},
\end{align*}
\end{small}
and
\begin{align}\small
    d_1(z)
    :=\frac{1}{-||\boldsymbol{\theta}||_2\sqrt{T-t}}\left(\log(z)+\left(r-\frac{{||\boldsymbol{\theta}||}_2^2}{2}\right)(T-t)\right),
    \label{eq_d1}
\quad d^{\gamma_k^n}(z)&:=d_1(z)-\frac{||\boldsymbol{\theta}||_2\sqrt{T-t}}{1-\gamma_k^n},z>0.
\end{align}
(3) The PHARA-approximated portfolio at time $[0,T)$ is given by
\begin{equation}
\begin{aligned}
\boldsymbol{\pi}_{n,t}^*=(\boldsymbol{\sigma}^\top)^{-1}\boldsymbol{\theta}\sum_{k=0}^{M(n)+1}\Bigg\{\frac{1}{1-\gamma_k^n}X_{t,k}^R+ e^{-r(T-t)}\frac{s_{k+1}^n-s_k^n}{||\boldsymbol{\theta}||\sqrt{T-t}}\Phi'\left(d_1\left(\frac{\tau_{n,k}^+}{\nu_n^*\xi_t}\right)\right)\id_{\left\{\gamma_k^n=1\right\}}
\Bigg\}.
\end{aligned}
\end{equation}
\end{theorem}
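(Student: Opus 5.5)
The plan is to compute everything explicitly, segment by segment, starting from the representation $X_{n,T}^*=f_n(\nu_n^*\xi_T)$ with $f_n=T_L(V_n)=T_1\circ D_+(V_n)$ from Theorem \ref{th:TL}.

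\textbf{Part (1): the terminal wealth.} First I compute $D_+V_n$ on each segment $[s_k^n,s_{k+1}^n)$.

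For a power base, $(V_n)'_+(x)=a_k^n\gamma_k^n(x-u_k^n)^{\gamma_k^n-1}$ is continuous and strictly decreasing when $\gamma_k^n<1$. It runs from $\tau_{n,k}^+$ at $s_k^n$ down to $\tau_{n,k+1}^-$ at $s_{k+1}^n$. At a junction $s_k^n$ the right and left derivatives may differ. The logarithmic base ($\gamma=0$) is handled identically, and the last segment on $[T(n),\infty)$ is logarithmic.

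Applying $T_1(f;y)=\sup\{x:f(x)\ge y\}$ then splits $y$ into three regimes:
\begin{itemize}
\item if $y\in(\tau_{n,k}^+,\tau_{n,k}^-)$, i.e.\ $y$ falls in the jump of the derivative at $s_k^n$, the supremum equals $s_k^n$;
\item if $y\in(\tau_{n,k+1}^-,\tau_{n,k}^+)$, we solve $a_k^n\gamma_k^n(x-u_k^n)^{\gamma_k^n-1}=y$;
\item if $\gamma_k^n=1$, the derivative is constant on the segment, so that segment only contributes a jump in $f_n$ and no interior values.
\end{itemize}
For the middle regime, the identity $a_k^n\gamma_k^n(s_k^n-u_k^n)^{\gamma_k^n-1}=\tau_{n,k}^+$ eliminates $a_k^n$ and gives
\[
x=u_k^n+\left(\frac{\tau_{n,k}^+}{y}\right)^{\frac{1}{1-\gamma_k^n}}(s_k^n-u_k^n).
\]
Substituting $y=\nu_n^*\xi_T$ and summing the indicators over $k$ yields (1). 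The endpoints form a null set, since $\xi_T$ has a continuous law.

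\textbf{Part (2): the wealth process.} I use $X_{n,t}^*=\xi_t^{-1}\E[\xi_T f_n(\nu_n^*\xi_T)\mid\F_t]$. Conditionally on $\F_t$, $\xi_T/\xi_t$ is lognormal and independent of $\F_t$, with density $K(\cdot,t)$ from Lemma \ref{prop:2}. Each indicator term in (1) therefore becomes a lognormal integral over an interval.

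The constant terms $s_k^n$ and $u_k^n$ integrate $x$ times the indicator. A change of measure absorbing the factor $x$ shifts the normal mean, giving $e^{-r(T-t)}$ times differences of $\Phi(d_1(\cdot))$. These are $X^D_{t,k}$ and $X^A_{t,k}$.

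The power term needs $\E\big[(\xi_T/\xi_t)^{1-\frac{1}{1-\gamma}}\id_{\{\cdots\}}\mid\F_t\big]$. Completing the square in the Gaussian exponent produces the shifted argument $d^{\gamma}=d_1-\|\boldsymbol\theta\|_2\sqrt{T-t}/(1-\gamma)$. It also produces a multiplicative constant, which I rewrite as $\Phi'(d_1(\tau_{n,k}^+/\nu\xi_t))/\Phi'(d^{\gamma}(\cdot))$-type ratios to reach the stated form of $X^R_{t,k}$. This is the same computation as Proposition 2 in \cite*{LLMV2024}, which I would follow line by line.

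\textbf{Part (3): the portfolio.} I use Lemma \ref{lemma:twice-dif}(2), $\boldsymbol\pi_t=-\xi_t\,\partial_{\xi}\mathcal X(t,\xi_t)\,(\boldsymbol\sigma^\top)^{-1}\boldsymbol\theta$, and differentiate the closed form from (2) in $\xi_t$.

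The key observation is that terms from adjacent pieces telescope. At each $s_k^n$, the boundary derivatives coming from $\Phi(d_1(\cdot))$ in $X^D$, $X^A$ and $X^R$ cancel, because $f_n$ is continuous across the boundaries of each interval where $\gamma<1$. What remains is:
\begin{itemize}
\item the interior contribution $\frac{1}{1-\gamma_k^n}X^R_{t,k}$, since $-\xi\partial_\xi$ of $\xi^{-1/(1-\gamma)}$ times the $\Phi$-difference reproduces the factor $1/(1-\gamma)$;
\item the genuine jump contributions from linear pieces ($\gamma=1$), namely $(s_{k+1}^n-s_k^n)\Phi'(d_1)/(\|\boldsymbol\theta\|\sqrt{T-t})$ discounted.
\end{itemize}
This is the analogue of Theorem 1 in \cite*{LLMV2024}.

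The main obstacle is this bookkeeping in (3). One must check carefully which boundary terms cancel, using $f_n(\tau^-_{n,k+1}\text{ side})=s_{k+1}^n$ and $f_n(\tau^+_{n,k}\text{ side})=s_k^n$. One must also track the $\nu_n^*$ versus $\nu^*$ notation, which I would unify to $\nu_n^*$ throughout.
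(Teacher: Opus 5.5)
Your proposal is correct and takes essentially the same route as the paper. The paper proves this theorem only by pointing to Proposition 2 and Theorem 1 of \cite*{LLMV2024}, which proceed as you do: invert $D_+V_n$ piecewise via Theorem \ref{th:TL}, compute lognormal conditional expectations segment by segment, then differentiate in $\xi_t$, with boundary terms telescoping except at the linear ($\gamma_k^n=1$) pieces. Carried out carefully, your computation gives $\tau_{n,k+1}^-$ rather than $\tau_{n,k+1}^+$ in $X^A_{t,k}$, and $\Phi'\bigl(d^{\gamma_k^n}(\cdot)\bigr)$ rather than $d^{\gamma_k^n}(\cdot)$ in the denominator of $X^R_{t,k}$; these are evidently typos in the printed statement (also note that $K(\cdot,t)$ is $x$ times the conditional density of $\xi_T/\xi_t$, not the density itself).
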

In the PHARA approximation method, we sometimes only use the logarithmic base for convenience in calculations, and thus we have the following more specific result.
\begin{theorem}\label{th:log_phara}
If  $\gamma_i^n=0$ or $1$, then the PHARA-approximated portfolio is given by
\begin{equation}\label{eq:pi}
\small\begin{aligned}
\boldsymbol{\pi}_{n,t}^*&=\underbrace{(\boldsymbol{\sigma}^{\top})^{-1}\boldsymbol{\theta}X_{n,t}^*}_{\text{Merton term}}\quad+\underbrace{\frac{e^{-r(T-t)}}{\sqrt{T-t}}\cdot\frac{(\boldsymbol{\sigma}^\top)^{-1}\boldsymbol{\theta}}{||\boldsymbol{\theta}||}\sum_{k=0}^{M(n)+1}(s_{k+1}^n-s_k^n)\Phi'\left(d_1\left(\frac{\tau_{n,k}^+}{\nu_n^*\xi_t}\right)\right)\id_{\left\{\gamma_k^n=1\right\}}}_{\text{risk seeking}}\\&\quad -\underbrace{e^{-r(T-t)}(\boldsymbol{\sigma^\top})^{-1}\boldsymbol{\theta}\sum_{k=0}^{M(n)+1}u_k^nq_k^n\id_{\left\{\gamma_i^n=0\right\}}}_{\text{loss aversion}}-\underbrace{e^{-r(T-t)}(\boldsymbol{\sigma}^\top)^{-1}\boldsymbol{\theta}\sum_{k=0}^{M(n)+1}s_k^np_k^n}_{\text{first-order risk aversion}}\\&:=\boldsymbol{\pi}_t^{(1)}+\boldsymbol{\pi}_t^{(2)}+\boldsymbol{\pi}_t^{(3)}+\boldsymbol{\pi}_t^{(4)},
\end{aligned}
\end{equation}
where 
\begin{equation}\small
\begin{aligned}
p_k^n:=\Phi\left(d_1\left(\frac{\tau_{n,k}^+}{\nu_n^*\xi_t}\right)\right)-\Phi\left(d_1\left(\frac{\tau_{n,k}^-}{\nu_n^*\xi_t}\right)\right), \quad 
&q_k^n:=\Phi\left(d_1\left(\frac{\tau_{n,k+1}^-}{\nu_n^*\xi_t}\right)\right)-\Phi\left(d_1\left(\frac{\tau_{n,k}^+}{\nu_n^*\xi_t}\right)\right).
\end{aligned}
\end{equation}
\end{theorem}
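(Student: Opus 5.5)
Specialize Theorem \ref{th:phara_results}(3) to $\gamma_k^n\in\{0,1\}$ and rewrite the sum so that the Merton term $(\boldsymbol{\sigma}^\top)^{-1}\boldsymbol{\theta}X_{n,t}^*$ appears explicitly. For $\gamma_k^n=1$ the segment is linear, so $f_n=T_L(V_n)$ has a jump there and contributes no $X_{t,k}^R$ term. Only the $\Phi'$ term survives, which is exactly $\boldsymbol{\pi}_t^{(2)}$ (risk seeking). For $\gamma_k^n=0$ the coefficient $\frac{1}{1-\gamma_k^n}$ equals $1$, so the continuous part of the portfolio is simply $(\boldsymbol{\sigma}^\top)^{-1}\boldsymbol{\theta}\sum_k X_{t,k}^R$.

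Next I use the decomposition $X_{n,t}^*=\sum_k (X_{t,k}^D+X_{t,k}^A+X_{t,k}^R)$ from Theorem \ref{th:phara_results}(2) and write
$$\sum_k X_{t,k}^R = X_{n,t}^*-\sum_k X_{t,k}^A-\sum_k X_{t,k}^D .$$
Multiplying by $(\boldsymbol{\sigma}^\top)^{-1}\boldsymbol{\theta}$ produces the Merton term and two subtracted pieces. The $X^A$ terms exist only on logarithmic segments, because of the indicator $\id_{\{\gamma_k^n<1\}}$. Writing $X_{t,k}^A=e^{-r(T-t)}u_k^n q_k^n$, with $q_k^n$ the $\Phi$-difference on the interval $(\tau_{n,k+1}^-,\tau_{n,k}^+)$ where $f_n$ is hyperbolic, gives the loss-aversion term $\boldsymbol{\pi}_t^{(3)}$. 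Similarly $X_{t,k}^D=e^{-r(T-t)}s_k^n p_k^n$, coming from the atoms where $\nu_n^*\xi_T\in(\tau_{n,k}^+,\tau_{n,k}^-)$, gives the first-order risk-aversion term $\boldsymbol{\pi}_t^{(4)}$.

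The main obstacle is bookkeeping consistency rather than any new analysis. First, the argument of $d_1$ in Theorem \ref{th:phara_results}(2) uses $\nu^*$, whereas the portfolio uses $\nu_n^*$; I would read both as $\nu_n^*$, since this is the multiplier of the $V_n$-problem. Second, the index of $\tau^\pm$ in $X_{t,k}^A$ should be aligned with the definition of $q_k^n$. Third, when $\gamma_k^n=1$ the $X^D$ atoms (which sit at the linear pieces) must still be counted in $\sum_k X_{t,k}^D$; they are, because $X^D$ carries no $\gamma$ indicator.

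As a sanity check, I would verify the formula directly via Lemma \ref{lemma:twice-dif}(2), $\boldsymbol{\pi}_{n,t}^*=-\xi_t\partial_{\xi}\mathcal{X}^n(t,\xi_t)(\boldsymbol{\sigma}^\top)^{-1}\boldsymbol{\theta}$, differentiating each piece term by term:
\begin{itemize}
\item For a log piece, $f_n(y)=u+\kappa/y$. The part $\kappa/y$ gives $-\xi\partial_\xi(\cdot)$ equal to itself, i.e.\ Merton-like. Boundary terms from the $\xi$-dependent limits cancel with the $u$-part by continuity of $f_n$ at the segment endpoints.
\item For a jump piece, differentiating $s_k^n\Phi(d_1(\cdot))$ produces the $\Phi'(d_1)/(\|\boldsymbol{\theta}\|\sqrt{T-t})$ terms. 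Those at matching thresholds combine into the differences $(s_{k+1}^n-s_k^n)$.
\end{itemize}
Adding and subtracting $X_{n,t}^*$ then reproduces Eq.~\eqref{eq:pi}.
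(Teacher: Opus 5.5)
Your proposal is correct and follows essentially the route the paper intends (it defers to the proof of Theorem 1 in \cite*{LLMV2024}): specialize Theorem \ref{th:phara_results}(3) using $\frac{1}{1-\gamma_k^n}=1$ on logarithmic pieces, then replace $\sum_k X_{t,k}^R$ by $X_{n,t}^*-\sum_k X_{t,k}^A-\sum_k X_{t,k}^D$. Your readings of the $\nu^*$ versus $\nu_n^*$ and $\tau_{n,k+1}^{+}$ versus $\tau_{n,k+1}^{-}$ typos are the right ones, and one small wording point does not affect the argument: the $X^D$ atoms sit at the kink points $s_k^n$ of $V_n$, where $f_n$ is flat, not on the linear pieces, which instead produce jumps of $f_n$.
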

As $\sum_{k=0}^np_k^n+\sum_{k=0}^nq_k^n=1$, $p_k$ and $q_k$ are interpreted to be probabilities. The magnitude of the probabilities intuitively reflects the weights of the different terms in Eq. (\ref{eq:pi}) in the investor's portfolio.

Similar to \cite*{LLMV2024}, the first term $\boldsymbol{\pi}_t^{(1)}$ represents a constant percentage portfolio which is called the Merton relative risk aversion term. When the investor employs a logarithmic utility function or a power utility function, the optimal portfolio for Problem (\ref{eq:prob_intr}) derived from the martingale-duality method is exactly $\boldsymbol{\pi}_t^{(1)}$. The second term $\boldsymbol{\pi}_t^{(2)}$ is defined as the risk-seeking term, which is closely associated with the linear parts in the utility function. We observe that $\boldsymbol{\pi}_t^{(2)}$ leads to a significantly increase in the allocation to risky assets. The third term $\boldsymbol{\pi}_t^{(3)}$ represents the loss-aversion term, which induces a decrease of the risky investment to avoid loss. The fourth term $\boldsymbol{\pi}_t^{(4)}$ is called the first-order risk aversion term. It arises from the non-differentiable points of the utility function.

The explicit forms of the fitted wealth process and the fitted portfolio introduced in Section \ref{sec:fittingmethod} can be derived in the same way and thus we omit the proof.

In Section \ref{sec:fittingmethod}, as we use the hyperbolic expressions to link the points derived from the probability-wealth pairs, the utility function corresponding to $X_T^k$ is a PHARA utility function whose specific form can be obtained using the bijection established in Section \ref{sec:bijection}. We can obtain the explicit form of the fitted wealth process and fitted portfolio in analogy to Theorem \ref{th:phara_results}.

\section{Application and Numerical Illustration}
\subsection{A Numerical Simulation of the PHARA Approximation}\label{sec:ex_hyper}
We present specific example of the PHARA approximation in this subsection.
We refer to the utility function with the form $U(x)=\ln x -\frac{\lambda}{x},\lambda\geq 0$ as the log-hyperbolic utility function. This utility function incorporates a penalty term into the conventional logarithmic utility to mitigate the risk of extremely low wealth. 
We seek to approximate the optimal portfolio using PHARA utilities to obtain an explicit formulation. We assume $\lambda = 1$ and present its PHARA approximation as follows.

The utility function is given by $V(x)  = \ln x - \frac{1}{x}$. We employ logarithmic bases to formulate each component of $V_n$, and let $T(n) \in \mathbb{N}^*$. Furthermore, we define the length of each segment within the interval $[0, T(n)]$ as $d_n = \frac{1}{m}$, where $m \in \mathbb{N}^*$. Specifically, for the $n$-th division, we have $$V_n(x)=\sum_{i=1}^{\frac{T(n)}{d_n}}\left[a_i^n\ln (x-u_i^n)+b_i^n\right]\id_{\{x\in [(i-1)d_n,id_n)\}}+\left[\ln \left(x-u_{T(n)+1}^{n}\right)+b_{T(n)+1}^n\right]\id_{\{x\in[T(n),\infty)\}}.$$ By straightforward calculation, we have $u_i^n=\frac{i(i-1)d_n}{i(i-1)d_n+(2i-1)}$ and $a_i^n=1+\frac{1}{i(i-1)d_n^2+(2i-1)d_n}$ for $n\leq M(n)$ and $u_{T(n)+1}^n=\frac{T(n)}{T(n)+1}$. Then, the explicit expressions for the PHARA approximated wealth and portfolio can be derived according to Theorem \ref{th:phara_results}.

We let $m=1$, $r=0.05$, $T-t=1$, $\sigma=0.5$, $\theta=0.25$ and $x_0=2$. For various values of $T_d$ and $d$, the values of $\nu^*$ are shown in Table \ref{table:Td}. 
We examine the approximating effect on the position ratio, expressed as $\frac{\boldsymbol{\pi}_t}{X^*_t}$, as illustrated in Figure \ref{fig:fitting}. The observed convergence of the position ratio is not uniform, prompting our investigation into locally uniform convergence. As $X_t^*\to 0$ and $X_t^* \to \infty$, the limit of the original position ratio is 0.25 and 0.5, respectively. As $X_t^*$ nears 0, the predominant term in the utility function influencing the portfolio is $-\frac{1}{x}$, resulting in the original position ratio being close to 0.25. However, the PHARA approximation employs a logarithmic base for each segment, which reflects the characteristics of logarithmic utility and leads to the approximated position ratio close to 0.5 when $X_t^*$ nears 0. This discrepancy accounts for the failure of uniform convergence. Furthermore, as both $\frac{1}{X_{n,t}^*}$ and $\pi_{n,t}^*$ converge locally uniformly and exhibit uniform bounds with respect to $\xi_t \in I \subset (0, \infty)$, where $I$ is compact, the locally uniform convergence holds for the position ratio, consistent with the results in Table \ref{tab:convergence2}.
\begin{table}[ht]
\begin{varwidth}[b]{0.4\linewidth}\large
    \centering
    \begin{tabular}{c|c|c|c}
        & $T_d=1$ & $T_d=5$ & $T_d=10$ \\
        \hline
        $d=0.2$ & 0.811 & 0.735 & 0.735 \\
        \hline
        $d=0.1$ & 0.801 & 0.735 & 0.735 \\
        \hline
        $d=0.01$ & 0.793 & 0.735 & 0.735 \\
        \hline
    \end{tabular}
    \quad\\[50pt]
    \caption{The values of $\nu^*$.}
    \label{table:Td}
\end{varwidth}
\hfill
\begin{minipage}[b]{0.5\linewidth}
    \centering
    \includegraphics[width=\linewidth]{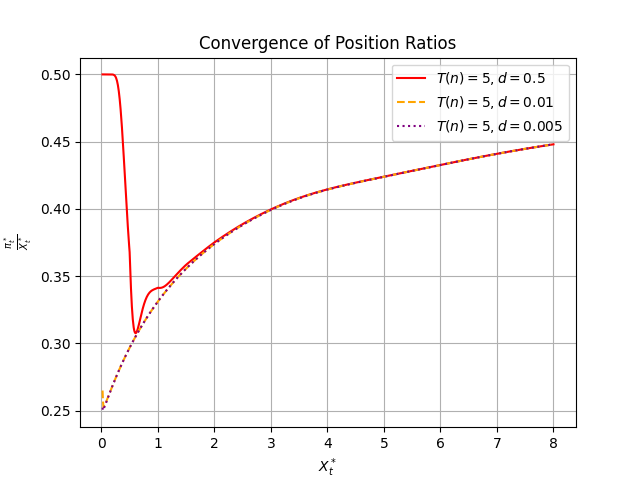}
    \captionof{figure}{The position ratio.}
    \label{fig:fitting}
\end{minipage}
\end{table}


\subsection{Comparison with the Optimal Problem under VaR Constraints}\label{sec:VaR}
In this section, we compare the preference-fitting method with the optimization problem under VaR constraints proposed by \cite*{BS2001}. We first consider the traditional framework and assume that the manager has an S-shaped utility introduced in \cite*{kahneman1979d}:
\begin{equation}\label{fuc_S-shape}
    \hat{U}(x) = \left\{
    \begin{aligned}
    & k_1(x-B_{1})^{\gamma_1} && x \geq B_1,\\
    & - k_2(B_{1}-x)^{\gamma_2} && x < B_1,
    \end{aligned}
    \right.
\end{equation}
where $B_1$ is the reference point, and $k_1>0$, $k_2>0$, $0<\gamma_1< 1$ and $0<\gamma_2 <1$ measure the degree of risk aversion and risk seeking. The manager compares her payoff with the reference level $B_1$. She becomes loss-averse (risk-averse) below (over) the reference level.
From the perspective of portfolio-induced utility, when an investor introduces a VaR constraint to enhance the stability of the portfolio, she is essentially altering her utility function. We conduct the following analysis to explain this viewpoint.

We use the same notation as in Section \ref{sec:model}. In this section, we formulate a constrained utility optimization problem with a deterministic benchmark:
\begin{equation}\label{prob-VaR}
	\begin{aligned}
		& \max_{\boldsymbol{\pi} \in \Pi} \E[ \hat{U}(X_T^{\boldsymbol{\pi}})]\quad
  \text{subject to } \p(X_T \geq L)\geq 1-\alpha,
	\end{aligned}
\end{equation}
where $\p(X_T\geq L)\geq 1-\alpha$ is the VaR constraint.

We consider the following auxiliary problem:
\begin{equation}\label{prob-auxiliary}
	\begin{aligned}
		& \max_{Z \in \mathcal{M}} \E[ \hat{U}(Z)]\quad 
  &
  \text{subject to } \E[\xi_TZ] \leq x_0,  \ \mbox{and} \ \p(Z \geq L)\geq 1-\alpha,
	\end{aligned}
\end{equation}
where $\mathcal{M}$ is the set of all the random variables $Z$  satisfying $Z+C e^{rT}\geq0$ a.s. for some $C\geq 0$.

According to \cite*{dong2020optimal}, for $\nu_1>0$, we define the modified utility function
\begin{equation}\label{eq:U_nu}
U^{\nu_1}(x):= 
    \begin{cases} 
        \hat{U}(x)+\nu_1\id_{\{x \geq L\}}\  & x \geq 0, \\
        -\infty\  & x < 0.
    \end{cases}
\end{equation}
In the existing literature, the VaR constraint is used widely as a constraint in the optimal portfolio selection to manage risk. \cite*{BS2001} first embed this concept into the portfolio selection problem. They exclusively addresses a special case for utility-based investors. In fact, in their framework, the problem reduces to that of an investor maximizing a transformed utility function Eq. (\ref{eq:U_nu}) without the VaR constraints, where the multiplier $\nu_1$ is determined by the complementary slackness in Eq. (\ref{prob-auxiliary}). In general, due to the non-convexity of the feasible portfolio set, the complementary slackness may not hold, leading to the non-existence of Lagrange multipliers and the investor may no longer be utility-based; see footnote 5 in \cite*{BS2001}. The rigorous characterization of the optimal portfolio in this setting has not been formally established. We then provide some observations regarding the solution procedure for the optimization problem involving VaR constraints and draw a comparison with the preference-fitting method as outlined below.

Under the preference-fitting framework, the investor can focus solely on the probability-wealth pair $(1-\alpha, y)$ corresponding to the VaR constraint, integrating it into the existing system of the upper and lower budget bounds. Because the fitting procedure applies to any utility-based investor, if the pairs satisfying all these VaR constraints can be found, the elicited utility exists. Conversely, if no pair exists, the investor is not utility-based and thus also falls outside the scope of \cite*{BS2001}. Consequently, by adopting the preference-fitting method, we need only verify the feasibility of the pair to readily ascertain the existence of the elicited utility under multiple VaR constraints.
\begin{figure}
    \centering
    \includegraphics[width=0.6\linewidth]{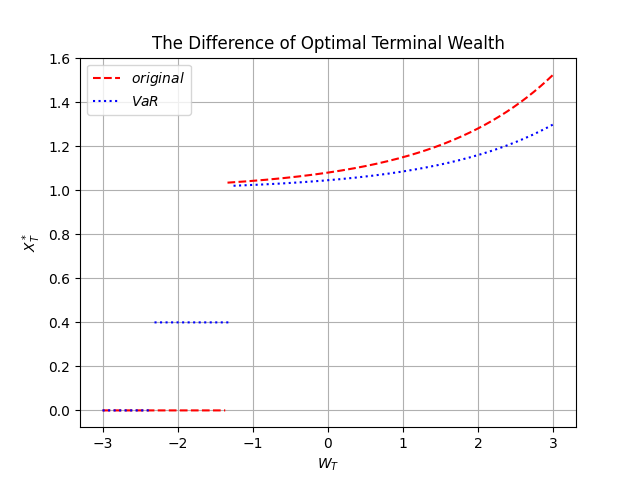}
    \caption{The difference of optimal terminal wealths. The values of the parameters are $m=1, r=0.05,T=1,\theta=0.25,\sigma=0.2,k_1=1,k_2=2.25,\gamma_1=\gamma_2=0.6,B_1=1,L=0.4,x_0=0.9,\alpha=0.01$.}
    \label{difference}
\end{figure}

Figure \ref{difference} illustrates a comparison of the terminal returns for the S-shaped utility with and without VaR constraints. When $W_T\in(-3,-2.319)$, both strategies result in a terminal wealth 0. In the range $(-2.319,-1.347)$, the VaR constraint yields an optimal wealth that is 0.4 higher than that of the original portfolio.  For $W_T \in (-1.267, 3)$, the wealth under the VaR constraint is slightly lower than the original wealth. Although the difference is increasing with respect to $W_T$ in the interval $(-1.267, \infty)$, the probability of the event $\{ W_T > 3 \}$ is only approximately $0.001$, making it negligible (extreme market conditions introduced in Section \ref{sec:fittingmethod}). In summary, when market conditions are unfavorable, the optimal portfolio under the VaR constraint provides the investor with stable returns and safeguards against bankruptcy, albeit at the expense of a slight loss in more favorable market conditions.

Based on the above analyses, the martingale-duality method and the bijection in Section \ref{sec:bijection}, it is easy to conclude that when we adjust the hyperbolic expression to take smaller values (denoted by $h_i^k$) on $[\xi^{p_i^k}, \xi^{p_{i+1}^k}]$ to construct the fitted terminal wealth $X_T^k$ in Section \ref{sec:fittingmethod}, the resulting effect corresponds to that of introducing VaR constraints in the traditional optimization problem.
This suggests that when a utility-based investor wants to reduce the risk she faces, she should choose a smaller anticipated return value when market conditions are favorable, so as to raise the budget upper bound at $p = 1 - \alpha$. The essence of this trade-off is reflected in the endogenous budget constraint $\E[\xi_T\X_T^*(\xi_T)]=x_0$ of the Black-Scholes model.

Overall, under the VaR constraints considered in the literature of \cite*{BS2001}, the advantage of the preference-fitting framework over the original optimization framework lies primarily in its intuitiveness. First, our method offers clearer practical significance: under the traditional framework, classical indicators such as the probability of bankruptcy are obtained from calculations using classical methods (e.g., the martingale-duality method), making it difficult for investors to predict them based solely on the form of the utility function; in contrast, the preference-fitting method provides investors with clear references (e.g., budget bounds) for assessing the reasonableness of their strategies. Second, our method possesses stronger operability: directly specifying the form of the utility function may lead to theoretical results for the optimal portfolio that contradict the investor's actual expectations. As illustrated in Figure \ref{difference}, the investor either receives wealth exceeding 1 or wealth below 0.4. This is a scenario we believe rarely occurs in practice and such a ``cliff-like'' strategy tends to have high fluctuations, as noted in Section 7 of \cite*{LLMV2024}. In contrast, the preference-fitting method allows the investor to easily achieve a more evenly distributed wealth profile, corresponding to a more stable strategy. Third, because the VaR constraints align closely with the definition of pairs, a utility-based investor can directly determine the existence of a portfolio by checking whether the pairs satisfy the VaR constraints, whereas the results of \cite*{BS2001} rely on the existence of Lagrange multipliers, which, when the number of VaR constraints is large, are determined by a system of coupled algebraic equations whose solvability has not been generally studied.

\section{Concluding Remarks}\label{sec:conclusion}
We define the concept of the elicited utility and set the criterion as expected utility maximization in the Black-Scholes model. 
If an investor's desired terminal wealth increases as market conditions improve and the investor seeks to maximize profits, she is qualified as utility-based and the elicited utility exists. Our elicitation method embeds the investor's market view that higher returns entail greater risk, which mathematically manifests as budget bounds.
The specific solution procedure is carried out through the bijection defined in Section \ref{sec:bijection}. Moreover, for operational convenience, the investor only needs to provide a finite number of intuitive probability-wealth pairs, from which we derive their fitted terminal wealth, fitted wealth process, and fitted portfolio. Through the PHARA approximation, we establish convergences in the sense of almost surely, $L^r$, and uniform convergence.
Finally, we compare the fitted portfolio with the optimal portfolio under VaR constraints, emphasizing the advantages of our method, including its intuitiveness, analytical tractability, implementation convenience, and the avoidance of discussing the existence of Lagrange multipliers. The elicited utility functions share the same ARA and RRA functions, which reflect the investor's preferences toward different risks. 
Through the elicited utility, we can compare and calibrate against the utility functions assumed in traditional models, and also use it to test the satisfaction level of different strategies for an investor.

We finally highlight the advantages of the probability-wealth pair from perspectives of both classic theory and modern technology. First, it is an intuitive concept that is easy for investors to understand. The pair is also introduced in the choice surveys in \cite*{kahneman1979d} to elicit a utility function. The bijection in Section \ref{sec:bijection} further indicates that the preferences contained in the pairs reflect characteristics of the utility function. In recent years, there has been increasing interest in inverse reinforcement learning (IRL), which involves inferring an agent's reward function based on its observed behavior; see, e.g., \cite*{SP2021} and \cite*{cheng2023elicitingriskaversioninverse}. Our idea of probability-wealth pairs is similar to that logic. Second, the pairs demonstrate efficient data utilization, requiring a limited number of samples for analysis. With the future help of AI agents, it may be feasible to use discrete data points of the preference-fitting method in robo-advising and FinTech. 



%
%
%

\section*{Acknowledgment}{Zongxia Liang acknowledges   financial support from the National Natural Science Foundation of China (Grant Nos. 12271290,12371477).
Yang Liu acknowledges financial support from the National Natural Science Foundation of China (Grant No. 12401624), The Chinese University of Hong Kong (Shenzhen) University Development Fund (Grant No. UDF01003336) and Shenzhen Science and Technology Program (Grant No. RCBS20231211090814028, 2025TC0010) and is partly supported by the Guangdong Provincial Key Laboratory of Mathematical Foundations for Artificial Intelligence (Grant No. 2023B1212010001). 
The authors are grateful to members of the group of Financial Mathematics and Risk Management at The Chinese University of Hong Kong (Shenzhen), and  the members of the group of Actuarial Sciences and Mathematical Finance at the Department of Mathematical Sciences, Tsinghua University for their feedback and useful conversations.}


\bibliographystyle{plainnat}
\bibliography{bio2}

\appendix
\section{
Proofs of Some Auxiliary Lemmas}\label{appendix_A}
\subsection{Proof of Lemma \ref{inverse}}
We first prove that, for any function $V\in \mathcal{A}$, there exists a unique function $f_r\in \mathcal{B}$ satisfying $V(x)=\int_1^{x} f_r(t) \d t$, $ \forall x \in [0,\infty)$, and $f_r(x) = V'_+(x)$ a.e..

In fact, as $V$ is concave,  the functions $V'_+(x)$ and $V'_-(x)$ are well defined for any $x\in(0,\infty)$, and $V'_+(x)=V'_-(x)$ a.e.. Moreover, $V'_+(x)$ is decreasing and right continuous, and thus $V'_+$ is integrable on any interval $[a,b] \subset (0,\infty)$. Based on  the definitions of $\mathcal{A}$ and $\mathcal{B}$, we easily conclude that $V'_+\in\mathcal{B}$.

We have $V(x) = \int_1^xV'_+(t) \d t$ holds for any $x> 0$ due to the absolute continuity of $V$. Letting $x\to 0^+$, we have $V(0) = \int_1^0 V'_+(t) \d t$ which is allowed to take the value of $-\infty$.

Then we prove the uniqueness. Suppose that there is another right continuous function $g_r \in \mathcal{B}$ satisfying $V(x)=\int_1^xg_r(t) \d t$ for any $x\geq 1$, and there exists $x_0 \geq 1$ such that $g_r(x_0)>V'_+(x_0)$ or $g_r(x_0)<V'_+(x_0)$. Then, for some $\delta>0$, we have $\int_x^{x+\delta}f_r(t) \d t > \int_x^{x+\delta}V'_+(t) \d t$ or $\int_x^{x+\delta}f_r(t) \d t < \int_x^{x+\delta}V'_+(t) \d t$ due to the fact that $V'_+$ and $g_r$ are right continuous, which leads to a contradiction. Thus, we obtain the uniqueness. Similarly, we conclude that the uniqueness holds on  $ (0,1]$. Then, letting $x \to 0^+$, we obtain the  uniqueness  for $x=0$.

We prove that $D_+$ is a bijection. On the one hand, using the result above, for any $V \in \mathcal{A}$, we have $D_+(V)\in \mathcal{B}$ and $T_0 \circ D_+(V)=V$. On the other hand, for any $f_r \in \mathcal{B}$, if we denote $V_r(x)= \int_1^x f_r(t) \d t$, then we easily get that $V_r(1)=0$ and $V_r$ is increasing. The continuity on $(0,x)$ for any $x>0$ is followed by the property of the definite integral. If $\int_0^1f_r(x)\d x<\infty$,  letting $x \to 0^+$, we get the continuity of $V$ at $0$. Otherwise, it happens that $V_r(0)=-\infty$, and the continuity follows as we have extended the definition of continuity in Remark \ref{class}. Next we prove the concavity of $V_r$, and thus $V_r \in \mathcal{A}$ holds. Indeed, for any $\lambda \in (0,1)$, and $x_1,x_2\in (0,\infty)$ with $x_1<x_2$, we prove$$
\lambda \int_1^{x_1}f_r(x)\d x +(1-\lambda) \int_1^{x_2}f_r(x)\d x \leq\int_1^{\lambda x_1+(1-\lambda)x_2}f_r(x)\d x.
$$
Rearranging the above inequalities and using the substitution method in integrals, it suffices to show  $$
\int_0^{x_2-x_1}f_r(x_1+t)\d t \leq \int_0^{x_2 -x_1}f_r(x_1+(1-\lambda )t)\d t,
$$
and the result follows from the fact that $f_r$ is decreasing. Then  $V_r$ is concave as it is continuous at $0$. It is easy to see that the inequality holds if $x_1=0$. Hence  $V_r\in\mathcal{A}$ and $$
\lim_{\Delta x\to0^+}\frac{\int_{x}^{x+\Delta x}f_r(t)\d t}{\Delta x}=f_r(x).
$$
Hence, $(V_r)'_+(x)=f_r(x)$ due to the right continuity of $f$. Therefore,  $D_+ \circ T_1(f_r) =f_r$. Thus, $D_+$ is a bijection and $D_+^{-1} = T_0$.

The definitions of $T_1$ and $T_2$ are similar to the generalized inverse in analysis and we can  directly verify that $T_1$ and $T_2$ are mappings from $\mathcal{B}$ to $\mathcal{C}$ and $\mathcal{C}$ to $\mathcal{B}$, respectively. To prove that they are bijections, we firstly prove  $T_2\circ T_1(f_r(x))=f_r(x)$ for any $f_r\in \mathcal{B}$ and $x \in[0,\infty)$, and it is similar to verify that $T_1\circ T_2(f_l)=f_l$. We need to prove that the following equality holds:
$$
\sup_{k\geq 0}\left\{\sup_{y \geq 0}\{f_r(y)\geq k\}>x \right\}=f_r(x).
$$
Assume that for some $x\geq0$, we have $\sup\limits_{k\geq 0}\left\{\sup\limits_{y\geq 0}\{f_r(y)\geq k\}>x\right\}>f_r(x)$. Then there exists an $\epsilon_1 >0$ satisfying $\sup\limits_{y\geq 0}\{ f_r(y)\geq f_r(x)+\epsilon_1\}>x$. Thus, there exists an $\epsilon_2 >0$ satisfying $f_r(x+\epsilon_2)\geq f_r(x)+\epsilon_1$. This contradicts the fact that $f$ is decreasing. Assume that for some $x\geq0$, we have $\sup\limits_{k\geq 0}\left\{\sup\limits_{y\geq 0}\{f_r(y)\geq k\}>x\right\}<f_r(x)$. There exists an $\epsilon_3 >0 $ satisfying $\sup\limits_{y\geq 0}\{f_r(y)\geq f_r(x)-\epsilon_3\}\leq x$. Then for any $\epsilon_4>0$, we have $f_r(x+\epsilon_4) < f_r(x)-\epsilon_3$, which contradicts the right continuity of $f_r$. Thus $T_2\circ T_1(f_r(x))=f_r(x)$.
\subsection{Proof of Lemma \ref{nu_covergence}}\label{l4.1}
In our proof, if $n$ is sufficiently large, $V_n$ will only take the form of a power base in a fixed closed interval that does not have 0 as an endpoint, and we will no longer make a special note of this.

(1) The function $V'_+$ is integrable on any interval $[X_1,X_2]\subset (0,\infty)$ and satisfies $V(x)=\int_1^x V'_+(t)\d t$ for any $x>0$. Without loss of generality, we let $x>1$. Then  $$|V(x)-V_n(x)|=\left|\int_1^{x}V_+'(t)-(V_n)'_+(t)\d t \right|\leq \int_1^{X_2}\left|V_+'(t)-(V_n)'_+(t)\right|\d t.$$ The construction of $V_n$ implies that the value of $(V_n)'_+$ on $[1,x]$ is bounded between the lower and upper Darboux sums of $V'$ with respect to the $n$-th partition on the shortest interval $[s_i^n,t_j^n]$ that covers $[1,x]$. Therefore, based on the Riemann integrability of $V'$, we conclude that $V_n$ converges uniformly to $V$ on the interval $ [1,x]$.

(2) We first prove that for any $n\in \N^*$ and $y>V'_+(n)$, we have $|f(y)-f_n(y)|\leq d_n$. There exists $1\leq i\leq M(n)$ such that $s_i^n \leq f(y)< t_i^n$. Therefore, based on the definition of $f$ and the proof of Theorem \ref{th:TL}, we have $D_+V(s_i^n)\geq y$ and $D_+V(t_i^n)< y$. Then we have $f_n(y)\in[s_i^n,t_i^n)$ due to  $(V_n)'_+(s_i^n)=V'_+(s_i^n)$ and $(V_n)'_+(t_i^n)=V'_+(t_i^n)$. Thus $|f(y)-f_n(y)|<t_i^n-s_i^n\leq d_n$. 

For any $\epsilon>0$ and $\nu>0$, using DCT, there exists $N_1 \in \N^*$ such that, for any $n>N_1$, $\E\left[(\xi_Tf(\nu\xi_T))^2\id_{\left\{f(\nu\xi_T)\in \left[t^n_{M(n)},\infty \right) \right\}}\right]<\epsilon.$ Moreover,
$$\E\left[\left[(\xi_Tf_n(\nu\xi_T)-\xi_Tf(\nu\xi_T))^2\right]\id_{\left\{f(\nu\xi_T)\in \left[0,t^n_{M(n)} \right) \right\}}\right] \leq d_n^2\cdot \E\left[\xi_T^2\id_{\left\{f(\nu\xi_T)\in \left[0,T(n) \right) \right\}}\right].
$$ 
In addition,  if $x\geq T(n)$, we have $f_n(\nu\xi_T)=T(n)+\frac{1}{\nu\xi_T}-\frac{1}{V'_+(T(n))}$, and if $f(\nu\xi_T)\in[T(n),\infty)$, we have $\nu\xi_T\leq V'_+(T(n))$. Therefore, using  Cauchy-Schwarz's inequality, we obtain
\begin{align*}\small
&\E\left[(\xi_Tf_n(\nu\xi_T))^2\id_{\{f(\nu\xi_T)\in[t_{M(n)}^n,\infty)\}}\right] 
\leq 3\E\left[\xi_T^2\left( f^2(\nu\xi_T)+\frac{2}{(\nu\xi_T)^2}\right)\id_{\{f(\nu\xi_T)\in[T(n),\infty)\}}\right].
\end{align*}
As such, there exists $N_2\in \N^*$ such that, for $n>\max{\{N_1,N_2\}}$, we have $$\E\left[(\xi_Tf_n(\nu\xi_T))^2\id_{\{f(\nu\xi_T)\in[t_{M(n)}^n,\infty)\}}\right] <\epsilon.$$
Hence, for any $n>\max\{N_1,N_2\}$,
\begin{align*}
\E\left[\left(\xi_Tf_n(\nu\xi_T)-\xi_Tf(\nu\xi_T)\right)^2\right]& = \E\left[\left|\left(\xi_Tf_n(\nu\xi_T)-\xi_Tf(\nu\xi_T)\right)^2\right|\id_{\{f(\nu\xi_T)\in(0,T(n)]\}}\right] \\ 
&\quad  +\E\left[\left|\left(\xi_Tf_n(\nu\xi_T)-\xi_Tf(\nu\xi_T)\right)^2\right|\id_{\{f(\nu\xi_T)\in(T(n),\infty)\}}\right] \\
&\leq d_n^2\cdot \E\left[\xi_T^2\id_{\left\{f(\nu\xi_T)\in \left(0,T(n) \right] \right\}}\right]+4\epsilon.
\end{align*}
Because $d_n \to 0$ as $n \to \infty$ and $\E\left[\xi_T^2\right]<\infty$, there exists $N_3\in \N^*$ such that,  for  any  $n>\max\{N_1,N_2,N_3\}$, we have $$\E\left[\left(\xi_Tf_n(\nu\xi_T)-\xi_Tf_n(\nu\xi_T)\right)^2\right]<5\epsilon,$$ by which  for any $\nu>0$, we have $\lim\limits_{n\to \infty}\E\left[(\xi_Tf_n(\nu\xi_T)-\xi_Tf(\nu\xi_T))^2\right] \to 0$. Thus $\left\{\E\left[(\xi_Tf_n(\nu\xi_T))^2\right]\right\}_{n\geq 1}$ has a uniform upper bound.

(3) First, we prove that $g$ and $g_n$ are strictly decreasing. Based on the fact that $f$ and $f_n$ are decreasing, we have that $g$ and $g_n$ are decreasing. We have  $x \mapsto f(\nu_1x)-f(\nu_2x)$ for any $\nu_1<\nu_2$ is nonnegative, left-continuous and not identically equal to $0$, which means that there exist an interval $[x_1,x_2]$ and $\epsilon>0$ such that $f(\nu_1x)-f(\nu_2x)>\epsilon$ holds on $[x_1,x_2]$. Therefore, we obtain $g(\nu_1)>g(\nu_2)$ for any $\nu_1<\nu_2$, and thus $g$  strictly decreases. Similarly, we have that $g_n$ strictly decreases for any $n\in \N^*$. 

Then we prove the continuity. Noting that when $\nu^*$ is fixed, for any $\nu$ satisfying $\nu^*/2<\nu<2\nu^*$, we have $\E\left[(\xi_Tf(\nu^* \xi_T/2))^2\right]>\E\left[(\xi_Tf(\nu \xi_T))^2\right]$ and hence the events where $\lim\limits_{\nu \to \nu^*}f(\nu\xi_T)=f(\nu^*\xi_T)$ fail to form a set of zero probability measure as the discontinuity points of $f$ constitute a at most countable set. Using the property of uniformly integrable random variables, we obtain $\lim\limits_{\nu \to \nu^*}\E[\xi_Tf(\nu \xi_T)]=\E[\xi_Tf(\nu^* \xi_T)]$, which indicates the continuity of $g$. The continuity of $g_n$ follows similarly. 

Last, we prove the existence, uniqueness and convergence of $\{\nu^*_n\}_{n\geq 1}$. The values of the limits $\lim\limits_{\nu \to \infty}g(\nu)=\lim\limits_{\nu \to \infty}g_n(\nu)=0$ and $\lim\limits_{\nu \to 0^+}g(\nu)=\lim\limits_{\nu \to 0^+}g_n(\nu)=\infty$ are derived directly from DCT and Lévy's monotone convergence theorem. Thus, the existence and uniqueness of $\nu^*$ and $\nu^*_n$ ($n \in \N^*$) hold using the fact that $g$ and $g_n$ are strictly decreasing and continuous. There exists $N\in \N^*$ such that $|f_n(\nu^* \xi_T) - f(\nu^* \xi_T)| < \epsilon$ for any $n > N$. Therefore, we have $f_n(\nu^* \xi_T) \to f(\nu^* \xi_T)$, $n \to \infty$, given $\nu^*\xi_T$ is fixed. Due to the fact that $$\lim_{n \to \infty}\xi_Tf_n(\nu^* \xi_T)=\xi_Tf(\nu^* \xi_T) \quad \text{a.s.},$$ and $\{\E\left[(\xi_Tf_n(\nu^* \xi_T))^2\right]\}_{n\geq 1}$ has a uniform upper bound according to (2), we obtain $$\lim_{n\to\infty}\E[\xi_Tf_n(\nu^*\xi_T)]=\E[\xi_Tf(\nu^*\xi_T)],$$ i.e., $\lim\limits_{n\to \infty}g_n(\nu) = g(\nu)$. For any $0<\epsilon<\nu^*$, where $\nu^*$ is the unique solution of $g(\nu) =x_0$, because $g$ stricly decreases, we have $g(\nu^*-\epsilon)>0$ and $g(\nu^*+\epsilon)<0$. Define $\delta = \min\{|g(\nu^*-\epsilon)|,|g(\nu^*+\epsilon)|\}$. Then there exists $N \in \mathbb{N}^*$ such that for any $n>N$, we have $g_n(\nu^*-\epsilon)>0$ and $g_n(\nu^*+\epsilon)<0$, which indicates that $\nu_n^*\in(\nu^*-\epsilon,\nu^*+\epsilon)$ holds for $n>N$. Thus, $\nu^*_n\to \nu^*$ follows.

\subsection{Proof of Lemma \ref{lemma:L^2-co}}
Because the sequence $\{\nu^*_n\}$ is bounded, we have $\p \{f(\nu_n^*\xi_T)\in(0,T(n))\}\to 1$. Imitating the proof of Lemma \ref{nu_covergence}, we obtain that there exists $N_1\in \N^*$ such that for any $n>N_1$, 
$\E\left[\left(\xi_Tf_n(\nu_n^*\xi_T)-\xi_Tf(\nu_n^*\xi_T)\right)^2\right]
\leq d_n^2\cdot \E\left[\xi_T^2\id_{\left\{f(\nu_n^*\xi_T)\in \left(0,T(n) \right) \right\}}\right]+4\epsilon\leq 5\epsilon,$ 
i.e.,   $\lim\limits_{n\to \infty}\E\left[\left(\xi_Tf_n(\nu_n^*\xi_T)-\xi_Tf(\nu_n^*\xi_T)\right)^2\right] = 0$.

Moreover, if we  define $\nu_{\text{inf}}=\inf\limits_{n\geq 1}\{\nu^*_n\}>0$, then, using DCT, there exists $0<x_1<x_2$ such that 
$\E\left[(\xi_Tf(\nu_{inf}\xi_T))^2\id_{\left\{f(\nu^*\xi_T)\in \left(0,x_1 \right)\cup (x_2,\infty) \right\}}\right]<\epsilon.$ 
Therefore, $\E\left[(\xi_Tf(\nu_{n}^*\xi_T))^2\id_{\left\{f(\nu^*\xi_T)\in \left(0,x_1 \right)\cup (x_2,\infty) \right\}}\right]<\epsilon$ holds for any $n\in\N^*$ due to the monotonicity of $f$. Using $\xi_Tf(\nu^*_n\xi_T) \to \xi_Tf(\nu^*\xi_T)$, a.s., we have that $\{\xi_Tf(\nu^*_n\xi_T)\}_{n\geq 1}$ has a uniform upper bound $\frac{V'_-(x_1)}{\nu_{\text{inf}}}T_L\left(\frac{V'_+(x_2)\nu_{\text{inf}}}{\nu^*}\right)$ for any $n\in \N^*$ and $\xi_T$ satisfying $f(\nu^*\xi_T)\in \left[x_1 ,x_2\right] $. 

Using DCT yields
$$\lim_{n\to\infty}\E\left[(\xi_Tf(\nu_{n}^*\xi_T)-\xi_Tf(\nu^*\xi_T))^2\id_{\left\{f(\nu^*\xi_T)\in \left[x_1,x_2 \right] \right\}}\right] = 0. $$
Thus, $\lim\limits_{n\to\infty}\E\left[(\xi_Tf_n(\nu_n^*\xi_T)-\xi_Tf(\nu^*\xi_T))^2\right]=0$.
\begin{remark}
If the domain of $V$ is $(-\infty,\infty)$ as stated in Remark \ref{remark:domain}, we also have 
$$\E\left[\left(\frac{-T_1(n)V'_+(-T_1(n))}{\nu}\right)^2\id_{\{f(\nu\xi_T)<-T_1(n)\}}\right]
\leq\E\left[(\xi_Tf(\nu \xi_T))^2\id_{\{f(\nu\xi_T)<-T_1(n)\}}\right]$$ 
using the fact that $\nu\xi_T \geq V'_+(-n)$ on  $\{\omega : f(\nu\xi_T)<-n\}$. Thus (2) still holds in this case. Additionally, when the first partition $[0,t_i^n)$ uses a logarithmic base as stated in Remark \ref{remark:domain}, we can verify that (2)  holds in a slightly different manner.
\end{remark}
\subsection{Proof of Lemma \ref{prop:1}}
In fact, the sequence $\left\{\E\left[\left(\xi_T\right)^{r_1}\left|\left(\ln \left(\xi_T\right)\right)\right|^{r_2}f_n^{r_3}(\nu^*_n\xi_T)\right]\right\}_{n \in \N^*}$ has a uniform upper bound. This holds based on  Hölder's inequality and one of the upper bound is 
\begin{equation}\label{eq:uni_bound}
\(\E\left[\left(\xi_Tf(\nu_{inf}\xi_T)\right)^2\right]\)^{\frac{r_3}{2}} \(\E\left[\xi_T^{\frac{4(r_1-r_3)}{2-r_3}}\right]\)^{\frac{2-r_3}{4}} \(\E\left[\left|\ln \xi_T\right|^{\frac{4r_2}{2-r_3}}\right]\)^{\frac{2-r_3}{4}}. 
\end{equation}
Using the relationship between $F_t$ and $F_0$, 
\begin{align*}\small
\int_{0}^{\infty}x^{r_1}|\ln x|^{r_2}f_n^{r_3}(\nu^*_n\xi_tx)\d F_{t}(x)
&\leq \sqrt{\frac{T}{T-t}}\int_{0}^{\infty}|\ln x|^{r_2}\left(f_n(\nu^*_nx)\right)^{r_3}C_1(\xi_t)x^{C_2(\xi_t)}\d F_{0}(x)\\ & \quad +\sqrt{\frac{T}{T-t}}\int_{0}^{\infty}\left(f_n(\nu^*_nx)\right)^{r_3}C_3(\xi_t)x^{C_2(\xi_t)}\d F_{0}(x),
\end{align*}
where 
$$
\begin{aligned}
C_1(\xi_t)&=\xi_t^{-r_1}\exp\left\{\frac{1}{2}\left[\frac{(T-t)\left(r+\frac{||\boldsymbol{\theta}||^2}{2}\right)^2t-(\ln(\xi_t(\omega)))^2+2\ln (\xi_t(\omega))(T-t)\left(r+\frac{||\boldsymbol{\theta}||^2}{2}\right)}{||\boldsymbol{\theta}||^2(T-t)}\right]\right\},\\
C_2(\xi_t)&=\frac{\ln(\xi_t(\omega))}{||\boldsymbol{\theta}||^2(T-t)}+r_1,\  \ C_3(\xi_t)=|\ln \xi_t|^{r_2}\cdot C_1(\xi_t)
\end{aligned}
$$
are real-valued functions evaluated at $\xi_t$. Then 
\begin{equation}\label{eq:upb}
\begin{aligned}
\left|\int_{0}^{\infty}x^{r_1}|\ln x|^{r_2}f_n^{r_3}(\nu^*_n\xi_tx)\d F_{t}(x)\right|&\leq C_1(\xi_t)\sqrt{\frac{T}{T-t}}\left|\E\left[\xi_T^{C_2(\xi_t)}|\ln (\xi_T)|^{r_2}(f_n(\nu_n^*\xi_T))^{r_1}\right]\right|\\&\quad + C_3(\xi_t)\sqrt{\frac{T}{T-t}}\left|\E\left[\xi_T^{C_2(\xi_t)}(f_n(\nu_n^*\xi_T))^{r_1}\right]\right|
<\infty.
\end{aligned}
\end{equation}
Therefore, according to Eqs. (\ref{eq:uni_bound}) and (\ref{eq:upb}), the sequence $\left\{\left|\int_{0}^{\infty}x^{r_1}|\ln x|^{r_2}f_n^{r_3}(\nu^*_n\xi x)\d F_{t}(x)\right|\right\}_{n\in \N^*}$ has a uniform upper bound for fixed $\xi>0$ and any $r_1\in \R,r_2 \in [0,\infty),r_3\in(1,2),t\in[0,T)$. 
\subsection{Proof of Lemma \ref{lemma:twice-dif}}
\noindent
(1) We first prove the differentiability with respect to $t$. Based on the proof martingale-duality method, we have 
\begin{equation}\label{eeq1}
\mathcal{X}_t^*(t,\xi)=\int_0^\infty f(\nu^*\xi x)K(x,t)\d x .
\end{equation}
By Lagrange's mean value theorem, for fixed $t>0$ and $\xi>0$, and any $h\in(-\frac{t}{2},\frac{t}{2})$, there exist three constants only depending on $t$, which are denoted by $\tilde{C}_1(t)$, $\tilde{C}_2(t)$, and $\tilde{C}_3(t)$ such that $$\left|\frac{K(x,t+h)-K(x,t)}{h}\right| \leq \left(\tilde{C}_1(t)(\ln x)^2+\tilde{C}_2(t)|\ln x|+\tilde{C}_3(t)\right)K(x).$$
Therefore, Lemma \ref{prop:1} ensures the premise of the DCT. Hence, we have 
$$
\frac{\partial}{\partial t}\int_0^\infty f(\nu^*\xi x)K(x,t)\d x=\int_0^\infty f(\nu^*\xi x) \frac{\partial}{\partial t}K(x,t)\d x,$$ 
i.e., $\mathcal{X}_t^*$ is differentiable with respect to $t$.  Similarly, we have the higher-order differentiability of $\mathcal{X}_t^*$ with respect to $t$.

Then we prove the differentiability with respect to $\xi$. Here, with a little abuse of notation, we denote $K(x,t)=K(x)$, because we only focus on the variable $x$. For fixed $\xi>0$ and $h\in \left(-\frac{\xi}{2},\frac{\xi}{2}\right)$, we have $$\frac1h \left(\int_0^\infty f(\nu^*(\xi+h)x)K(x)\d x-\int_0^\infty f(\nu^*\xi x)K(x)\d x\right)=I_1(h)+I_2(h),$$
where $$
I_1(h)=\frac{1}{h}\int_0^\infty f(x)\left(K\left(\frac{x}{\nu^*(\xi+h)}\right)-K\left(\frac{x}{\nu^*\xi}\right)\right)\frac{1}{\nu^*\xi}\d x$$ and $$I_2(h)=\frac{1}{h}\int_0^\infty f(x)K\left(\frac{x}{\nu^*(\xi+h)}\right)\left(\frac{1}{\nu^*(\xi+h)}-\frac{1}{\nu^*\xi}\right)\d x.$$
As $K'(0)=0$ and $K'$ is decreasing on the interval $[0,\delta)$, where $\delta>0$ is a constant, using Lagrange's mean value theorem implies
$$\Big|\frac{1}{\nu^*\xi h}f(x)\left(K\left(\frac{x}{\nu^*(\xi+h)}\right)-K\left(\frac{x}{\nu^*\xi}\right)\right)\Big|\leq \Big|\frac{xf(x)}{\nu^2\xi^2(\xi+h)}K'\Big(\frac{x}{\nu^*(\xi+h)}\Big)\Big|$$
for any $0\leq x <2\nu^*\xi\delta$ and $-\frac\xi2<h<\frac{h}{2}$. Then, for any $\epsilon>0$, Lemmas \ref{prop:1}-\ref{prop:2} indicate that there exists $\delta_1\in (0,\delta)$ such that $$\frac{1}{h}\int_0^{\delta_1}\Big| f(x)\left(K\left(\frac{x}{\nu^*(\xi+h)}\right)-K\left(\frac{x}{\nu^*\xi}\right)\right)\frac{1}{\nu^*\xi}\Big|\d x<\epsilon.$$
Similarly, for some $\delta_2>0$, we have
$$\frac{1}{h}\int_{\{x\in(0,\delta_1)\cup(\delta_2,\infty)\}}\Big| f(x)\left(K\left(\frac{x}{\nu^*(\xi+h)}\right)-K\left(\frac{x}{\nu^*\xi}\right)\right)\frac{1}{\nu^*\xi}\Big|\d x<\epsilon.$$
Moreover, by Lemma \ref{prop:1}, we have $\left\{\frac{1}{\nu^*\xi h}f(x)\left(K\left(\frac{x}{\nu^*(\xi+h)}\right)-K\left(\frac{x}{\nu^*\xi}\right)\right)\right\}_{h \in \left(-\frac{\xi}{2},\frac{\xi}{2}\right)}$ is uniform integrable on $[\delta_1,\delta_2]$. As such, 
$
\lim_{h\to 0}I_1(h)=-\frac{1}{\xi}\int_0^\infty xf(\nu^*\xi x)K'(x)\d x .$ Similarly, $\lim_{h\to0}I_2(h)=-\frac1\xi \int_0^\infty f(\nu^*\xi x)K(x)\d x.$  
Thus, $$\frac{\partial}{\partial \xi}\mathcal{X}^*(t,\xi)=-\frac{1}{\xi} \int_0^\infty f(\nu^*\xi x)(xK(x))'\d x.$$
Using the same way, we can prove the higher-order differentiability with respect to $\xi$.
\\
(2) As (1) holds, we can directly use Itô's formula and compare the  coefficient in front of the stochastic term to obtain  $$\d X_t^* = \dots \d t + \boldsymbol{\pi}_t^\top \boldsymbol{\sigma} \d \mathbf{W}_t=\dots \d t +\xi_t \frac{\partial \mathcal{X}_t^*(t,\xi_t)}{\partial \xi_t}\boldsymbol{\theta}\d \mathbf{W}_t,
$$
and thus the result follows.

\end{document}